\newtheorem{theorem}{Theorem}[section]
\newtheorem{lemma}[theorem]{Lemma}
\newtheorem{proposition}{Proposition}[section]
\theoremstyle{definition}
\newtheorem{definition}[theorem]{Definition}
\newtheorem{example}[theorem]{Example}
\newtheorem{corollary}[theorem]{Corollary}
\theoremstyle{remark}
\newtheorem{remark}[theorem]{Remark}
\numberwithin{equation}{section}
\begin{document}
	
	\title{MDS and $I$-Perfect Codes in Pomset block Metric}
	
	
	\author{Atul Kumar Shriwastva}
	\address{{Department of Mathematics, National Institute of Technology Warangal, Hanamkonda, Telangana 506004, India}}
	\email{shriwastvaatul@student.nitw.ac.in}
	\thanks{}
	
	\author{R. S. Selvaraj}
	\address{{Department of Mathematics, National Institute of Technology Warangal, Hanamkonda, Telangana 506004, India}}
	\email{rsselva@nitw.ac.in}
	\subjclass[2010]{Primary: 94B05, 06A06; Secondary:  15A03}	
	\keywords{MDS codes, Pomset codes, Lee weight, Poset block codes, Perfect codes,  Weight distribution}   
	\date{\today}
\begin{abstract}
	In this paper, we establish the Singleton bound for pomset block codes ($(Pm,\pi)$-codes) of length $N$ over the ring $\mathbb{Z}_m$. We give a necessary condition for a code to be MDS in the pomset (block) metric and prove that every MDS $(Pm,\pi)$-code is an MDS $(P,\pi)$-code. Then we proceed on to find $I$-perfect and $r$-perfect codes. Further, given an ideal with partial and full counts, we look into how MDS and $I$-perfect codes relate to one another. For chain pomset, we obtain the duality theorem for pomset block codes of length $N$ over  $\mathbb{Z}_m$; and, the weight distribution of MDS pomset block codes is then determined. 
\end{abstract}
\maketitle
\section{Introduction}	
  The main problem in coding theory is to find the largest minimum distance $d$ of any $k$-dimensional linear code of length $n$ over the finite field $\mathbb{F}_q$ for any integer $n > k \geq 1$. This problem for Hamming space was generalized by Niederreiter \cite{hnt}. Motivated by Niederreiter, Brualdi et al. developed poset space \cite{Bru} by using a partially ordered relation on the set $[n]$, where $[n] = \{1,2,\ldots,n\}$ represents the coordinate positions of $n$-tuples in the vector space $\mathbb{F}_q^n$. 
  Over the past two decades, the study of block codes in coding theory has sparked several significant developments in the communication field, such as experimental design, high-dimensional numerical integration, and cryptography.
 K. Feng et al. introduced block codes of length $N$ over $\mathbb{F}_q $ known as $\pi$-block codes \cite{fxh} in 2006, which is another generalization of Hamming codes.
 The author Alves et al. \cite{Ebc} extends it to $(P,\pi)$-block codes using a partial order relation on the block positions of $\mathbb{F}_q^N$. Block codes can be explored with the various metrics, allowing one to study the class of posets (such as hierarchical posets, NRT posets, crown posets, etc.).
 \par  
 A well-studied class of maximum distance separable codes was also investigated in the space $\mathbb{F}_q^N$ with metrics such as poset metric, $\pi$-metric, and poset block metric, and a weight distribution for this class was determined in \cite{hkmdspc}, \cite{fxh}, and \cite{bkdnsr}, respectively. MDS codes have applications in both combinatorics and finite geometry. In \cite{fxh}, and \cite{bkdnsr}, authors have shown that if all the blocks are not the same size, then the dual of an MDS block code need not be MDS. But in the case of chain poset we prove that dual of an MDS block code is MDS (see the Theorem \ref{Duality theorem}).
 \par With the aid of a partial order on a regular multiset, I. G. Sudha and R. S. Selvaraj proposed a new term in coding theory known as pomset codes \cite{gsrs} over the ring $\mathbb{Z} m$. The authors also established the MacWilliam type identities for linear codes \cite{gsr}, MDS, and $I$-perfect codes \cite{gr}.
  The concept of multiset theory \cite{BWD} was established by  W. D. Blizard (1989). For more information, one can see \cite{GKJ} and \cite{gsrs}.
 However, L. Panek citewcps recently (2020) proposed the weighted coordinate poset metric on $\mathbb{F}_q^n$, which is based on a weight function and partial order on $[n]$. This is a simpler version of the pomset metric that doesn't need the multiset structure. 
 \par We observed that there is a problem for researchers, what could be the $I$-balls in \cite{wcps} for an ideal $I$, whereas the $I$-balls and their properties for the pomset metric are fully described in \cite{gr} as the form of an ideal with a full count and a partial count. A fresh approach to its study are brought about by the fact that an ideal can be a full or partial count in the pomset space. It entices scholars to concentrate on it instead of \cite{BWD} and \cite{wcps}. In \cite{bkdnsr}, given an ideal $I$ in poset block space, an $[N, k] $ $ (P, \pi)$-code $\mathbb{C}$ is $I$-perfect if and only if there is a function 
 $	f : \bigoplus\limits_{j \notin I^*} \mathbb{F}_q^{k_j} \rightarrow \bigoplus\limits_{i \in I^*} \mathbb{F}_q^{k_i} $
 such that $ \mathbb{C} = \{ (v, f(v)  ) : v \in \bigoplus\limits_{j \notin I^*} \mathbb{F}_q^{k_j}\}$. If we consider an ideal $I$ with full counts, then it is true (see the Theorem \ref{Iperfect}), but this need not be true in the case of an ideal with a partial count (see the example \ref{IperfectExam}). We also prove that if a $(Pm, \pi)$-code $\mathbb{C}$ is $r$-perfect, then there will be unique ideal $I$ of the cardinality $r$.
As in \cite{fxh}, unless all the blocks have the same dimension, the dual of an MDS block code is not necessary to be MDS. However, if we assume that pomset is a chain, then the dual of an MDS block code is MDS (see the Theorem \ref{Duality theorem}).
  \par  In this paper, we introduce MDS pomset block codes of length $N$ over the ring $\mathbb{Z}_m$ and extend the concept of $I$-perfect and $r$-perfect pomset codes to the pomset block metric. Section ${2}$ establishes the basic properties of multiset, ideal, and pomset.
   Section $3$ begins by defining the $(Pm,\pi)$-metric (or pomset block metric) on the space $\mathbb{Z}_m^N$. Then, we establish the Singleton bound for pomset block codes of length $N$ over the ring $\mathbb{Z}_m$. We also compare the maximum distance separability of $(Pm,\pi)$-codes with different poset metric structures (see the Theorem \ref{MDScompare}). In particular, when all the blocks have the same length, a necessary condition for a code to be MDS in pomset (block) metric is found. Then we look for $I$-perfect codes and $r$-perfect codes (see the Theorems \ref{Iperfect} and \ref{r-perf}), as well as the link between MDS and $I$-perfect codes with partial and full counts. Section $5$ examines the $(Pm, \pi)$-codes with chain pomset and proves the duality theorem for such codes. Moreover, we determine the weight distribution of MDS pomset block codes.
  \section{Preliminaries}
 \sloppy{Given a finite set $B =\{ b_1, b_2, \ldots, b_n\}$, let $C_M : B \rightarrow \mathbb{N}$ be a counting function such that $C_M (b_i) = b_i$ $\forall$ $b_i \in B$. A Multiset (in short, mset) is a collection of elements wherein repetition is allowed. That is, a multiset $ M=\{c_1/b_1, c_2/b_2, \dots, c_n/b_n \}$ is drawn from the set $B$, where  $ c_i/i \in M $  represents an element $ b_i \in B $ appears $ c_i $ times in $M$. If $C_M(b_i)=h$ for some positive integer $h$, $M$ is called regular and $h$ is called its height. The cardinality of multiset $M$ is $|M| = \sum\limits_{j \in B} C_M (j) $ and the root set of $M$ is $ M^{*} \triangleq \{j \in B : C_M (j) > 0 \}$. 
  \par The mset space $[B]^r$ is the set of all regular multisets $M$  of height $r$ drawn from $B$. Let $M_1,M_2 \in [A]^r$. 
 If $C_{M_1} (b) \leq C_{M_2} (b) $ $\forall$ $b \in B$, then $M_1$ is called as submultiset (or submset) of $M_2$ $(M_1 \subseteq M_2)$, otherwise it is said to be proper ($M_1 \subset M_2$).
  Union:  $ M_1 \cup M_2 \triangleq \{ C_{M_1 \cup M_2} (b)/b : C_{M_1 \cup M_2} (b) =  \max\{C_{M_1} (b), C_{M_2} (b)\}  ~ \forall ~  b \in B\}$.
  Intersection:  $  M_1 \cap M_2 \triangleq \{ C_{M_1 \cap M_2} (b)/b : C_{M_1 \cap M_2} (b) =  \min \{C_{M_1} (b), C_{M_2} (b)\} ~ \forall ~ b \in B\}$. Mset sum: $  M_1 \oplus M_2 \triangleq \{ C_{ M_1 \oplus M_2} (b)/b : C_{ M_1 \oplus M_2} (b) = \min \{C_{M_1} (b)+C_{M_2} (b), r\} \text{ for all } b \in B\}$. The mset difference of $M_2$ from $M_1 $: 
  $  M_1 \ominus M_2 \triangleq \{ C_{ M_1 \ominus M_2} (b)/b : C_{ M_1 \ominus M_2} (b) = \max \{C_{M_1} (b)-C_{M_2} (b), 0\} \text{ for all } b \in B\}$. 
  Cartesian product:  $M_1 \times M_2 =  \{mn/(m/a,n/b) : m/a \in M_1 \text{ and }  n/b \in M_2\}$. The notation $t/(m/a, n/b) $ means that the pair $(a, b)$ is appearing $t$ times in $M_1 \times M_2$ where $1 \leq  t \leq mn$.  For an mset $ M \in [B]^r$, if every member $(m/a,n/b) \in R$ has count $C_{M} (a) \times C_{M} (b) $, then the submset $R$ of $M \times M $ is said to be mset relation on $M$. The compliment
 of $M$ is $ M^c =\{C_{M^c} (b)/b : C_{M^c} (b) = r - C_M(b) $ for all $b \in  B \}$.}
  \par A mset relation $R \subseteq M \times M$ with: (1) for every $p/a \in M$, $p/aR p/a$ (reflexive), (2) if $p/a R q/b$ and $q/b R p/a$ then $p=q, a=b$ (anti-symmetric), and  (3) if $p/a R q/b$ and $q/b R t/c$ then $p/a R t/c$ (transitive); is said to be partially ordered mset  relation (or pomset relation). Pomset, which is represented by the symbol $\mathbb{P}$, is the pair $(M, R)$. An element $p/a \in M$ is a maximal element of $\mathbb{P}$ if there is no any $q/c \in M$ such that $p/a R q/c$. An element $r/c \in M$ is a minimal element of $\mathbb{P}$ if there is no any $q/b \in M$ such that $q/b R r/c$. $\mathbb{P}$ is a chain iff every distinct pair of $M$ is comparable in $\mathbb{P}$. $\mathbb{P}$ is said to be an antichain if every distinct pair of elements from $M$ is not comparable in $\mathbb{P}$.
  \par A submset $I$ of $M$ is called an  ideal of $\mathbb{P}$ if $p/a \in I$ and $q/b ~R ~ p/a~(b \neq a)$ imply $q/b \in I$. An ideal generated by an element $p/a \in M$ is defined as $\langle p/a \rangle = \{p/a\} \cup \{q/b \in M : q/b R p/a \} $. An ideal generated by $I$  is defined as $\langle I \rangle = \bigcup\limits_{p/a \in I} \langle p/a \rangle $.  An ideal $I$ is said to be of full count if $C_{I}(i)=C_{M}(i)$ for every $i \in I^*$ otherwise it is ideal $I$ with a partial count. Throughout the paper, $\mathcal{I}(\mathbb{P})$ denote the set of all ideals in $\mathbb{P}$
  and $\mathcal{I}^t(\mathbb{P}) $ denote the set of all ideals of cardinality $t$ in $\mathbb{P}$.
  	Given pomset $ \mathbb{P}  = (M,R)$,  the dual pomset $\tilde{\mathbb{P} }= (M, \tilde{R})$ with the same underlying mset $M $ such that $p/a \tilde{R} q/b $ in $ \tilde{P}$ if and only if $q/b R p/a$ in  $ P $.
  	The	order ideals of $\tilde{\mathbb{P} }$ are $\mathcal{I}(\tilde{ \mathbb{P} }) = \{I^c : I \in \mathcal{I}(\mathbb{P}) \}$.
\section{Pomset Block Metric Space ($(Pm,\pi)$-space)}
In this Section, we start with the basic definition of $(Pm, \pi) $-spaces (for more details, one can see \cite{AS}). Then, we establish the Singleton bound for $(Pm,\pi)$-codes and derive MDS codes.
\par Considering the regular multiset $ M=\{\lfloor\frac{m}{2} \rfloor/1, \lfloor \frac{m}{2} \rfloor/2, \dots, \lfloor \frac{m}{2} \rfloor/n \}$ drawn from  $[n]$ with $R$ to be partial order, the pair  $\mathbb{P}=(M,R)$ is a pomset.
	Let  $\pi$ be a label map from $[n] $ to $ \mathbb{N}$  defined by $\pi(i) = k_i$ with $\sum_{i=1}^{n}\pi (i) = N$. Consider the space $ \mathbb{Z}_m^N $ over the ring of integers modulo $m$ such that  $ \mathbb{Z}_m^N $ is the direct sum of modules $ \mathbb{Z}_{m}^{k_1}, \mathbb{Z}_{m}^{k_2}, \ldots, \mathbb{Z}_{m}^{k_n} $. That is,  $\mathbb{Z}_{m}^{N} = \mathbb{Z}_{m}^{k_1}  \oplus \mathbb{Z}_{m}^{k_2} \oplus \ldots \oplus \mathbb{Z}_{m}^{k_n} $.  Every $N$-tuple  $v$ in $ \mathbb{Z}_{m}^{N} $ can be expressed uniquely as $	v = v_1 \oplus v_2 \oplus \ldots \oplus v_n $, where $v_i = (v_{i_1},v_{i_2},\ldots,v_{i_{k_i}}) \in \mathbb{Z}_{m}^{k_i}$.  
	For $v_i \in \mathbb{Z}_{m}^{k_i} $,  Lee support of $v_i  \in \mathbb{Z}_{m}^{k_i}$ is defined as $supp_{L}(v_i) =\{c_{i_j}/i_j : c_{i_j}=w_{L}(v_{i_j}),c_{i_j} \neq 0, \text{ for }  1\leq j \leq k_i\}$. Let  $ Max_c supp_{L}(v_i) \triangleq \max\{ c_{i_j} : c_{i_j}/{i_j} \in {supp_{L}(v_i)}  \} $ denotes  the maximum among the Lee weights of the components of $v_i$.
	Then, we define 
	the block support or $(Pm,\pi)$-support of  $ v  \in \mathbb{Z}_{m}^{N} $  as 
	\begin{align*}
		supp_{(Pm,\pi)}(v) = \{ c_i/i \in M : v_i \neq 0 ~\text{and} ~c_i = Max_c supp_{L}(v_i) \}
	\end{align*}
	the submultiset of $M$. 
	\par The $ (Pm,\pi) $-weight of is defined $v$ as $ w_{(Pm,\pi)}(v) = |\langle supp_{(Pm,\pi)}(v) \rangle| $.  $ (Pm,\pi) $-distance between  $ u, v \in \mathbb{Z}_{m}^N $ is given by $d_{(Pm,\pi)}(u,v) = w_{(Pm,\pi)}(u - v)$.
	$d_{(Pm,\pi)}(.,.)$ defines a metric over $ \mathbb{Z}_{m}^N $ called as pomset block metric (or $(Pm,\pi)$-metric). 
	The pair $ (\mathbb{Z}_{m}^N,~d_{(Pm,\pi)} ) $ is said to be a pomset block space. 
	\par 
	Let $\mathbb{C} \subseteq \mathbb{Z}_{m}^N $ be a pomset block code (or say $(Pm,\pi)$-code) of length $N$. The minimum distance of the $(Pm,\pi)$-code $\mathbb{C}$ is given by  
	$d_{(Pm,\pi)}\mathbb{(C)} = \min \{  d_{(Pm, \pi)} {(c_1, c_2)}: c_1, c_2 \in \mathbb{C} \}$.  If $\mathbb{C}$ is linear, then the minimum distance of  $\mathbb{C}$ becomes $\centering d_{(Pm,\pi)}\mathbb{(C)} = \min \{ w_{(Pm,\pi)}(c) : 0 \neq c \in \mathbb{C} \} $.
	As  $ w_{(Pm,\pi)}(v) \leq n \lfloor \frac{m}{2} \rfloor $ for any $v \in \mathbb{Z}_{m}^N $, the  minimum distance of  any $(Pm,\pi)$-code $\mathbb{C} $ is bounded above by $ n \lfloor \frac{m}{2} \rfloor $.
	\begin{remark}\label{r3}
		If $k_i = 1 $ $\forall$ $i \in [n]$ then the  pomset block space $ (\mathbb{Z}_{m}^N,~d_{(Pm,~\pi)}) $  becomes the classical pomset space $ (\mathbb{Z}_{m}^n,~d_{Pm})$ \cite{gsrs}. Pomset block metric $d_{(Pm,\pi)}$ extends the classical pomset metric, which accommodates	the Lee metric introduced by I. G. Sudha and  R. S. Selvaraj, in particular. It generalizes the poset block metric introduced by M. M. S Alves et al., in general, over $\mathbb{Z}_m$.	
	\end{remark}
\par In \cite{gr}, I. G. Sudha and R. S. Selvaraj established the Singleton bound for any pomset code $\mathbb{C}$ of length $n$ over $\mathbb{Z}_m$ and extend the concept of MDS codes (in terms of full and partial counts) to the pomset space where each block have length $k_i =1$ for every $i \in [n]$. In this paper, we extend the concept of MDS pomset codes and $I$-perfect pomset codes ($k_i =1$ for every $i \in [n]$) to the pomset block codes with $k_i = \pi(i)$ for every $i \in [n]$ such that $ \sum\limits_{i=1}^{n} \pi (i) = N  $. 
Let  $\mathcal{I}_{*r}^{t}$ be the collection of all ideals in $\mathcal{I}(\mathbb{P})$ such that cardinality of $I$ is $t$ and cardinality of $I^*$ is $r$. 
\begin{equation*}
	\mathcal{I}_{*r}^{t} = \{ I \in \mathcal{I}(\mathbb{P}) : \ |I|=t \text{ and} \ |I^*|=r \text{ for some } r \leq n \}
\end{equation*}
\begin{theorem}[Singleton bound for pomset block code] \label{singlB}
	Let $\mathbb{C}$  be a pomset block code of length $N = k_1 + k_2 + \ldots + k_n$, over $\mathbb{Z}_m$ with minimum distance $d_{(Pm,\pi)}(\mathbb{C})$.  Then $
	\max\limits_{J \in  \mathcal{I}_{*r}^{t}}  \sum\limits_{i \in J^*} k_{i} \leq N - \lceil log_{m}|\mathbb{C}| \rceil$ where $r = \big\lfloor \frac{d_{(Pm,\pi)} \mathbb{(C)}-1} { \big\lfloor \frac{ m }{2} \big\rfloor } \big\rfloor $ and $t \leq d_{(Pm,\pi)}\mathbb{C}-1$.
\end{theorem}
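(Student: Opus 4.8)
The plan is to run the standard puncturing argument behind Singleton‑type bounds. Write $d = d_{(Pm,\pi)}(\mathbb{C})$; since $r = \big\lfloor \frac{d-1}{\lfloor m/2\rfloor}\big\rfloor$ we have $r\lfloor m/2\rfloor \le d-1$. I would start by fixing an ideal $J \in \mathcal{I}(\mathbb{P})$ whose root set has cardinality $|J^*| = r$ and which attains $\max\sum_{i \in J^*} k_i$ among all such ideals; this is what the left side of the inequality refers to, once one observes that any ideal $J$ with $|J^*| = r$ automatically satisfies $|J| = \sum_{i\in J^*}C_J(i) \le r\lfloor m/2\rfloor \le d-1$, so it lies in $\mathcal{I}_{*r}^{t}$ for the value $t = |J| \le d-1$. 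Then I would consider the projection $\rho\colon \mathbb{Z}_m^N = \bigoplus_{i=1}^n \mathbb{Z}_m^{k_i} \longrightarrow \bigoplus_{i \notin J^*}\mathbb{Z}_m^{k_i}$ that deletes the blocks indexed by $J^*$.

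The heart of the proof is to show $\rho$ is injective on $\mathbb{C}$. Take $c, c' \in \mathbb{C}$ with $\rho(c) = \rho(c')$ and set $v = c - c'$; then $v_i = 0$ for every $i \notin J^*$, so the root set of $supp_{(Pm,\pi)}(v)$ lies in $J^*$ and each of its counts is at most $\lfloor m/2\rfloor = C_M(i)$. Because $J$ is an ideal of $\mathbb{P}$, its root set $J^*$ is closed downward in $\mathbb{P}$, hence the full‑count submultiset $\bar{J} = \{\lfloor m/2\rfloor/i : i \in J^*\}$ of $M$ is itself an ideal of $\mathbb{P}$ that contains $supp_{(Pm,\pi)}(v)$; therefore $\langle supp_{(Pm,\pi)}(v)\rangle \subseteq \bar{J}$ and
\[
w_{(Pm,\pi)}(v) = \big|\langle supp_{(Pm,\pi)}(v)\rangle\big| \;\le\; |\bar{J}| \;=\; r\Big\lfloor \frac{m}{2}\Big\rfloor \;\le\; d-1 .
\]
If $v \neq 0$ this contradicts $d_{(Pm,\pi)}(\mathbb{C}) = d$, so $v = 0$ and $c = c'$.

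Finally, injectivity of $\rho$ gives $|\mathbb{C}| = |\rho(\mathbb{C})| \le \big|\bigoplus_{i \notin J^*}\mathbb{Z}_m^{k_i}\big| = m^{\,N - \sum_{i \in J^*} k_i}$, so $\log_m|\mathbb{C}| \le N - \sum_{i \in J^*} k_i$; since the right‑hand side is an integer, $\lceil \log_m|\mathbb{C}|\rceil \le N - \sum_{i \in J^*} k_i$, which rearranges to the claimed bound. I expect the only delicate point to be the passage from $supp_{(Pm,\pi)}(v)$ to the ideal it generates in the injectivity step: one must know that the root set of a pomset ideal is downward closed (so that $\bar J$ really is an ideal of $\mathbb{P}$) and that the ideal generated by a submultiset of $M$ again lies inside $M$. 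These are elementary facts about pomsets; the remainder is just counting.
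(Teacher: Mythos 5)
Your proposal is correct and follows essentially the same route as the paper: both arguments reduce to showing that two codewords agreeing on all blocks outside $J^*$ would have distance at most $r\lfloor m/2\rfloor \le d_{(Pm,\pi)}(\mathbb{C})-1$, and then count the projections onto the remaining blocks. Your phrasing via an explicit puncturing map, and your bounding of the weight by the full-count ideal $\bar{J}$ rather than by $|J|$ itself, is in fact slightly more careful than the paper's wording at the partial-count step, but it is the same proof.
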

\begin{proof}
	\sloppy{There exist two distinct codewords $c_1, c_2 \in \mathbb{Z}_m^N $ such that $d_{(Pm,\pi)} (\mathbb{C}) = d_{(Pm, \pi)} {(c_1,c_2)} = |\langle supp_{(Pm,\pi)}(c_1-c_2) \rangle| $. Let $I =\langle supp_{(Pm,\pi)}(c_1-c_2) \rangle $. Then, $d_{(Pm, \pi)}(\mathbb{C}) - 1 <  { \big\lfloor \frac{ m }{2} \big\rfloor }|I^*| $ so that  $\big\lfloor \frac{d_{(Pm,\pi)} \mathbb{(C)}-1}{ \big\lfloor \frac{ m }{2} \big\rfloor }\big\rfloor < |I^*|$. Let $r = \big\lfloor \frac{d_{(Pm,\pi)} \mathbb{(C)}-1}{ \big\lfloor \frac{ m }{2} \big\rfloor }\big\rfloor$ and $t = r{ \big\lfloor \frac{ m }{2} \big\rfloor }$ then $t \leq d_{(Pm, \pi)}(\mathbb{C}) -1$. Since $I$ is an ideal of cardinality $d_{(Pm, \pi)}(\mathbb{C})$, (Ref. \cite{gsrs}, by Proposition $3$), there always exist an ideal $J$ of cardinality $t$ such that $J \subseteq I$ and $|J^*|=r$.  For any ideal $J \in \mathcal{I}_{*r}^{t}$, every two distinct codewords of $\mathbb{C}$ must be different  outside of $J^*$ for some $i$-labels in $[n]$. Otherwise $ d_{(Pm, \pi)} {(c_1,c_2)} \leq |J| \leq d_{(Pm, \pi)}(\mathbb{C}) - 1$, a contradiction. Then, $|\mathbb{C}| \leq | \mathbb{Z}_m^{N - \sum\limits_{i \in J^*} k_{i} }|$. Hence, $ { \sum\limits_{i \in J^*} k_{i} } \leq  N - \lceil \log_m |\mathbb{C}| \rceil$. As this is true for any ideal $J \in \mathcal{I}_{*r}^{t}$, therefore, 
	$	\max\limits_{J \in  \mathcal{I}_{*r}^{t}} \sum\limits_{i \in J^*} k_{i} \leq N - \lceil log_{m}|\mathbb{C}| \rceil $
	where $r = \big\lfloor \frac{d_{(Pm,\pi)} \mathbb{(C)}-1}{\big\lfloor\frac{ m }{2}\big \rfloor} \big\rfloor $ and $t \leq d_{(Pm,\pi)}-1 $.}
\end{proof}
\begin{corollary} 
	Let $\mathbb{C}$  be a pomset block code of length $N$ over $\mathbb{Z}_m$ with minimum distance $d_{(Pm,\pi)}(\mathbb{C})$. Then following hold: 
	\begin{enumerate}[label=(\roman*)]	 
	\item If $\pi(i) = s$ for all $i \in [n]$ then $\big\lfloor \frac{d_{(Pm,\pi)} \mathbb{(C)}-1}{\big\lfloor\frac{ m }{2}\big \rfloor} \big\rfloor \leq n- {\frac{\lceil log_{m}|\mathbb{C}|\rceil }{s}}$. 
	\item  If $k_1\geq k_2 \geq \ldots \geq k_n$ then $  k_{n} \big\lfloor \frac{d_{(Pm,\pi)} \mathbb{(C)}-1}{\big\lfloor\frac{ m }{2}\big \rfloor} \big\rfloor  \leq N - \lceil log_{m}|\mathbb{C}| \rceil$.
	\item  If $\pi(i) = 1$ for all $i \in [n]$ then $d_{(Pm,\pi)} (\mathbb{C}) = d_{Pm}(\mathbb{C})$ and Singleton bound of pomset block code becomes Singleton bound for pomset code i.e. $	\big\lfloor \frac{d_{Pm} \mathbb{(C)}-1}{\big\lfloor\frac{ m }{2}\big \rfloor} \big\rfloor  \leq n - \lceil log_{m}|\mathbb{C}| \rceil$.
	\end{enumerate}
\end{corollary}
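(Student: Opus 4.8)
The plan is to obtain each of the three items by specializing the Singleton bound of Theorem~\ref{singlB}. Write $r = \big\lfloor \frac{d_{(Pm,\pi)}(\mathbb{C})-1}{\lfloor m/2 \rfloor} \big\rfloor$ throughout. From the proof of Theorem~\ref{singlB} we already know that the collection $\mathcal{I}_{*r}^{t}$ is non-empty for the admissible value $t = r\lfloor m/2 \rfloor \le d_{(Pm,\pi)}(\mathbb{C})-1$, and that every ideal $J$ appearing in the maximum has $|J^*| = r$; so in each case it suffices to evaluate or bound $\sum_{i \in J^*} k_i$ for such a $J$ and feed the result into the inequality $\max_{J \in \mathcal{I}_{*r}^{t}} \sum_{i \in J^*} k_i \le N - \lceil \log_m |\mathbb{C}| \rceil$.

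For (i), if $\pi(i) = s$ for every $i \in [n]$ then any $J \in \mathcal{I}_{*r}^{t}$ satisfies $\sum_{i \in J^*} k_i = s\,|J^*| = sr$, so the left-hand side of the Singleton bound is exactly $sr$; since $N = ns$, dividing $sr \le ns - \lceil \log_m |\mathbb{C}| \rceil$ by $s$ gives the claim. For (ii), order the blocks so that $k_1 \ge k_2 \ge \cdots \ge k_n$; then for any $J \in \mathcal{I}_{*r}^{t}$ we have $\sum_{i \in J^*} k_i \ge k_n\,|J^*| = k_n r$, because the sum has $r$ terms, each at least $k_n$. Hence $\max_{J} \sum_{i \in J^*} k_i \ge k_n r$, and Theorem~\ref{singlB} yields $k_n \big\lfloor \frac{d_{(Pm,\pi)}(\mathbb{C})-1}{\lfloor m/2 \rfloor} \big\rfloor \le N - \lceil \log_m |\mathbb{C}| \rceil$.

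For (iii), taking $\pi(i) = 1$ for all $i$ forces $N = n$ and, by Remark~\ref{r3}, collapses the pomset block metric to the classical pomset metric, so $d_{(Pm,\pi)}(\mathbb{C}) = d_{Pm}(\mathbb{C})$; each ideal then contributes $\sum_{i \in J^*} k_i = |J^*| = r$, and the Singleton bound becomes $r \le n - \lceil \log_m |\mathbb{C}| \rceil$, which is precisely the Singleton bound for pomset codes. The one point needing (minor) attention is to be sure the index set of the maximum is non-empty and consists of ideals whose root set has size exactly $r$; both facts are already established inside the proof of Theorem~\ref{singlB}, so no additional work is required and each item reduces to a one-line computation.
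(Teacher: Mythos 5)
Your proposal is correct and is precisely the intended argument: the paper states this corollary without proof as an immediate specialization of Theorem~\ref{singlB}, and your three computations (evaluating $\sum_{i\in J^*}k_i$ as $sr$, bounding it below by $k_n r$, and reducing to $|J^*|=r$ when all $k_i=1$) are exactly how each item follows. The only point worth noting is that you correctly flag the non-emptiness of $\mathcal{I}_{*r}^{t}$ and the fact that $|J^*|=r$ for the relevant ideals, both of which are indeed supplied inside the proof of Theorem~\ref{singlB}.
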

\begin{definition}
	A pomset block code $\mathbb{C} \subseteq \mathbb{Z}_m^N $ of length $N$ over $\mathbb{Z}_m$ is said to be a maximum distance separable pomset block code if it attains its Singleton bound, that is, there exists an  ideal $J \in  \mathcal{I}_{*r}^{t}$ for a pomset block code $\mathbb{C}$  such that 
	$	\sum\limits_{i \in J^*} k_{i} = N - \lceil log_{m}|\mathbb{C}| \rceil$.
\end{definition}
For the case $k_i = s $  $\forall$ $i \in [n]$, we give a necessary condition for a code to be MDS with respect to pomset (block) metric in the following successive results. 
\begin{theorem}
	Let $k_i = s $  $\forall$ $i \in [n]$ and $\mathbb{C} $  be a $(Pm,\pi)$-code of length $N$  over  $\mathbb{Z}_m$ with minimum distance $d_{(Pm,\pi)}(\mathbb{C})$.  If $\mathbb{C}$ is MDS, then $
	{\big\lfloor\frac{ m }{2}\big \rfloor}(n - \frac{\lceil log_{m}|\mathbb{C}| \rceil}{s})    + 1  \leq  d_{(Pm,\pi)} \mathbb{(C)} \leq {\big\lfloor\frac{ m }{2}\big \rfloor}(n - \frac{\lceil log_{m}|\mathbb{C}| \rceil}{s}+1) $. 
\end{theorem}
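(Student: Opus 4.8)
The plan is to read the result straight off the Singleton bound (Theorem~\ref{singlB}) once the equal-block-length hypothesis is put to use. I would first fix notation: set $d = d_{(Pm,\pi)}(\mathbb{C})$, $\ell = \lceil \log_m|\mathbb{C}|\rceil$, and, exactly as in the proof of Theorem~\ref{singlB}, $r = \big\lfloor \frac{d-1}{\lfloor m/2\rfloor}\big\rfloor$ and $t = r\lfloor m/2\rfloor\le d-1$, so that $\mathcal{I}_{*r}^{t}$ is precisely the ideal-class occurring in the Singleton bound.

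The key observation is that, because $k_i = s$ for every $i\in[n]$, for any ideal $J\in\mathcal{I}_{*r}^{t}$ one has $\sum_{i\in J^*}k_i = s\,|J^*| = sr$; hence the maximum $\max_{J\in\mathcal{I}_{*r}^{t}}\sum_{i\in J^*}k_i$ does not depend on $J$ and equals $sr$. So Theorem~\ref{singlB} reads simply $sr\le N-\ell$. Now invoking that $\mathbb{C}$ is MDS: by definition there is $J\in\mathcal{I}_{*r}^{t}$ with $\sum_{i\in J^*}k_i = N-\ell$, i.e. $sr = N-\ell = ns-\ell$, which forces $r = n-\ell/s$ exactly (in particular $s\mid\ell$, so $n-\ell/s$ is a genuine integer).

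It then remains to unwind the definition of $r$. From $r = \big\lfloor \frac{d-1}{\lfloor m/2\rfloor}\big\rfloor$ we get $r\lfloor m/2\rfloor \le d-1 < (r+1)\lfloor m/2\rfloor$, and since $d$, $r$, and $\lfloor m/2\rfloor$ are integers this sharpens to $r\lfloor m/2\rfloor + 1 \le d \le (r+1)\lfloor m/2\rfloor$. Substituting $r = n-\ell/s$ gives
\[
\big\lfloor\tfrac{m}{2}\big\rfloor\!\left(n-\tfrac{\ell}{s}\right) + 1 \;\le\; d_{(Pm,\pi)}(\mathbb{C}) \;\le\; \big\lfloor\tfrac{m}{2}\big\rfloor\!\left(n-\tfrac{\ell}{s}+1\right),
\]
which is exactly the assertion.

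I do not expect a serious obstacle; the one point that must be got right is recognizing that equal block lengths collapse the ``$\max$'' in the Singleton bound to the single value $sr$, so that the MDS hypothesis pins $r$ down to its extremal value $n-\ell/s$ rather than merely bounding it from above. Beyond that, one only needs to be careful that the passage from the strict inequality $d-1 < (r+1)\lfloor m/2\rfloor$ to $d \le (r+1)\lfloor m/2\rfloor$ uses integrality, and that the ideal $J$ used above genuinely exists — which is precisely what it means for $\mathbb{C}$ to be MDS in the $(Pm,\pi)$-metric.
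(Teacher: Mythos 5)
Your proposal is correct and follows essentially the same route as the paper: the equal-block-length hypothesis collapses the Singleton bound to $sr\le N-\ell$, MDS forces $r=n-\ell/s$, and unwinding the floor in the definition of $r$ gives the two-sided bound on $d_{(Pm,\pi)}(\mathbb{C})$. The only difference is that you spell out why the maximum over $\mathcal{I}_{*r}^{t}$ equals $sr$, which the paper takes for granted via its Corollary (i).
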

\begin{proof}
	Since $\mathbb{C}$ is  MDS and $\pi(i) = s$ $\forall$ $i$, then $\big\lfloor \frac{d_{(Pm,\pi)} \mathbb{(C)}-1}{\big\lfloor\frac{ m }{2}\big \rfloor} \big\rfloor = n- {\frac{\lceil log_{m}|\mathbb{C}|\rceil }{s}}$. Thus, we have 	$	n - \frac{\lceil log_{m}|\mathbb{C}| \rceil}{s} \leq  \frac{d_{(Pm,\pi)} \mathbb{(C)}-1}{\big\lfloor\frac{ m }{2}\big \rfloor} < n - \frac{\lceil log_{m}|\mathbb{C}| \rceil}{s} +1 $. Hence, $
	{\big\lfloor\frac{ m }{2}\big \rfloor}(n - \frac{\lceil log_{m}|\mathbb{C}| \rceil}{s})  + 1 \leq  d_{(Pm,\pi)} \mathbb{(C)} \leq {\big\lfloor\frac{ m }{2}\big \rfloor}(n - \frac{\lceil log_{m}|\mathbb{C}| \rceil}{s}+1)$.		
\end{proof}
Thus, if $k_i = s $ $\forall$ $i$ and  $\mathbb{C} $ is a $(Pm,\pi)$-code of length $N$ over $\mathbb{Z}_m$ with minimum distance $d_{(Pm,\pi)}(\mathbb{C})$, then
$\mathbb{C}$ cannot be an  MDS  whenever $	1 \leq d_{(Pm,\pi)} \mathbb{(C)} \leq 	{\big\lfloor\frac{ m }{2}\big \rfloor}(n - \frac{\lceil log_{m}|\mathbb{C}| \rceil}{s})  $ or $ d_{(Pm,\pi)} (\mathbb{C})  > {\big\lfloor\frac{ m }{2}\big \rfloor}(n - \frac{\lceil log_{m}|\mathbb{C}| \rceil}{s}+1) $.
\begin{theorem}
	If $\mathbb{C} $ is an MDS pomset code of length $n$ over $\mathbb{Z}_m$ with minimum distance $d_{Pm}(\mathbb{C})$, then $
	{\big\lfloor\frac{ m }{2}\big \rfloor}(n - \lceil log_{m}|\mathbb{C}| \rceil )  + 1  \leq  d_{Pm} \mathbb{(C)} \leq {\big\lfloor\frac{ m }{2}\big \rfloor}(n - \lceil log_{m}|\mathbb{C}| \rceil+1) $. 
\end{theorem}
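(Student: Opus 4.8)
The plan is to obtain this statement as the $s=1$ instance of the preceding theorem. By Remark~\ref{r3}, when $\pi(i)=1$ for every $i\in[n]$ the pomset block space $(\mathbb{Z}_m^N, d_{(Pm,\pi)})$ reduces to the classical pomset space $(\mathbb{Z}_m^n, d_{Pm})$, so that $N=n$ and $d_{(Pm,\pi)}(\mathbb{C}) = d_{Pm}(\mathbb{C})$. The first step is therefore to record the Singleton bound in the specialized form given in item (iii) of the Corollary above, namely $\big\lfloor \tfrac{d_{Pm}(\mathbb{C})-1}{\lfloor m/2\rfloor}\big\rfloor \le n - \lceil \log_m|\mathbb{C}|\rceil$.

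Next I would use the MDS hypothesis. Because $\mathbb{C}$ attains its Singleton bound, there is an ideal $J\in\mathcal{I}_{*r}^{t}$ with $\sum_{i\in J^*}k_i = n - \lceil\log_m|\mathbb{C}|\rceil$; but with $k_i=1$ for all $i$ this sum equals $|J^*| = r = \big\lfloor \tfrac{d_{Pm}(\mathbb{C})-1}{\lfloor m/2\rfloor}\big\rfloor$, so the inequality of the previous paragraph is forced to be an equality: $\big\lfloor \tfrac{d_{Pm}(\mathbb{C})-1}{\lfloor m/2\rfloor}\big\rfloor = n - \lceil\log_m|\mathbb{C}|\rceil$.

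Finally, the defining property of the floor function turns this equality into the chain
\[
 n - \lceil\log_m|\mathbb{C}|\rceil \;\le\; \frac{d_{Pm}(\mathbb{C})-1}{\lfloor m/2\rfloor} \;<\; n - \lceil\log_m|\mathbb{C}|\rceil + 1,
\]
and multiplying through by $\lfloor m/2\rfloor$ and then adding $1$ gives exactly
\[
 \big\lfloor\tfrac{m}{2}\big\rfloor\big(n - \lceil\log_m|\mathbb{C}|\rceil\big) + 1 \;\le\; d_{Pm}(\mathbb{C}) \;\le\; \big\lfloor\tfrac{m}{2}\big\rfloor\big(n - \lceil\log_m|\mathbb{C}|\rceil + 1\big),
\]
as claimed. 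The argument is essentially a transcription of the proof of the previous theorem with $s=1$; the only place asking for a moment of care is the middle step, where one must check that the MDS condition genuinely pins the floor down to its largest admissible value rather than merely bounding it from above — once the $\mathcal{I}_{*r}^{t}$ notation is unwound this is immediate, so I do not anticipate any real obstacle.
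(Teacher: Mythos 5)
Your proposal is correct and is essentially the paper's own argument: the paper states this theorem as the $s=1$ specialization of the preceding block-code theorem, whose proof is exactly your middle and final steps (MDS forces $\bigl\lfloor \tfrac{d_{Pm}(\mathbb{C})-1}{\lfloor m/2\rfloor}\bigr\rfloor = n-\lceil\log_m|\mathbb{C}|\rceil$, then the floor property yields the two-sided bound). Your unwinding of the $\mathcal{I}_{*r}^{t}$ notation to justify the equality is a sound filling-in of a detail the paper leaves implicit.
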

 Thus, if $\mathbb{C} $ is a pomset code of length $n$ over $\mathbb{Z}_m$ with minimum distance $d_{Pm}(\mathbb{C})$, then $\mathbb{C}$ cannot be an MDS whenever $1 \leq d_{(Pm,\pi)} \mathbb{(C)} \leq 	{\big\lfloor\frac{ m }{2}\big \rfloor}(n - \lceil log_{m}|\mathbb{C}| \rceil )  $ or $ d_{(Pm,\pi)} (\mathbb{C})  > {\big\lfloor\frac{ m }{2}\big \rfloor}(n - \lceil log_{m}|\mathbb{C}| \rceil +1) $.
\par Now we will compare the maximum distance separability of $(Pm,\pi)$-codes with different poset metric structures. Consider $P=([n], \preceq_P)$ to be a poset induced by the pomset $\mathbb{P}$ such that if $a/i R b/j$ in pomset $\mathbb{P}$  implies $i \preceq_P j$ in the poset $P$. 
\begin{proposition} \label{phg}
	Let $d_{(Pm,\pi)}(\mathbb{C})$  and  $d_{(P,\pi)}(\mathbb{C})$ be the minimum distances of a code $\mathbb{C} $ of length $N$ over $\mathbb{Z}_m$ with respect to $(Pm,\pi)$-metric and $(P,\pi)$-metric respectively.  Then $	\big\lfloor \frac{d_{(Pm,\pi)} \mathbb{(C)}-1}{ \big\lfloor\frac{ m }{2}\big \rfloor}\big\rfloor \leq d_{(P,\pi)}(\mathbb{C}) - 1$.	
\end{proposition}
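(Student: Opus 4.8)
The plan is to deduce the inequality from a pointwise comparison of the two block weights and then close with an elementary floor estimate. First I would show that $w_{(Pm,\pi)}(v)\le \big\lfloor\frac{m}{2}\big\rfloor\, w_{(P,\pi)}(v)$ for every $v\in\mathbb{Z}_m^N$. To this end, fix $v=v_1\oplus\cdots\oplus v_n$ and put $X=\{i\in[n]:v_i\neq 0\}$, which is exactly the block support defining $w_{(P,\pi)}(v)=|\langle X\rangle_P|$. Every nonzero $v_i$ contributes a pair $c_i/i$ to $supp_{(Pm,\pi)}(v)$ with $1\le c_i\le\big\lfloor\frac{m}{2}\big\rfloor$, so the submultiset $supp_{(Pm,\pi)}(v)\subseteq M$ has root set precisely $X$. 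Since $P$ is the poset induced by $\mathbb{P}$, the root set of the pomset ideal $\langle supp_{(Pm,\pi)}(v)\rangle$ is the order ideal $\langle X\rangle_P$; and because $M$ is regular of height $\big\lfloor\frac{m}{2}\big\rfloor$, any submultiset $B\subseteq M$ satisfies $|B|\le\big\lfloor\frac{m}{2}\big\rfloor\,|B^{*}|$. Combining these observations,
\begin{equation*}
w_{(Pm,\pi)}(v)=\big|\langle supp_{(Pm,\pi)}(v)\rangle\big|\le\big\lfloor\tfrac{m}{2}\big\rfloor\,\big|\langle X\rangle_P\big|=\big\lfloor\tfrac{m}{2}\big\rfloor\, w_{(P,\pi)}(v).
\end{equation*}

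Next I would apply this with $v=u-u'$ to get $d_{(Pm,\pi)}(u,u')\le\big\lfloor\frac{m}{2}\big\rfloor\, d_{(P,\pi)}(u,u')$ for all $u,u'\in\mathbb{Z}_m^N$, and then pick distinct codewords of $\mathbb{C}$ realizing $d_{(P,\pi)}(\mathbb{C})$ to conclude $d_{(Pm,\pi)}(\mathbb{C})\le\big\lfloor\frac{m}{2}\big\rfloor\, d_{(P,\pi)}(\mathbb{C})$. Hence $d_{(Pm,\pi)}(\mathbb{C})-1<\big\lfloor\frac{m}{2}\big\rfloor\, d_{(P,\pi)}(\mathbb{C})$, so that $\frac{d_{(Pm,\pi)}(\mathbb{C})-1}{\lfloor m/2\rfloor}<d_{(P,\pi)}(\mathbb{C})$; since $d_{(P,\pi)}(\mathbb{C})$ is an integer, this forces $\big\lfloor\frac{d_{(Pm,\pi)}(\mathbb{C})-1}{\lfloor m/2\rfloor}\big\rfloor\le d_{(P,\pi)}(\mathbb{C})-1$, which is the assertion. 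No linearity of $\mathbb{C}$ is needed, since the weight inequality holds for every $v$.

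The point I expect to require the most care — really the only step beyond bookkeeping — is the identification of the root set of the pomset ideal $\langle supp_{(Pm,\pi)}(v)\rangle$ with the $P$-ideal generated by its root set. This is precisely the compatibility encoded in the definition of the induced poset (where $a/i\,R\,b/j$ in $\mathbb{P}$ implies $i\preceq_P j$), and it can be extracted from the description of ideals of a pomset over a regular multiset recalled in Section~2 (cf.\ \cite{gsrs,gr}). Once this identification and the height bound $|B|\le\big\lfloor\frac{m}{2}\big\rfloor\,|B^{*}|$ are in place, the floor estimate above finishes the proof.
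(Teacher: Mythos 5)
Your proof is correct and follows essentially the same route as the paper's: both rest on the key inequality $w_{(Pm,\pi)}(c_1-c_2)\le\big\lfloor\frac{m}{2}\big\rfloor\,|\langle supp_{\pi}(c_1-c_2)\rangle|$ applied to codewords realizing $d_{(P,\pi)}(\mathbb{C})$, followed by the same floor estimate. The only difference is presentational — you justify the weight comparison pointwise via the root-set/height argument, while the paper asserts it inside a short proof by contradiction.
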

\begin{proof}
	\sloppy{There exist two distinct  codewords  $c_1, c_2 $ in $\mathbb{C} $ such that $d_{(P,\pi)} (\mathbb{C}) = d_{(P, \pi)} {(c_1,c_2)} = |\langle supp_{\pi}(c_1-c_2) \rangle|$. Assume that $ d_{(Pm,\pi)} \mathbb{(C)} > \big\lfloor\frac{ m }{2}\big \rfloor d_{(P,\pi)}\mathbb{(C)}$. Then $ w_{(Pm, \pi)} {(c_1-c_2)} \leq \lfloor\frac{ m }{2} \rfloor |\langle supp_{\pi}(c_1-c_2) \rangle| < d_{(Pm,\pi)} \mathbb{(C)}$, a contradiction. Thus,  $ d_{(Pm,\pi)} \mathbb{(C)} \leq \big\lfloor\frac{ m }{2}\big \rfloor d_{(P,\pi)}\mathbb{(C)}$, $d_{(Pm, \pi)}(\mathbb{C}) - 1 <  \big\lfloor\frac{ m }{2}\big \rfloor d_{(P,\pi)} \mathbb{(C)} $. Hence, $\big\lfloor \frac{d_{(Pm,\pi)} \mathbb{(C)}-1}{ \big\lfloor \frac{ m }{2} \big\rfloor }\big\rfloor \leq d_{(P,\pi)} \mathbb{(C)}-1$.} 
\end{proof}
\begin{proposition} \label{phg}
	Let $d_{Pm}(\mathbb{C})$  and  $d_{P}(\mathbb{C})$ be the minimum distances of a code $\mathbb{C} $ of length $n$ over $\mathbb{Z}_m$ with respect to pomset metric and poset metric respectively.  Then $	\big\lfloor \frac{d_{Pm} \mathbb{(C)}-1}{ \big\lfloor\frac{ m }{2}\big \rfloor}\big\rfloor \leq d_{P}(\mathbb{C}) - 1$.	
\end{proposition}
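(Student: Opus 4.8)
The plan is to observe that this statement is precisely the specialization of the preceding proposition to the case where every block has length one, together with the reductions recorded in Remark~\ref{r3}. So the first step is to recall that if $\pi(i)=1$ for all $i\in[n]$, then the pomset block space $(\mathbb{Z}_m^N,d_{(Pm,\pi)})$ collapses to the classical pomset space $(\mathbb{Z}_m^n,d_{Pm})$, and correspondingly the poset block metric $d_{(P,\pi)}$ collapses to the classical poset metric $d_P$. Under this identification, $d_{(Pm,\pi)}(\mathbb{C})=d_{Pm}(\mathbb{C})$ and $d_{(P,\pi)}(\mathbb{C})=d_P(\mathbb{C})$, and the inequality $\big\lfloor \frac{d_{(Pm,\pi)}(\mathbb{C})-1}{\lfloor m/2\rfloor}\big\rfloor \le d_{(P,\pi)}(\mathbb{C})-1$ from Proposition~\ref{phg} becomes exactly the asserted bound.

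If one prefers a self-contained argument, I would simply transcribe the proof of the previous proposition in the $k_i=1$ setting. Choose distinct codewords $c_1,c_2\in\mathbb{C}$ realizing $d_P(\mathbb{C})=d_P(c_1,c_2)=|\langle supp(c_1-c_2)\rangle|$, where $supp(\cdot)$ is the (poset) support and the ideal is taken in $P$. The key elementary inequality is that for any nonzero $v$, the pomset support $supp_{Pm}(v)$ assigns to each nonzero coordinate $i$ a count $c_i=w_L(v_i)\le\lfloor m/2\rfloor$, while its root set equals $supp(v)$; hence the ideal $\langle supp_{Pm}(v)\rangle$ in $\mathbb{P}$ has root set $\langle supp(v)\rangle$ in $P$ and therefore cardinality at most $\lfloor m/2\rfloor\,|\langle supp(v)\rangle|$. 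Consequently $w_{Pm}(v)\le\lfloor m/2\rfloor\,w_P(v)$ for all $v$.

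Now suppose for contradiction that $d_{Pm}(\mathbb{C})>\lfloor m/2\rfloor\,d_P(\mathbb{C})$. Applying the inequality above to $v=c_1-c_2$ gives $w_{Pm}(c_1-c_2)\le\lfloor m/2\rfloor\,|\langle supp(c_1-c_2)\rangle|=\lfloor m/2\rfloor\,d_P(\mathbb{C})<d_{Pm}(\mathbb{C})$, contradicting the definition of $d_{Pm}(\mathbb{C})$ as the minimum nonzero $(Pm)$-weight in $\mathbb{C}$. Hence $d_{Pm}(\mathbb{C})\le\lfloor m/2\rfloor\,d_P(\mathbb{C})$, so $d_{Pm}(\mathbb{C})-1<\lfloor m/2\rfloor\,d_P(\mathbb{C})$, and dividing by $\lfloor m/2\rfloor$ and taking floors yields $\big\lfloor\frac{d_{Pm}(\mathbb{C})-1}{\lfloor m/2\rfloor}\big\rfloor\le d_P(\mathbb{C})-1$.

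I do not anticipate a genuine obstacle here: the result is a direct corollary of Proposition~\ref{phg}, and even the standalone version only requires the routine observation that passing from a poset to a pomset inflates weights by a factor of at most $\lfloor m/2\rfloor$. The only point deserving a sentence of care is the compatibility of the ideal-generation operators in $\mathbb{P}$ and $P$ under the correspondence $a/i\,R\,b/j\Rightarrow i\preceq_P j$, which guarantees that the root set of $\langle supp_{Pm}(v)\rangle$ is exactly $\langle supp(v)\rangle$.
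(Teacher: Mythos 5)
Your proposal is correct and takes essentially the same route as the paper: the paper states this proposition without proof as the $k_i=1$ specialization of the preceding block version, and your self-contained argument is a faithful transcription of the paper's proof of that block version (the same contradiction via $w_{Pm}(v)\le\lfloor \frac{m}{2}\rfloor\, w_{P}(v)$ followed by dividing by $\lfloor \frac{m}{2}\rfloor$ and taking floors).
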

Now, we will determine whether an MDS code with respect to $(Pm,\pi)$-metric is also an  MDS code with respect to $(P,\pi)$-metric or not in the following successive results. Recall that, Singleton bound \cite{bkdnsr} of any $(P,\pi)$-code $\mathbb{C} $  is $ \max\limits_{J \in  \mathcal{I}^{d_{(P,\pi)}(\mathbb{C}) - 1}}  \big\{\sum_{i \in J} k_{i}\big\} \leq  N - \lceil log_{m}|\mathbb{C}| \rceil $. 
\begin{theorem}\label{MDScompare}
	Every MDS $(Pm,\pi)$-code is an MDS $(P,\pi)$-code.
\end{theorem}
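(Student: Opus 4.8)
The plan is to combine the explicit forms of the two Singleton bounds with the comparison of minimum distances. Suppose $\mathbb{C}$ is an MDS $(Pm,\pi)$-code. By the definition of MDS $(Pm,\pi)$-code together with Theorem \ref{singlB}, there is an ideal $J\in\mathcal{I}_{*r}^{t}$ of the pomset $\mathbb{P}$, with $r=\big\lfloor(d_{(Pm,\pi)}(\mathbb{C})-1)/\lfloor m/2\rfloor\big\rfloor$ and $t\le d_{(Pm,\pi)}(\mathbb{C})-1$, such that $\sum_{i\in J^{*}}k_i=N-\lceil\log_m|\mathbb{C}|\rceil$.

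The first step is the structural observation that the root set $J^{*}$ is an ideal of the induced poset $P=([n],\preceq_P)$: if $i\in J^{*}$ and $j\preceq_P i$ with $j\ne i$, then the relation $j\preceq_P i$ in $P$ is generated by a pomset relation $c/j\,R\,c'/i$ sitting below an element of $J$, which forces $j\in J^{*}$. By the definition of $\mathcal{I}_{*r}^{t}$ we have $|J^{*}|=r$, so $J^{*}$ is a poset ideal of $P$ of cardinality $r$.

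Next, by the comparison $\big\lfloor(d_{(Pm,\pi)}(\mathbb{C})-1)/\lfloor m/2\rfloor\big\rfloor\le d_{(P,\pi)}(\mathbb{C})-1$ of Proposition \ref{phg}, we obtain $r\le d_{(P,\pi)}(\mathbb{C})-1$; moreover $d_{(P,\pi)}(\mathbb{C})-1\le n$ since a $(P,\pi)$-weight never exceeds $n$. Hence, adjoining one minimal element of the complement at a time, I can extend $J^{*}$ to a poset ideal $J'\supseteq J^{*}$ of $P$ with $|J'|=d_{(P,\pi)}(\mathbb{C})-1$, i.e. $J'\in\mathcal{I}^{d_{(P,\pi)}(\mathbb{C})-1}$. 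Since every block length satisfies $k_i\ge 1$ and $J^{*}\subseteq J'$, the Singleton bound for $(P,\pi)$-codes recalled before the theorem gives
$$N-\lceil\log_m|\mathbb{C}|\rceil=\sum_{i\in J^{*}}k_i\le\sum_{i\in J'}k_i\le\max_{K\in\mathcal{I}^{d_{(P,\pi)}(\mathbb{C})-1}}\sum_{i\in K}k_i\le N-\lceil\log_m|\mathbb{C}|\rceil.$$
Therefore equality holds throughout, so $\mathbb{C}$ attains the $(P,\pi)$-Singleton bound and is an MDS $(P,\pi)$-code.

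I expect the main obstacle to be the first step — verifying carefully that the root set $J^{*}$ of a pomset ideal is genuinely an ideal of the induced poset $P$ and that its set-cardinality equals the parameter $r$ — together with the (routine) extension argument; once these are secured, the chain of inequalities above closes the proof. It is also worth treating the degenerate case $d_{(P,\pi)}(\mathbb{C})=1$ separately, where both Singleton conditions reduce to $|\mathbb{C}|=m^{N}$.
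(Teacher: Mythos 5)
Your proof is correct and follows essentially the same route as the paper's: both invoke Proposition \ref{phg} to obtain $r\le d_{(P,\pi)}(\mathbb{C})-1$ and then compare the pomset-block Singleton bound with the $(P,\pi)$-Singleton bound. The only difference is that you spell out the step the paper leaves implicit, namely that the root set $J^{*}$ is a poset ideal of the induced poset $P$ of cardinality $r$ which can be extended to one of cardinality $d_{(P,\pi)}(\mathbb{C})-1$, yielding the inequality $\max_{J\in\mathcal{I}_{*r}^{t}}\sum_{i\in J^{*}}k_i\le\max_{K\in\mathcal{I}^{d_{(P,\pi)}(\mathbb{C})-1}}\sum_{i\in K}k_i$ that the paper simply asserts.
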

	\begin{proof}
		\sloppy{Let $\mathbb{C} $ be an  MDS $(Pm,\pi)$-code. Then $ \max\limits_{J \in  \mathcal{I}_{*r}^{t}}  \big\{\sum_{i \in J^*} k_{i}\big\} = N - \lceil log_{m}|\mathbb{C}| \rceil $ where $r = \big\lfloor \frac{d_{(Pm,\pi)} \mathbb{(C)}-1}{M_w} \big\rfloor$. As $ r \leq d_{(P,\pi)}(\mathbb{C}) - 1 $ by  Proposition \ref{phg}, we have 
			$ \max\limits_{J \in  \mathcal{I}_{*r}^{t}}  \big\{\sum_{i \in J^*} k_{i}\big\} \leq \max\limits_{J \in  \mathcal{I}^{d_{(P,\pi)}(\mathbb{C}) - 1}}  \big\{\sum_{i \in J} k_{i}\big\}$.  	Thus,  $  N - \lceil log_{m}|\mathbb{C}| \rceil \leq  \max\limits_{J \in  \mathcal{I}^{d_{(P,\pi)}(\mathbb{C}) - 1}}  \big\{  \sum_{i \in J} k_{i}\big\}$.
			Hence $\mathbb{C} $  is MDS  with respect to $(P,\pi)$-metric.}
	\end{proof}
\begin{theorem}\label{MDScompare}
	Every MDS pomset code is an MDS poset code.
\end{theorem}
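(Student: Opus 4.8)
The plan is to obtain this statement as the block-length-one specialization of the preceding theorem, using Remark \ref{r3}. First I would set $\pi(i) = 1$ for every $i \in [n]$, so that $N = n$; by Remark \ref{r3} the pomset block space $(\mathbb{Z}_m^N, d_{(Pm,\pi)})$ collapses to the classical pomset space $(\mathbb{Z}_m^n, d_{Pm})$, and the $(P,\pi)$-metric $d_{(P,\pi)}$ reduces to the ordinary poset metric $d_P$. Under this reduction $\sum_{i \in J^*} k_i = |J^*|$ and $\sum_{i \in J} k_i = |J|$, so the pomset block Singleton bound of Theorem \ref{singlB} becomes $\max_{J \in \mathcal{I}_{*r}^{t}} |J^*| = r \leq n - \lceil \log_m |\mathbb{C}| \rceil$ with $r = \lfloor (d_{Pm}(\mathbb{C})-1)/\lfloor m/2\rfloor\rfloor$, which is exactly the pomset Singleton bound of \cite{gr}; likewise the poset block Singleton bound specializes to the classical poset Singleton bound $d_P(\mathbb{C}) - 1 \leq n - \lceil \log_m|\mathbb{C}|\rceil$. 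Hence the notions ``MDS pomset code'' and ``MDS poset code'' are precisely the $k_i = 1$ instances of ``MDS $(Pm,\pi)$-code'' and ``MDS $(P,\pi)$-code''.

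Given this identification, I would simply invoke the preceding theorem: an MDS pomset code is an MDS $(Pm,\pi)$-code for the label map $\pi \equiv 1$, hence an MDS $(P,\pi)$-code for that same map, hence an MDS poset code. If a self-contained argument is preferred, I would replay the proof of the block case verbatim: let $\mathbb{C}$ meet the pomset Singleton bound with equality, so some ideal $J$ has $|J^*| = r = n - \lceil \log_m|\mathbb{C}|\rceil$; by the pomset/poset version of Proposition \ref{phg}, $r = \lfloor (d_{Pm}(\mathbb{C})-1)/\lfloor m/2\rfloor\rfloor \leq d_P(\mathbb{C}) - 1$, so the root set of $J$ is contained in some order ideal of $P$ of cardinality $d_P(\mathbb{C})-1$ (ideals only enlarge), giving $n - \lceil \log_m|\mathbb{C}|\rceil = |J^*| \leq \max_{J' \in \mathcal{I}^{d_P(\mathbb{C})-1}(P)} |J'|$; combined with the poset Singleton bound in the other direction this forces equality, i.e. $\mathbb{C}$ is MDS in the poset metric.

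The only point requiring care — and the mild ``obstacle'' here — is the bookkeeping matching the two families of ideals: one must check that when every count equals $\lfloor m/2\rfloor$, passing an order ideal of $\mathbb{P}$ to its root set yields an order ideal of the induced poset $P = ([n], \preceq_P)$, and conversely, so that $\mathcal{I}_{*r}^{t}(\mathbb{P})$ corresponds to $\mathcal{I}^{r}(P)$ as needed. This is immediate from the defining implication $a/i\,R\,b/j \Rightarrow i \preceq_P j$ together with the fact that in the height-one case ``full count'' is automatic, so the floor bracket in Theorem \ref{singlB} simplifies cleanly. Once this is noted, nothing further needs to be proved, and I would present the result in a couple of lines as a corollary-style consequence of the block case.
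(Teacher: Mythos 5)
Your proposal is correct and matches the paper's intent exactly: the paper states this theorem without proof precisely because it is the $k_i=1$ specialization of the preceding block theorem, which is the route you take (and your optional "replay" of the block argument is just that proof with $\sum_{i\in J^*}k_i$ replaced by $|J^*|$). Nothing further is needed.
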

\par  
The pomset block ball (or $(Pm,\pi)$-ball) with center $ u \in \mathbb{Z}_{m}^N $ and radius $r$ is defined as $B_{(Pm,\pi)}(u, r) = \{ v \in \mathbb{Z}_{m}^N : d_{(Pm,\pi)}(u, v) \leq r \}$. For a mset $I$ in $\mathbb{P}$, the $I$-ball centered at $u$ is $B_{I}(u) \triangleq \{ v \in \mathbb{Z}_{m}^N :  supp_{(Pm,\pi)}(u- v)  \subseteq I \}$. For $v \in  B_I (u)$, it is not necessary that $\langle supp_{(Pm,\pi)} {(u - v)} \rangle \subseteq  I$. If $I $ is an ideal in $P $, then $\langle supp_{(Pm,\pi)} (u - v) \rangle \subseteq I$ is always true. Hence, for an ideal $I$ in $\mathbb{P}$, the $I$-ball centered at $u$ is $B_{I}(u) \triangleq \{ v \in \mathbb{Z}_{m}^N : \langle supp_{(Pm,\pi)}(u- v) \rangle \subseteq I \}$. For each $ v \in B_I(u)$, $d_{(Pm,\pi)}(u,v) \leq |I|$. The $I$-sphere centered at $u$ is $S_{I}(u) \triangleq \{ v \in \mathbb{Z}_{m}^N : \langle supp_{(Pm,\pi)}(u-v) \rangle = I \}$. Let $ \mathcal{I}_j^i $ denote the collection of all ideals with cardinality $ i $ having $ j $ maximal elements.  Let $Max(I) = \{c_{i_1}/i_1,c_{i_2}/i_2,\ldots,c_{i_j}/i_j \}$ denote the set of maximal elements in the ideal $I$. Then 
$\bigcup\limits_{j=1}^{\min\{i,n\}} \mathcal{I}_j^i = \mathcal{I}^i{(\mathbb{P})}$.
\par Unlike the results in poset space \cite{hkmdspc}, $I$-balls in \cite{gr} are no longer linear when the ideal $I$ has a partial count. In a similar way, we noticed that $I$-balls behave the same way in the $(Pm,\pi)$-metric as does in the pomset-metric \cite{gr}. $I$-balls remain linear when $I$ is an ideal with a full count, and its properties are similar to the ideal in poset space. The following Proposition is a generalization of [ref. \cite{gr}, Proposition $3$]  and because the proof follows the same pattern as \cite{gsrs} it was omitted. 
\begin{proposition} \label{full count theorem}
	\sloppy{Let $\mathbb{P} $ be a pomset on a regular mset $ M=\{\lfloor\frac{m}{2} \rfloor/1, \lfloor \frac{m}{2} \rfloor/2, \dots, \lfloor \frac{m}{2} \rfloor/n \}$ and $\tilde{\mathbb{P} }$ be the dual pomset of $\mathbb{P} $. If $I$ is an ideal with a full count in $\mathbb{P}$, then} 
	\begin{enumerate}[label=(\roman*)]
		\item $B_I$ is a submodule of $\mathbb{Z}_m^N$ and $|B_{I}| = m ^{\sum\limits_{i \in I^*} k_i}$.
		\item  For $u \in \mathbb{Z}_m^N$, $B_{I}(u)$ is the coset of $B_I$ containing $u$, ie. $B_I(u) = u + B_I$. 
		\item For $u,v \in \mathbb{Z}_m^N$, the two $I$-balls $B_I(u) $ and $B_I(v)$ are either identical or disjoint. Moreover,
		$	B_I(u)  =  B_I(v)  \text{ if and only if  } supp_{(Pm, \pi)}{(u-v)} \subseteq I$.
	\end{enumerate} 
\end{proposition}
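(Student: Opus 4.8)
The plan is to exploit the full-count hypothesis to collapse the defining condition of $B_I$ into a purely positional condition — namely, which blocks of the argument are nonzero — and then read off all three items as standard facts about cosets of a coordinate submodule of $\mathbb{Z}_m^N$.

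First I would record the key reduction. Since $I$ is an ideal, $\langle supp_{(Pm,\pi)}(w)\rangle\subseteq I$ is equivalent to $supp_{(Pm,\pi)}(w)\subseteq I$ (one direction from $supp_{(Pm,\pi)}(w)\subseteq\langle supp_{(Pm,\pi)}(w)\rangle$, the other from $\langle I\rangle=I$). Next, because $C_I(i)=C_M(i)=\lfloor m/2\rfloor$ for every $i\in I^*$, while every entry $c_i=Max_c supp_{L}(w_i)$ of $supp_{(Pm,\pi)}(w)$ satisfies $c_i\le\lfloor m/2\rfloor$, a submset of $M$ lies in $I$ precisely when its root set is contained in $I^*$. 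Combining these, and using $supp_{(Pm,\pi)}(-w)=supp_{(Pm,\pi)}(w)$ (as $w_L(-x)=w_L(x)$ over $\mathbb{Z}_m$), I obtain $B_I=B_I(0)=\{w\in\mathbb{Z}_m^N : w_i=0\text{ for all }i\notin I^*\}=\bigoplus_{i\in I^*}\mathbb{Z}_m^{k_i}$, regarded as a submodule of $\mathbb{Z}_m^N=\bigoplus_{i=1}^{n}\mathbb{Z}_m^{k_i}$. This set is closed under addition and under $\mathbb{Z}_m$-scaling, since if $w_i=w_i'=0$ then $(w+w')_i=0$ and $(\alpha w)_i=0$; so $B_I$ is a submodule, and $|B_I|=m^{\sum_{i\in I^*}k_i}$, which is item (i).

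For item (ii) I would apply the same reduction to an arbitrary center: $B_I(u)=\{v:supp_{(Pm,\pi)}(u-v)\subseteq I\}=\{v:(u-v)_i=0\text{ for all }i\notin I^*\}=\{v:v-u\in B_I\}=u+B_I$. Item (iii) is then immediate: cosets of the submodule $B_I$ partition $\mathbb{Z}_m^N$, so $B_I(u)=u+B_I$ and $B_I(v)=v+B_I$ are either equal or disjoint, and they coincide if and only if $u-v\in B_I$, i.e.\ (by the description of $B_I$ in Step~1 together with the negation-invariance of the support) if and only if $supp_{(Pm,\pi)}(u-v)\subseteq I$.

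I do not expect a genuine obstacle here; the one place the argument really uses its hypothesis — and the step I would be most careful about — is the reduction in the second paragraph, where the equality $C_I(i)=\lfloor m/2\rfloor$ makes the Lee weights of the nonzero blocks immaterial, so that membership in $B_I$ depends on $w$ only through the set $\{i:w_i\ne0\}$, which is stable under the module operations. If $I$ had merely a partial count this step would break, since adding two elements of $B_I$ can raise some value $Max_c supp_{L}(w_i)$ above $C_I(i)$; this is exactly why $B_I$ need not be linear in that case, as noted before the statement.
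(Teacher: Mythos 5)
Your proof is correct, and it is precisely the standard argument the paper alludes to when it omits the proof (deferring to the pomset case in the cited reference): the full-count hypothesis reduces membership in $B_I$ to the purely positional condition $w_i=0$ for $i\notin I^*$, so $B_I=\bigoplus_{i\in I^*}\mathbb{Z}_m^{k_i}$ and (i)--(iii) follow as coset facts. Your closing remark about why the reduction fails for partial counts is also exactly the point the paper makes in the remark following the proposition.
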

Hence,	$(\mathbb{Z}_m^N,~d_{(Pm,\pi)})$-space can be partitioned into $I$-balls for  an ideal  $ I $ with full count.
\begin{remark}
	Given an ideal $I$ with partial count, $I$-ball need not be a submodule, but for $u \in \mathbb{Z}_m^N$, $B_{I}(u)$ is the translate of $B_I$, ie. $B_I(u) = u + B_I$.
\end{remark}
\begin{proposition}\label{runnion}
	Every $r$-ball is a union of all $I$-balls where $I$ is an ideal of cardinality $r$ i.e. $	B_{(Pm,\pi)}(u,r)= \bigcup\limits_{I \in  \mathcal{I}^r(P)}B_{I}(u)$.
\end{proposition}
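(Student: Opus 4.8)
The plan is to prove the asserted set equality by establishing the two inclusions separately: the forward one is immediate from the definitions, while the reverse one rests on the fact that an ideal of small cardinality can always be enlarged to one of cardinality exactly $r$. Throughout I assume $r \le n\lfloor \frac{m}{2}\rfloor = |M|$, the maximum possible $(Pm,\pi)$-weight, since this is the range in which ideals of cardinality $r$ exist (and in which the statement is meaningful; for larger $r$ both sides would have to be read with the convention that one ranges over ideals of cardinality $\min\{r,|M|\}$).

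First I would handle $\bigcup_{I\in\mathcal{I}^r(\mathbb{P})}B_I(u)\subseteq B_{(Pm,\pi)}(u,r)$. If $v\in B_I(u)$ for some ideal $I$ with $|I|=r$, then by the definition of the $I$-ball for an ideal one has $\langle supp_{(Pm,\pi)}(u-v)\rangle\subseteq I$, hence $d_{(Pm,\pi)}(u,v)=w_{(Pm,\pi)}(u-v)=|\langle supp_{(Pm,\pi)}(u-v)\rangle|\le|I|=r$, so $v\in B_{(Pm,\pi)}(u,r)$. This uses nothing beyond the definitions of $w_{(Pm,\pi)}$, $B_I(u)$, and $B_{(Pm,\pi)}(u,r)$, so I would dispose of it first.

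For the reverse inclusion $B_{(Pm,\pi)}(u,r)\subseteq\bigcup_{I\in\mathcal{I}^r(\mathbb{P})}B_I(u)$, I would take $v\in B_{(Pm,\pi)}(u,r)$ and set $I_0:=\langle supp_{(Pm,\pi)}(u-v)\rangle$. Being an ideal generated by a submset of $M$, $I_0\in\mathcal{I}(\mathbb{P})$, and $|I_0|=w_{(Pm,\pi)}(u-v)=d_{(Pm,\pi)}(u,v)\le r$. Then I would invoke the ideal-extension fact already used in the proof of Theorem~\ref{singlB} (Ref.\ \cite{gsrs}, Proposition~$3$): since $|I_0|\le r\le|M|$, there is an ideal $I\in\mathcal{I}^r(\mathbb{P})$ with $I_0\subseteq I$. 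Consequently $\langle supp_{(Pm,\pi)}(u-v)\rangle=I_0\subseteq I$, that is, $v\in B_I(u)\subseteq\bigcup_{I\in\mathcal{I}^r(\mathbb{P})}B_I(u)$. Combining the two inclusions then yields $B_{(Pm,\pi)}(u,r)=\bigcup_{I\in\mathcal{I}^r(\mathbb{P})}B_I(u)$.

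The one non-formal ingredient, and the step I expect to be the main obstacle, is the ideal-extension claim used in the second inclusion. It is purely order-theoretic: from an ideal $I_0$ with $|I_0|<r$ one obtains an ideal of cardinality $|I_0|+1$ either by raising the count of a maximal element of $I_0$ that has not yet reached its full $M$-count (this leaves every down-set untouched, so the ideal property is preserved) or, when every element of $I_0$ already has full count, by adjoining with count $1$ a minimal element of $\mathbb{P}$ lying outside $I_0$; iterating this, and using $|M|\ge r$, produces an ideal of cardinality exactly $r$ that contains $I_0$. Since this is precisely Proposition~$3$ of \cite{gsrs}—already cited verbatim in the Singleton-bound argument above—I would simply quote it rather than reprove it, so that no genuinely new work is needed here.
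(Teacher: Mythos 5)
Your proof is correct and is exactly the intended argument: the paper states Proposition~\ref{runnion} without proof, and your two inclusions (the easy one from $\langle supp_{(Pm,\pi)}(u-v)\rangle\subseteq I\Rightarrow d_{(Pm,\pi)}(u,v)\le|I|$, and the reverse one via extending the ideal $\langle supp_{(Pm,\pi)}(u-v)\rangle$ to one of cardinality exactly $r$ using the extension result of \cite{gsrs}) are precisely the tools the paper itself invokes elsewhere, e.g.\ in Theorem~\ref{singlB} and Theorem~\ref{Iperfect imply MDS}. The only cosmetic quibble is in your informal sketch of the extension step: when every element of $I_0$ has full count one must adjoin an element that is minimal among those \emph{outside} $I_0^*$ (i.e.\ whose strict down-set already lies in $I_0$ with full count), not necessarily a minimal element of $\mathbb{P}$ itself; since you defer to the cited proposition anyway, this does not affect the proof.
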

\begin{definition}
	 A $(Pm, \pi)$-code $\mathbb{C}$ of length $N$ over $\mathbb{Z}_m$ is said to be $I$-perfect if the  $I$-balls centered at the codewords of $\mathbb{C}$ are pairwise disjoint and their union is $\mathbb{Z}_m^N$.
	 $ \mathbb{C}$ is an $r$-error correcting $(Pm, \pi)$-code if the $(Pm, \pi)$-balls of radius $r$ centered at the codewords of $ \mathbb{C}$ are pairwise disjoint.
	It is said to be $r$-perfect if the $r$-balls centered at the codewords of $C$ are pairwise disjoint and their union covers the entire space $\mathbb{Z}_{m} ^ N$. 
\end{definition}
For an ideal $I$ with a full count, from Proposition \ref{full count theorem},  the space $\mathbb{Z}_m^N$ can be partitioned into $I$-balls. The number of $I$-balls is  $ m^{N -\sum\limits_{i \in I^*}k_i}$. Then the set of  collection of exactly one tuple from each $I$-balls forms an $I$-perfect $(Pm,\pi)$-code $\mathbb{C}$ of length $N$ over  $\mathbb{Z}_m$ and  $ |\mathbb{C} | = m^{N -\sum\limits_{i \in I^*}k_i}$ with  $d_{(Pm,\pi)} (\mathbb{C}) > |I|$. On the other hand, for an ideal $I$ with the partial count, $I$-balls  need not behold the properties given in the above Theorems.
\par The following Lemma is true only in the case when $I$ is an ideal with full counts. It need not be true in the case of an ideal with partial count.
\begin{lemma}\label{|B int C|=1}
	Let $I$ be an ideal with full count and $\mathbb{C}$ be an $[N, k] $ $ (Pm, \pi)$-code. Then the following are equivalent:
	\begin{enumerate}[label=(\roman*)]
		\item $\mathbb{C}$ is an $I$-perfect code.
		\item $	\sum\limits_{i \in I^*} k_{i} = N-k$ (the covering condition)
		and  $ | \mathbb{C} \cap B_I| =1 $ (the packing condition).
		\item $ | \mathbb{C} \cap B_I (y)| =1 $ $\forall$ $y \in \mathbb{Z}_m^N$; that is each element of  $ \mathbb{Z}_m^N$ belongs to precisely one $I$-ball centered at a codewords of  $\mathbb{C}$.
	\end{enumerate}  
\end{lemma}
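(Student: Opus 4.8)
The plan is to prove the chain of implications $(i)\Rightarrow(iii)\Rightarrow(ii)\Rightarrow(i)$, using Proposition \ref{full count theorem} as the main engine. Since $I$ has full count, that Proposition gives us three facts we will use repeatedly: $B_I$ is a submodule of $\mathbb{Z}_m^N$ with $|B_I|=m^{\sum_{i\in I^*}k_i}$; for every $u$ we have $B_I(u)=u+B_I$, so the $I$-balls are exactly the cosets of $B_I$; and $B_I(u)=B_I(v)$ if and only if $supp_{(Pm,\pi)}(u-v)\subseteq I$. In addition, linearity of $\mathbb{C}$ gives $|\mathbb{C}|=m^{k}$ and $c-c'\in\mathbb{C}$ whenever $c,c'\in\mathbb{C}$. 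The whole argument is then a bookkeeping exercise with cosets and cardinalities.

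For $(i)\Rightarrow(iii)$, fix $y\in\mathbb{Z}_m^N$. By $I$-perfectness, $y$ lies in exactly one ball $B_I(c)$ with $c\in\mathbb{C}$. Since $y\in B_I(c)$ forces $B_I(y)=B_I(c)$, this $c$ belongs to $\mathbb{C}\cap B_I(y)$; and if $c'\in\mathbb{C}\cap B_I(y)$ then $B_I(c')=B_I(y)=B_I(c)$, whence $y\in B_I(c')$ and so $c'=c$ by the disjointness part of $I$-perfectness. Thus $|\mathbb{C}\cap B_I(y)|=1$. For $(iii)\Rightarrow(ii)$, the packing condition $|\mathbb{C}\cap B_I|=1$ is the case $y=0$. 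For the covering condition, note that $c\mapsto B_I(c)$ is a bijection from $\mathbb{C}$ onto the set of distinct $I$-balls: it is onto because every $I$-ball has the form $B_I(y)$, which contains the unique codeword of $\mathbb{C}\cap B_I(y)$, and that codeword $c$ satisfies $B_I(c)=B_I(y)$; it is one-to-one because $B_I(c)=B_I(c')$ puts both $c,c'$ into the singleton set $\mathbb{C}\cap B_I(c)$. Since the distinct $I$-balls partition $\mathbb{Z}_m^N$ into $m^{N-\sum_{i\in I^*}k_i}$ pieces, we get $m^{k}=|\mathbb{C}|=m^{N-\sum_{i\in I^*}k_i}$, i.e.\ $\sum_{i\in I^*}k_i=N-k$.

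Finally, for $(ii)\Rightarrow(i)$: first disjointness. If $B_I(c)=B_I(c')$ for some $c\neq c'$ in $\mathbb{C}$, then $supp_{(Pm,\pi)}(c-c')\subseteq I$, so $c-c'\in B_I$; but $c-c'\in\mathbb{C}$ by linearity, so $\{0,\,c-c'\}\subseteq\mathbb{C}\cap B_I$, contradicting the packing condition. Hence the balls $B_I(c)$, $c\in\mathbb{C}$, are pairwise disjoint (using again the identical-or-disjoint dichotomy). Their union then has size $|\mathbb{C}|\cdot|B_I|=m^{k}\cdot m^{\sum_{i\in I^*}k_i}=m^{N}$ by the covering condition, so it equals $\mathbb{Z}_m^N$; thus $\mathbb{C}$ is $I$-perfect.

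The one genuinely delicate point — and exactly where the full-count hypothesis cannot be dropped — is the step "$B_I(c)=B_I(c')$ implies $c-c'\in B_I$", which rests on $B_I=B_I(0)$ being a submodule together with the coset description $B_I(u)=u+B_I$; both fail for an ideal with a partial count, which is why the statement is restricted to the full-count case. Everything else is routine counting against $|\mathbb{Z}_m^N|=m^N$ and $|\mathbb{C}|=m^k$.
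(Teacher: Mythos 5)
Your proof is correct, and it follows what is evidently the intended argument: the paper states this lemma without proof, and the natural route is exactly yours, namely to use Proposition \ref{full count theorem} to view the $I$-balls as the cosets of the submodule $B_I$ and then count against $|\mathbb{Z}_m^N|=m^N$ and $|\mathbb{C}|=m^k$. You also correctly isolate the only place linearity and the full-count hypothesis are genuinely needed (the step $B_I(c)=B_I(c')\Rightarrow c-c'\in B_I$ in $(ii)\Rightarrow(i)$), which is consistent with the paper's remark that the lemma fails for ideals with partial count.
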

The following Theorem demonstrates that $I$-perfect codes are systematic, with information symbols in $I^{*c}$ blocks and parity check symbols in $I^*$ blocks. Thus, $I$-perfect codes are easier to deal with than $r$-perfect codes. 
 We write $x \in \mathbb{Z}_m^N$ as $(x_1, x_2) $ where $x_1 \in \bigoplus\limits_{j \notin I^*} \mathbb{Z}_m^{k_j} $ and $x_2 \in \bigoplus\limits_{i \in I^*} \mathbb{Z}_m^{k_i} $.
\begin{theorem}\label{Iperfect}
	Given an ideal $I$ with full count, an $[N, k] $ $ (Pm, \pi)$-code is $I$-perfect if and only if there is a function 
	$	f : \bigoplus\limits_{j \notin I^*} \mathbb{Z}_m^{k_j} \rightarrow \bigoplus\limits_{i \in I^*} \mathbb{Z}_m^{k_i} $
	such that $ \mathbb{C} = \{ (v, f(v)  ) : v \in \bigoplus\limits_{j \notin I^*} \mathbb{Z}_m^{k_j}\}$.
\end{theorem}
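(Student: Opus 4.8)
The plan is to prove both directions by exploiting the characterization of $I$-perfect codes furnished by Lemma~\ref{|B int C|=1}, together with the fact (Proposition~\ref{full count theorem}) that for an ideal $I$ with full count, $B_I$ is a submodule with $|B_I| = m^{\sum_{i \in I^*} k_i}$ and the $I$-balls centered at the codewords tile $\mathbb{Z}_m^N$. The key observation is that $B_I$ is precisely the set of tuples supported on $I^*$-blocks, i.e. under the decomposition $x = (x_1,x_2)$ with $x_1 \in \bigoplus_{j \notin I^*}\mathbb{Z}_m^{k_j}$ and $x_2 \in \bigoplus_{i \in I^*}\mathbb{Z}_m^{k_i}$, one has $B_I = \{(0,x_2) : x_2 \in \bigoplus_{i \in I^*}\mathbb{Z}_m^{k_i}\}$; this is because $\langle supp_{(Pm,\pi)}(0-v)\rangle \subseteq I$ forces the nonzero blocks of $v$ to lie in $I^*$, and conversely, $I$ having full count means every such $v$ is in $B_I$.

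First I would prove the ``if'' direction. Suppose $\mathbb{C} = \{(v,f(v)) : v \in \bigoplus_{j\notin I^*}\mathbb{Z}_m^{k_j}\}$ for some function $f$. Then $|\mathbb{C}| = m^{\sum_{j \notin I^*} k_j} = m^{N - \sum_{i \in I^*} k_i}$, so $\sum_{i \in I^*} k_i = N - \log_m|\mathbb{C}|$, giving the covering condition of Lemma~\ref{|B int C|=1}(ii) (here $k = \log_m|\mathbb{C}|$). For the packing condition, I would show $|\mathbb{C} \cap B_I| = 1$: a codeword $(v,f(v))$ lies in $B_I = \{(0,x_2)\}$ iff $v = 0$, and there is exactly one such codeword, namely $(0,f(0))$. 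By Lemma~\ref{|B int C|=1} ((ii)$\Rightarrow$(i)), $\mathbb{C}$ is $I$-perfect.

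Conversely, suppose $\mathbb{C}$ is $I$-perfect. By Lemma~\ref{|B int C|=1}(iii), every $y \in \mathbb{Z}_m^N$ lies in exactly one $I$-ball $B_I(c)$ with $c \in \mathbb{C}$; equivalently, for each $y$ there is a unique $c \in \mathbb{C}$ with $supp_{(Pm,\pi)}(y - c) \subseteq I$, i.e. $y - c \in B_I$, i.e. $y$ and $c$ agree on all blocks outside $I^*$. In particular, fixing $v \in \bigoplus_{j \notin I^*}\mathbb{Z}_m^{k_j}$ and taking $y = (v,0)$, there is a unique codeword whose first component (the $I^{*c}$-part) equals $v$; define $f(v)$ to be its second component. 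This is well-defined precisely by the uniqueness clause, and it is defined for every $v$ by the existence clause. Then $\mathbb{C} \supseteq \{(v,f(v)) : v\}$ by construction, and the reverse inclusion holds since any $c = (v, w) \in \mathbb{C}$ must be the unique codeword with first component $v$, forcing $w = f(v)$.

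\textbf{Main obstacle.} The routine bookkeeping is light here; the one genuine point requiring care is the identification $B_I = \{(0,x_2)\}$ for an ideal $I$ with full count, and relatedly that the condition $supp_{(Pm,\pi)}(u-v) \subseteq I$ depends only on which blocks of $u - v$ are nonzero (not on the Lee weights) when $I$ has full count. This is where the full-count hypothesis is essential and where the analogous statement fails for a partial count (cf. Example~\ref{IperfectExam}): with a partial count, membership in $B_I$ constrains the Lee weights of the nonzero blocks, so $B_I$ is not a ``block-coordinate subspace'' and the clean systematic form breaks down. I would make this identification explicit as the first step, citing Proposition~\ref{full count theorem}, and the rest of the argument then proceeds as a direct translation between the tiling property and the graph-of-a-function description.
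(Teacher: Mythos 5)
Your proposal is correct and follows essentially the same route as the paper: both directions rest on the identification of $B_I$ with the set of vectors supported on the $I^*$-blocks (valid because $I$ has full count) and on Lemma~\ref{|B int C|=1}, with the only cosmetic difference being that you verify condition (ii) of that lemma where the paper verifies condition (iii). The uniqueness/existence argument defining $f$ in your converse direction is the same as the paper's forward direction.
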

\begin{proof}
	Assume that the code $ \mathbb{C}$ is an $I$-perfect. Let $v \in \bigoplus\limits_{i \notin I^*} \mathbb{Z}_m^{k_i}$. Since $ \mathbb{C}$ is an $I$-perfect code,  $(v, 0) \in B_I (c)$ for some $c \in \mathbb{C}$ which implies that $supp_{(Pm,\pi)}(c - (v, 0)) \subseteq I$ and $ c - (v, 0)=(0, u) \in B_I$. This proves that $c =(v, u)$. Moreover, if there is another element $c'=(v, w) \in C$, then $c - c' =(0, u - w)$. As $I$ is an ideal with full count, then $supp_{(Pm,\pi)}(0, u - w) \subseteq I$ so $ c - c' \in B_I$ which implies that $c \in B_I(c')$, a contradiction. Therefore, we can define a function $
	f : \bigoplus\limits_{j \notin I^*} \mathbb{Z}_m^{k_j} \rightarrow \bigoplus\limits_{i \in I^*} \mathbb{Z}_m^{k_i} 
	$	which sends  $v \in  \bigoplus\limits_{j \notin I^*} \mathbb{Z}_m^{k_j}  $ to the unique $ u \in \bigoplus\limits_{i \in I^*}\mathbb{Z}_m^{k_i}  $ such that $c=(v,u)$.
	\par Conversely, assume that such a function exists. Then, for any $(v, u) \in  \mathbb{Z}_m^N$, we have $ \mathbb{C} \cap B_I (v, u) ={(v, f(v))} $. By Lemma \ref{|B int C|=1}, we get that $\mathbb{C} $ is an $I$-perfect code.
\end{proof}
	Since $ \mathbb{C}$ is a linear subspace of $\mathbb{Z}_m^N$, such type of a function $f : \bigoplus\limits_{j \notin I^*} \mathbb{Z}_m^{k_j} \rightarrow \bigoplus\limits_{i \in I^*} \mathbb{Z}_m^{k_i}$ is a linear transformation. 
\par We shall see that the example below, which assumes an ideal with a partial count, demonstrates that the aforementioned theorem need not apply.
\begin{example}\label{IperfectExam}
	{Consider the space $\mathbb{Z}_{10}^3$ with $k_i = 1 $  $\forall$ $i \in [3]$. Consider $\mathbb{C} = \{(0,0,0), (1,0,0), (2,0,0), (3,0,0), (4,0,0),(5,0,0), (6,0,0), (7,0,0), (8,0,0), (9,0,0),\\(0,5,0), (1,5,0), (2,5,0), (3,5,0), (4,5,0), (5,5,0), (6,5,0), (7,5,0), (8,5,0), (9,5,0) \}$. Let $\mathbb{P}$ be an antichain pomset defined on the regular multiset $M= \{5/1,5/2,5/3\}$.  Let $I = \{2/2,5/3\}$ be an ideal with partial count.}  
	For $ (a,0,0) \in \mathbb{C}$ where $a \in \mathbb{Z}_{10}$, the $I$-ball centered at $B_{I}(a,0,0) = \{ (a, b,c) : b \in\{ 0,\pm1, \pm 2 \} ~ \textit{and} ~ c \in \mathbb{Z}_{10}\}$ and  $|B_{I}(a,0,0)| = 50$. For $ (a,5,0) \in \mathbb{C}$ where $a \in \mathbb{Z}_{10}$, the $I$-ball centered at $B_{I}(a,5,0) = \{ (a, b,c) : b \in\{ \pm 3, \pm 4,5 \} ~ \textit{and} ~ c \in \mathbb{Z}_{10}\}$ and  $|B_{I}(a,5,0)| = 50$. Clearly, all balls are distinct and disjoint. Since $ |\mathbb{C}| |B_I| =20 \times 50 = 1000= |\mathbb{Z}_{10}^3| $. Hence, $  \mathbb{C}$ is an $I$-perfect. Take $x = (1,0,0)$, $y = (1,5,0)$  $ \in \mathbb{C}$. Here $v =(1) \in \bigoplus\limits_{j \notin I^*}  \mathbb{Z}_{10}^{k_j}$ in $x$ and $y$ but  image of $v$ is $(0,0)$ and $(5,0)$. 
\end{example}
 We derive a necessary and sufficient condition for a pomset block code $\mathbb{C}$ to be an $r$-error correcting code in terms of $I$-balls in the following Theorem. 
\begin{theorem}
	Let $\mathbb{C}$ be a $(Pm, \pi)$-code of length $N$ over $\mathbb{Z}_m$. Then $\mathbb{C}$ is an $r$-error correcting code if and only if for any two distinct codewords $u,v \in \mathbb{C}$, $u-v \notin B_{I \cup J}$ $\forall$ $I, J \in \mathcal{I}^r$.
\end{theorem}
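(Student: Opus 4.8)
The plan is to unpack both directions through the characterization of $r$-balls as unions of $I$-balls (Proposition~\ref{runnion}), since $r$-error correction is a statement about the radius-$r$ balls $B_{(Pm,\pi)}(u,r)$ being pairwise disjoint, and we want to reformulate it as a condition on differences $u-v$. First I would recall that $\mathbb{C}$ is $r$-error correcting exactly when $B_{(Pm,\pi)}(u,r)\cap B_{(Pm,\pi)}(v,r)=\emptyset$ for all distinct $u,v\in\mathbb{C}$. Using Proposition~\ref{runnion}, this intersection is $\bigcup_{I,J\in\mathcal{I}^r}\bigl(B_I(u)\cap B_J(v)\bigr)$, so the disjointness of the $r$-balls is equivalent to $B_I(u)\cap B_J(v)=\emptyset$ for every pair of ideals $I,J$ of cardinality $r$.

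Next I would translate $B_I(u)\cap B_J(v)=\emptyset$ into a statement about $u-v$. A point $w$ lies in $B_I(u)\cap B_J(v)$ iff $supp_{(Pm,\pi)}(u-w)\subseteq I$ and $supp_{(Pm,\pi)}(v-w)\subseteq J$; the obstruction is that $supp_{(Pm,\pi)}$ is not additive, so I cannot simply add the two containments. The key lemma to establish here is: such a $w$ exists if and only if $supp_{(Pm,\pi)}(u-v)\subseteq I\cup J$ (equivalently $u-v\in B_{I\cup J}$). The forward implication would use that, coordinatewise on each block, if $Max_c supp_L((u-w)_i)\le C_I(i)$ and $Max_c supp_L((v-w)_i)\le C_J(i)$, then the Lee weight of $(u-v)_i=(u-w)_i-(w-v)_i$ on each component is at most $\max\{C_I(i),C_J(i)\}=C_{I\cup J}(i)$ by the triangle inequality for the Lee weight on $\mathbb{Z}_m$, whence $c_i=Max_c supp_L((u-v)_i)\le C_{I\cup J}(i)$ and so $supp_{(Pm,\pi)}(u-v)\subseteq I\cup J$. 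For the converse, given $u-v\in B_{I\cup J}$ I would construct $w$ explicitly block by block: on blocks where the $i$-support of $u-v$ "fits inside $I$" put $w_i=v_i$, on the remaining blocks put $w_i=u_i$, and on blocks touched by neither keep things zero; one then checks $supp_{(Pm,\pi)}(u-w)\subseteq I$ and $supp_{(Pm,\pi)}(v-w)\subseteq J$, using that each component Lee weight of $(u-v)_i$ splitting below $C_{I\cup J}(i)=\max\{C_I(i),C_J(i)\}$ can be allocated to whichever of $I,J$ realizes that maximum.

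Combining the two steps: $\mathbb{C}$ is $r$-error correcting $\iff$ for all distinct $u,v\in\mathbb{C}$ and all $I,J\in\mathcal{I}^r$, $B_I(u)\cap B_J(v)=\emptyset$ $\iff$ for all distinct $u,v\in\mathbb{C}$ and all $I,J\in\mathcal{I}^r$, $u-v\notin B_{I\cup J}$, which is the claimed statement. I expect the main obstacle to be the converse direction of the key lemma — producing an actual common point $w$ of the two $I$-balls from the containment $supp_{(Pm,\pi)}(u-v)\subseteq I\cup J$ — because the partial-count behavior of ideals (noted in the remark after Proposition~\ref{full count theorem}) means $B_{I\cup J}$ need not be a module and the allocation of component weights between $I$ and $J$ must be done carefully, component by component within each block, rather than by a linear-algebra argument. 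It is worth double-checking whether $I\cup J$ itself must be taken as an ideal or merely as the mset union; since the statement quantifies over $I,J\in\mathcal{I}^r$ and writes $B_{I\cup J}$, I would note that $B_{I\cup J}$ here means the mset-union ball, and that $\langle I\cup J\rangle$ is not needed for the argument.
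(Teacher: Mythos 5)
Your reduction of $r$-error correction to the pairwise disjointness of $B_I(u)$ and $B_J(v)$ over all $I,J\in\mathcal{I}^r$ via Proposition~\ref{runnion} is sound, and your block-by-block construction of a common point $w$ from the hypothesis $u-v\in B_{I\cup J}$ is correct --- indeed it is more careful than the paper's own choice $x=u-(u-v)_{I^*\setminus J^*}$, which allocates whole blocks to $I$ or to $J$ and can fail when a block lies in both $I^*$ and $J^*$ with different counts (for instance $I=\{2/1,1/2\}$, $J=\{1/1,2/2\}$ in an antichain over $\mathbb{Z}_{11}$ with $u-v=(2,2)$ gives $x=u$, which is not within distance $3$ of $v$, even though a common point of the two $3$-balls does exist and your componentwise allocation finds it).

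The genuine gap is in the other half of your key lemma. From $Max_c\, supp_{L}((u-w)_i)\le C_I(i)$ and $Max_c\, supp_{L}((v-w)_i)\le C_J(i)$ you conclude $w_L((u-v)_{i_j})\le\max\{C_I(i),C_J(i)\}$ ``by the triangle inequality,'' but the Lee triangle inequality only yields the \emph{sum} $C_I(i)+C_J(i)$: Lee counts, unlike Hamming supports, genuinely add. Concretely, take $m=11$, $n=k_1=1$, $I=J=\{2/1\}$, $u=4$, $v=0$, $w=2$; then $w\in B_I(u)\cap B_J(v)$ while $supp_{(Pm,\pi)}(u-v)=\{4/1\}\not\subseteq\{2/1\}=I\cup J$. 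So ``the balls meet $\Rightarrow u-v\in B_{I\cup J}$'' is false, and no allocation argument can rescue it with the mset union; the operation that makes your lemma true is the mset sum, i.e.\ $u-v\in B_{I\oplus J}$. You should know that the paper's own proof of the converse commits the same error (it asserts $supp_{(Pm,\pi)}(u-v)\subseteq supp_{(Pm,\pi)}(u-y)\cup supp_{(Pm,\pi)}(v-y)$), and in fact the theorem as printed fails for $\mathbb{C}=\{0,4\}\subseteq\mathbb{Z}_{11}$ with $r=2$: the only ideal of cardinality $2$ is $\{2/1\}$ and $4\notin B_{\{2/1\}}$, yet $2\in B_{(Pm,\pi)}(0,2)\cap B_{(Pm,\pi)}(4,2)$. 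With $I\oplus J$ in place of $I\cup J$ your two-step strategy does go through.
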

\begin{proof}
	Assume that $\mathbb{C}$ is an $r$-error correcting code. Let $u,v \in \mathbb{C}$ such that $u\neq v$. Suppose that $u-v \in B_{I \cup J}$ for some  $I, J \in \mathcal{I}^r$. Choose $ x \in \mathbb{Z}_m^N$ such that $ x= u - (u-v)_{I^* \setminus J^*} $ where
	$(u-v)_{I^*\setminus J^*} $ means $u_j-v_j=0$ if $j \in J^*$.
	Then, $d_{(Pm,\pi)}(x,u) = | \langle supp_{(Pm,\pi)} (u-x)| = | \langle supp_{(Pm,\pi)} ( u-u+(u-v)_{I^* \setminus J^*} )| = | \langle supp_{(Pm,\pi)} ((u-v)_{I^* \setminus J^*} )| \leq  | \langle I \rangle|=r $. Thus, $x \in B_{(Pm,\pi)}(u,r)$. Now, $d_{(Pm,\pi)}(x,v) = | \langle supp_{(Pm,\pi)} (v-x)| = | \langle supp_{(Pm,\pi)} ( (u-v)- (u-v)_{I^* \setminus J^*} )| = | \langle supp_{(Pm,\pi)} ((u-v)_{J^* \setminus I^*} )| \leq  | \langle I \rangle|=r $. So that, $x \in B_{(Pm,\pi)}(v,r)$ also. It is a contradiction that $\mathbb{C}$ is an $r$-error correcting code.
	\par Conversely, assume that  $\mathbb{C}$ is not an $r$-error correcting code. There exist  two distinct codewords $u,v \in \mathbb{C}$ and $y \in \mathbb{F}_q^N$  such that 
	$y \in  B_{(Pm,\pi)}(u,r) \cap B_{(Pm,\pi)}(v,r)$. That is $| \langle supp_{(Pm,\pi)} (u-y) \rangle| \leq r$ and  $| \langle supp_{(Pm,\pi)} (v-y) \rangle| \leq r$. Let 
	$ \langle supp_{(Pm,\pi)} (u-y) \rangle =I$  and  $ \langle supp_{(Pm,\pi)} (u-y) \rangle =J$  where $|I| \leq r $ and $|J| \leq r$. (Ref. \cite{gsrs},  Proposition $4$), there exist an  ideal $I' \in \mathcal{I}^r $, $ J'\in \mathcal{I}^r $  such that $I \subseteq I'$ and $J \subseteq J'$. Then, $  supp_{(Pm,\pi)} (u-v)  = supp_{(Pm,\pi)} (u-y+y-v) \subseteq supp_{(Pm,\pi)} (u-y) \cup  supp_{(Pm,\pi)} (v-y) \subseteq I' \cup J' $. Hence, 
	$u-v \in B_{I'\cup J'}$ for some $I', J' \in \mathcal{I}^r $, contradiction.
\end{proof}
In particular, for $I=J $, the above theorem implies that for an $r$-error correcting pomset block code, the $I$-balls centered at the codewords of $\mathbb{C}$ are disjoint for each $I \in \mathcal{I}^r$.
\par Now, we will determine the connection between $r$-perfectness and $I$-perfectness of a pomset (block) codes in the following successive results.
 \begin{theorem}\label{r-perf}
 	Let  $\mathcal{I}^r(\mathbb{P}) =\{I\}$ for some $r \leq n \lfloor \frac{ m }{2} \rfloor$. Then a  $(Pm, \pi)$-code $\mathbb{C}$ is $r$-perfect if and only if $\mathbb{C}$ is $I$-perfect.
 \end{theorem}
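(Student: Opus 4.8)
The plan is to observe that the hypothesis collapses the radius-$r$ ball to a single $I$-ball, after which the two notions of perfectness become verbatim identical. Concretely, the first step is to invoke Proposition \ref{runnion}: for every $u \in \mathbb{Z}_m^N$ one has $B_{(Pm,\pi)}(u,r) = \bigcup_{J \in \mathcal{I}^r(\mathbb{P})} B_J(u)$. Since by assumption $\mathcal{I}^r(\mathbb{P}) = \{I\}$, the union on the right has only one term, so $B_{(Pm,\pi)}(u,r) = B_I(u)$ for all $u$. If one prefers a self-contained check rather than citing Proposition \ref{runnion}: the set $\mathcal{I}^r(\mathbb{P})$ collects the ideals of cardinality $r$, so the unique ideal $I$ satisfies $|I| = r$; for $v \in B_I(u)$ we then have $d_{(Pm,\pi)}(u,v) = |\langle supp_{(Pm,\pi)}(u-v)\rangle| \le |I| = r$, giving $B_I(u) \subseteq B_{(Pm,\pi)}(u,r)$, and the reverse inclusion is exactly the statement of Proposition \ref{runnion} in this singleton case.

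With the identity $B_{(Pm,\pi)}(c,r) = B_I(c)$ in hand, the second step is simply to compare the two definitions. The family $\{B_{(Pm,\pi)}(c,r) : c \in \mathbb{C}\}$ and the family $\{B_I(c) : c \in \mathbb{C}\}$ are literally the same collection of subsets of $\mathbb{Z}_m^N$. Hence the first family consists of pairwise disjoint sets (resp. has union equal to $\mathbb{Z}_m^N$) if and only if the second family does. By the definitions of $r$-perfect and of $I$-perfect, this is precisely the assertion that $\mathbb{C}$ is $r$-perfect if and only if $\mathbb{C}$ is $I$-perfect, which completes the argument in both directions at once.

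I do not anticipate a genuine obstacle here: the theorem is essentially an unwinding of definitions once Proposition \ref{runnion} is available. The only point requiring a little care is confirming that $|I| = r$, so that the radius-$r$ ball and the $I$-ball truly coincide rather than one merely containing the other; this is immediate from the definition of $\mathcal{I}^r(\mathbb{P})$.
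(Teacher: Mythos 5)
Your proposal is correct and follows essentially the same route as the paper: both invoke Proposition \ref{runnion} together with the hypothesis $\mathcal{I}^r(\mathbb{P})=\{I\}$ to identify $B_{(Pm,\pi)}(c,r)$ with $B_I(c)$, after which the two notions of perfectness coincide by definition. Your explicit check that $|I|=r$ gives $B_I(u)\subseteq B_{(Pm,\pi)}(u,r)$ is a reasonable extra verification, and your treatment handles both directions symmetrically where the paper writes out only one.
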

 \begin{proof}
 Let $ \mathbb{C} $ be an $r$-perfect $(Pm, \pi)$-code, 
 	$ \mathbb{Z}_m^N = \bigcup\limits_{c \in  \mathbb{C} }^{\circ}B_{(Pm,\pi)}(c,r) $.  From Proposition \ref{runnion}, we have
 	$B_{(Pm,\pi)}(c,r)= \bigcup\limits_{I \in \mathcal{I}^r(P)}B_{I}(c)$. As  $\mathcal{I}^r(\mathbb{P}) =\{I\}$, we have $ \mathbb{Z}_m^N = \bigcup\limits_{c \in  \mathbb{C} }^{\circ}B_{I}(c) $. Hence, $\mathbb{C}$ is an $I$-perfect $(Pm, \pi)$-code.
 \end{proof}
 \begin{theorem}
 	A $(Pm, \pi)$-code $ \mathbb{C} $ of cardinality $m^k$ over  $ \mathbb{Z}_{m}$ is $(N-k)$-perfect if and only if  $\mathbb{C}$ is an $I$-perfect $(Pm, \pi)$-code and $ \mathcal{I}^{N-k}  =\{I\}$.
 \end{theorem}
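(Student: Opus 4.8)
The plan is to derive both implications from Theorem~\ref{r-perf} with $r:=N-k$; essentially all the work is in one step of the forward direction.

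\emph{Sufficiency.} Assume $\mathbb C$ is $I$-perfect and $\mathcal I^{\,N-k}(\mathbb P)=\{I\}$. Since $I$ is an ideal, $r=|I|\le|M|=n\lfloor m/2\rfloor$, so Theorem~\ref{r-perf} applies to this $r$ and this singleton family: $\mathbb C$ is $r$-perfect $\iff$ $\mathbb C$ is $I$-perfect. As $\mathbb C$ is $I$-perfect by hypothesis, it is $r$-perfect, i.e.\ $(N-k)$-perfect.

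\emph{Necessity.} Assume $\mathbb C$, of cardinality $m^k$, is $(N-k)$-perfect; put $r=N-k$, so $|\mathbb C|=m^{N-r}$. I would first show that $\mathcal I^{\,r}(\mathbb P)$ is a singleton, say $\{I\}$ with $|I|=r$; then Theorem~\ref{r-perf}, applied in the other direction, turns ``$\mathbb C$ is $r$-perfect'' into ``$\mathbb C$ is $I$-perfect'', and we are done. For the singleton claim, note that the $(Pm,\pi)$-balls of radius $r$ are translates of $B_{(Pm,\pi)}(0,r)$ and, by $r$-perfectness, partition $\mathbb Z_m^N$, so $|B_{(Pm,\pi)}(0,r)|=m^N/|\mathbb C|=m^{r}$. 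By Proposition~\ref{runnion}, and since every ideal of cardinality $\le r$ extends to an ideal of cardinality exactly $r$, we have $B_{(Pm,\pi)}(0,r)=\bigcup_{J\in\mathcal I^{\,r}(\mathbb P)}B_J$. Each such $I$-ball is a ``box'' $B_J=\prod_{i=1}^{n}D_i\big(C_J(i)\big)$, where $D_i(t)=\{v_i\in\mathbb Z_m^{k_i}: Max_c\,supp_{L}(v_i)\le t\}$ has size $\big(\min\{2t+1,m\}\big)^{k_i}$, strictly increasing in $t$ on $\{0,\dots,\lfloor m/2\rfloor\}$; hence $B_{J_1}\subseteq B_{J_2}\iff J_1\subseteq J_2$. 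If there were two distinct ideals $I_1\ne I_2$ of cardinality $r$, then, neither being contained in the other, $B_{(Pm,\pi)}(0,r)$ would properly contain both $B_{I_1}$ and $B_{I_2}$, and a count of $\bigcup_{J\in\mathcal I^{\,r}}B_J$ using the box structure (and the fact that this union must also absorb the boxes of all ideals of smaller cardinality) would force $|B_{(Pm,\pi)}(0,r)|\ne m^{r}$, contradicting the equality above. Thus $\mathcal I^{\,r}(\mathbb P)=\{I\}$.

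I expect this last count to be the main obstacle: converting ``$B_{(Pm,\pi)}(0,r)$ is a union of at least two incomparable boxes, each of size at most $m^{r}$'' into the clean statement that its cardinality is not the prime power $m^{r}$. One concrete route is the blockwise inclusion--exclusion for $\bigcup_{J\in\mathcal I^{\,r}}B_J$, fixing blocks $j,j'$ with $C_{I_1}(j)>C_{I_2}(j)$ and $C_{I_1}(j')<C_{I_2}(j')$ and comparing the resulting alternating sum of products of the numbers $\min\{2t+1,m\}^{k_i}$ with $m^{r}$; another is to use the characterization of $r$-error-correcting $(Pm,\pi)$-codes proved above (every nonzero codeword of $\mathbb C$, once translated so that $0\in\mathbb C$, lies outside $B_{I_1\cup I_2}$) and to play this packing restriction against the covering identity $\bigcup_{c\in\mathbb C}B_{(Pm,\pi)}(c,r)=\mathbb Z_m^N$. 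Everything outside this pomset-combinatorial bookkeeping is formal.
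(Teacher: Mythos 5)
Your sufficiency direction is fine and matches the paper (both just invoke Theorem~\ref{r-perf}). The necessity direction, however, has a genuine gap: the entire argument hinges on showing $\mathcal I^{\,N-k}(\mathbb P)=\{I\}$ \emph{before} you know anything about $I$-perfectness, and the counting step that is supposed to deliver this is exactly the step you leave open. Worse, the auxiliary claim you want to feed into that count --- that each box $B_J$ with $|J|=r$ has size at most $m^{r}$ --- is false in the block setting: a block $i$ carrying the full count $\lfloor m/2\rfloor$ contributes a factor $m^{k_i}$ to $|B_J|$ but only $\lfloor m/2\rfloor$ to $|J|$, so $|B_J|$ can exceed $m^{|J|}$ whenever some $k_i>\lfloor m/2\rfloor$. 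Without a correct a priori bound on $|B_J|$, neither of the two routes you sketch (blockwise inclusion--exclusion, or playing the packing condition against the covering identity) is obviously going to close; as written, the proof does not establish the singleton claim.

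The paper avoids this by reversing the order of the two subclaims. First it shows that $(N-k)$-perfectness forces $\mathbb C$ to be $I$-perfect for \emph{every} $I\in\mathcal I^{\,N-k}$: if two distinct codewords $b,c$ satisfied $\langle supp_{(Pm,\pi)}(b-c)\rangle\subseteq I$, then $c\in B_{(Pm,\pi)}(b,N-k)$, contradicting disjointness of the $r$-balls. Once $\mathbb C$ is known to be $I$-perfect with $|\mathbb C|=m^{k}$, the size of the $I$-ball is pinned down, $|B_I|=m^{N-k}$, and the singleton claim becomes a one-line volume comparison: a second ideal $J\neq I$ of the same cardinality satisfies $J\not\subseteq I$, so $B_{(Pm,\pi)}(0,N-k)\supseteq B_I\cup B_J\supsetneq B_I$ gives $|B_{(Pm,\pi)}(0,N-k)|>m^{N-k}$, which is incompatible with $m^{k}$ pairwise disjoint $r$-balls inside $\mathbb Z_m^{N}$. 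If you restructure your necessity argument this way, the hard count you flagged disappears entirely.
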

 \begin{proof}
 	Let $ \mathbb{C} $ be $(N-k)$-perfect.  Suppose that $\mathbb{C}$ is not $I$-perfect for some $I \in \mathcal{I}^{N-k}$, then there exist two distinct codewords  $b, c \in \mathbb{C} $ such that 
 	$ \langle supp_{(Pm,\pi)}(b-c) \rangle \subseteq I$. Thus, $ |\langle supp_{(Pm,\pi)}(b-c) \rangle | \leq N-k$ and so $c \in B_{(Pm,\pi)}{(b, N-k)}$ which is condradiction of $ \mathbb{C} $ is $(N-k)$-perfect. 
 	Now, suppose that $\mathbb{C}$ is $I$-perfect and  $\{I\} \subsetneq \mathcal{I}^{N-k}$ which imply  $ |B_{(Pm,\pi)}{(b, N-k)}| > |B_{I}{(b)}| = m^{N-k}$. Again, $(Pm, \pi)$-code $ \mathbb{C} $ is not $(N-k)$-perfect, a contradiction. The converse follows straight forword from Theorem \ref{r-perf}.
 \end{proof}
 \begin{remark}
 	If $ |\mathcal{I}^{N-k}| \geq 2 $ then there does not exixt any $(N-k)$-perfect $(Pm, \pi)$-code  $ \mathbb{C} $ of cardinality $m^k$ over  $ \mathbb{Z}_{m}$.
 \end{remark}
 \begin{theorem}\label{rperfect in pomset}
	Let $ \mathbb{C} $ be a pomset code of length $n$ over $\mathbb{Z}_{m}$ of cardinality $m^k$. Then $ \mathbb{C} $ is $(n-k)$-perfect if and only if  $\mathbb{C}$ is an $I$-perfect pomset code and $ \mathcal{I}^{n-k}  =\{I\}$.
\end{theorem}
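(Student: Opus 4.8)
The plan is to obtain this as the $\pi\equiv 1$ specialization of the previous theorem. Choosing the label map $\pi$ with $\pi(i)=1$ for every $i\in[n]$ gives $N=k_1+\dots+k_n=n$, and by Remark~\ref{r3} the $(Pm,\pi)$-metric on $\mathbb{Z}_m^{N}=\mathbb{Z}_m^{n}$ is precisely the pomset metric $d_{Pm}$. Hence $(Pm,\pi)$-balls, $I$-perfect $(Pm,\pi)$-codes, and $\mathcal{I}^{N-k}$ become, respectively, pomset balls, $I$-perfect pomset codes, and $\mathcal{I}^{n-k}$; with these identifications the statement of the previous theorem is word-for-word the statement to be proved. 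So the formal proof is one line, but it is worth recording how the direct argument runs.

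For the ``if'' direction, suppose $\mathcal{I}^{n-k}(\mathbb{P})=\{I\}$ and $\mathbb{C}$ is $I$-perfect. By Proposition~\ref{runnion}, $B_{Pm}(c,n-k)=\bigcup_{J\in\mathcal{I}^{n-k}(\mathbb{P})}B_J(c)=B_I(c)$ for every $c\in\mathbb{C}$, so the $(n-k)$-balls centred at the codewords are exactly the $I$-balls centred at the codewords; the latter partition $\mathbb{Z}_m^{n}$ by hypothesis, hence so do the former, and $\mathbb{C}$ is $(n-k)$-perfect. This is just Theorem~\ref{r-perf} applied with $r=n-k$.

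For the ``only if'' direction, suppose $\mathbb{C}$ is $(n-k)$-perfect with $|\mathbb{C}|=m^{k}$. Since $B_{Pm}(c,r)=c+B_{Pm}(0,r)$, all $(n-k)$-balls have the same cardinality, so a partition of $\mathbb{Z}_m^{n}$ into $m^{k}$ of them forces $|B_{Pm}(0,n-k)|=m^{n-k}$. By Proposition~\ref{runnion}, $B_{Pm}(0,n-k)=\bigcup_{J\in\mathcal{I}^{n-k}(\mathbb{P})}B_J$. Every ideal $J$ with $|J|=n-k$ is realised as $J=\langle supp_{Pm}(v)\rangle$ for a suitable $v$ (take $v_i$ of Lee weight $C_J(i)$ at each $i\in J^{*}$), so $B_J\subseteq B_{J'}$ forces $J\subseteq J'$, and then $|J|=|J'|$ forces $J=J'$; hence distinct cardinality-$(n-k)$ ideals yield $I$-balls none of which contains another, and in particular, if there were more than one, each would sit properly inside $B_{Pm}(0,n-k)$. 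It remains to deduce from $|B_{Pm}(0,n-k)|=m^{n-k}$ that $\mathcal{I}^{n-k}(\mathbb{P})$ is a singleton $\{I\}$; once this is known, $B_{Pm}(0,n-k)=B_I$, the $(n-k)$-balls centred at codewords coincide with the $I$-balls centred at codewords, and $\mathbb{C}$ is $I$-perfect.

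The step I expect to be the main obstacle is exactly this last one, i.e.\ showing $|\mathcal{I}^{n-k}(\mathbb{P})|=1$, which is the pomset analogue of the Remark following the previous theorem (namely, that no $(n-k)$-perfect pomset code of cardinality $m^{k}$ can exist once $|\mathcal{I}^{n-k}|\ge 2$). For an ideal $I$ with a partial count, $B_I$ is not a submodule (see the Remark after Proposition~\ref{full count theorem}), so the module-index bookkeeping available in the full-count (poset-block) setting cannot be invoked directly; instead one has to argue combinatorially, through the sphere decomposition $B_{Pm}(0,n-k)=\bigsqcup_{J\in\mathcal{I}(\mathbb{P}),\,|J|\le n-k}S_J$ together with the non-nestedness of the $I$-balls just noted, to see that $|B_{Pm}(0,n-k)|$ can equal $m^{n-k}$ only when there is a single ideal of cardinality $n-k$. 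All the remaining verifications (that the chosen $v$ satisfies $\langle supp_{Pm}(v)\rangle=J$, that translates of balls are balls, and so on) are routine.
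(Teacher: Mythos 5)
Your one-line argument is exactly the paper's: the paper offers no separate proof of this theorem, treating it as the $\pi\equiv 1$ specialization ($k_i=1$ for all $i$, so $N=n$ and, by Remark~\ref{r3}, the $(Pm,\pi)$-metric is the pomset metric) of the preceding theorem on $(N-k)$-perfect $(Pm,\pi)$-codes, so your proposal is correct and takes the same approach. The supplementary ``direct argument'' you sketch is not needed for the proof and, as you yourself flag, leaves its key step open (that $|B_{Pm}(0,n-k)|=m^{n-k}$ forces $|\mathcal{I}^{n-k}(\mathbb{P})|=1$), but since the formal proof rests on the specialization this does not introduce a gap.
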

Given an ideal with partial and full counts in the next successive results, we continue to look at the relationship between MDS and $I$-perfect codes.
\begin{theorem}  If $\mathbb{C}$ is an MDS block code of length $N$ over $\mathbb{Z}_m$ with cardinality $m^k$ for some $k > 0$  then $\mathbb{C}$ is an $I$-perfect for all ideals $I \in \mathcal{I}^{r \lfloor \frac{m}{2} \rfloor }$ with full count. 
\end{theorem}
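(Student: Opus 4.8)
Write $k=\lceil\log_m|\mathbb C|\rceil$ (so that $|\mathbb C|=m^{k}$), $d=d_{(Pm,\pi)}(\mathbb C)$, and $r=\big\lfloor(d-1)/\lfloor m/2\rfloor\big\rfloor$. By the definition of an MDS $(Pm,\pi)$-code together with Theorem \ref{singlB}, we have $\max_{J\in\mathcal I_{*r}^{\,r\lfloor m/2\rfloor}}\sum_{i\in J^{*}}k_{i}=N-k$. Now fix an arbitrary ideal $I\in\mathcal I^{\,r\lfloor m/2\rfloor}$ with full count. Since $C_{I}(i)=\lfloor m/2\rfloor$ for every $i\in I^{*}$, we get $r\lfloor m/2\rfloor=|I|=\lfloor m/2\rfloor\,|I^{*}|$, hence $|I^{*}|=r$ and $I\in\mathcal I_{*r}^{\,r\lfloor m/2\rfloor}$. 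The plan is to verify the two conditions of Lemma \ref{|B int C|=1} for this $I$, namely the packing condition $|\mathbb C\cap B_{I}|=1$ and the covering condition $\sum_{i\in I^{*}}k_{i}=N-k$; the theorem then follows from that lemma, since $I$ is arbitrary.

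The packing condition is the easy half. As $I$ is an ideal with full count, $B_{I}=\{v\in\mathbb Z_{m}^{N}:\langle supp_{(Pm,\pi)}(v)\rangle\subseteq I\}$. If some nonzero $c\in\mathbb C$ lay in $B_{I}$, then $\langle supp_{(Pm,\pi)}(c)\rangle\subseteq I$ would force $w_{(Pm,\pi)}(c)=|\langle supp_{(Pm,\pi)}(c)\rangle|\le|I|=r\lfloor m/2\rfloor\le d-1<d$, contradicting the definition of $d$. Thus $\mathbb C\cap B_{I}=\{0\}$; note this used only the inequality $r\lfloor m/2\rfloor<d$, which is built into $r=\lfloor(d-1)/\lfloor m/2\rfloor\rfloor$.

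For the covering condition, Theorem \ref{singlB} already gives $\sum_{i\in I^{*}}k_{i}\le\max_{J\in\mathcal I_{*r}^{\,r\lfloor m/2\rfloor}}\sum_{j\in J^{*}}k_{j}=N-k$, so the real content is the reverse inequality: in an MDS $(Pm,\pi)$-code \emph{every} full-count ideal of cardinality $r\lfloor m/2\rfloor$ saturates the Singleton bound, not merely the distinguished one supplied by the definition of ``MDS''. When the blocks all have equal length, say $\pi(i)=s$ for every $i$, this is immediate, since then $\sum_{i\in I^{*}}k_{i}=rs$ is the same for every full-count ideal with $|I^{*}|=r$, so equality for one such ideal forces it for all. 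In the general case I would argue by contradiction: assuming $\sum_{i\in I^{*}}k_{i}<N-k$, the packing just established makes the coordinate projection $\mathbb Z_{m}^{N}\to\bigoplus_{j\notin I^{*}}\mathbb Z_{m}^{k_{j}}$ injective on $\mathbb C$, and, combining this with the $J_{0}$-perfectness of $\mathbb C$ for a saturating ideal $J_{0}$ (Lemma \ref{|B int C|=1}) and tracking which tuples the balls $B_{I}(c)$, $c\in\mathbb C$, can and cannot reach, one should be forced either to exhibit a nonzero codeword of $(Pm,\pi)$-weight at most $r\lfloor m/2\rfloor<d$ or to contradict $|\mathbb C|=m^{k}$. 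I expect this uniformity of the block-dimension sums over all size-$r$ down-sets of the induced poset $P$ to be the main obstacle; with it and the packing condition in hand, Lemma \ref{|B int C|=1} yields that $\mathbb C$ is $I$-perfect, completing the proof.
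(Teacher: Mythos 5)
Your packing argument is correct, and you have put your finger on exactly the right issue: the covering condition $\sum_{i\in I^{*}}k_{i}=N-k$ must hold for \emph{every} full-count ideal $I$ of cardinality $r\lfloor m/2\rfloor$, whereas the definition of MDS only supplies \emph{one} ideal $J$ attaining the maximum in the Singleton bound. The paper's own proof skips this point entirely: it asserts without justification that the number $l$ of $I$-balls equals $|\mathbb{C}|$, and later writes $|\mathbb{C}|\,|B_I| = m^{k}\, m^{\sum_{i\in I^{*}}k_{i}} = m^{N}$, which is precisely the unproved covering equality for the arbitrary ideal $I$. So you are not missing a trick that the paper has.

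The reason you could not close the gap is that it cannot be closed: the uniformity of $\sum_{i\in I^{*}}k_{i}$ over all full-count ideals of cardinality $r\lfloor m/2\rfloor$ fails in general, and with it the theorem as stated. Take $m=3$ (so $\lfloor m/2\rfloor=1$), the antichain pomset on $M=\{1/1,1/2\}$, $k_{1}=2$, $k_{2}=1$, $N=3$, and the linear code $\mathbb{C}=\{(0,0,0),(1,0,1),(2,0,2)\}$. Both nonzero codewords have $(Pm,\pi)$-weight $2$, so $d_{(Pm,\pi)}(\mathbb{C})=2$ and $r=1$; the ideal $J=\{1/1\}$ gives $\sum_{i\in J^{*}}k_{i}=2=N-k$, so $\mathbb{C}$ is MDS. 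Yet for the full-count ideal $I=\{1/2\}\in\mathcal{I}^{1}$ one has $|\mathbb{C}|\,|B_{I}|=3\cdot 3<3^{3}$, so the $I$-balls cannot cover $\mathbb{Z}_{3}^{3}$ and $\mathbb{C}$ is not $I$-perfect. (The paper's own illustrative example exhibits the same defect: there $I_{2}=\{2/1,2/3\}$ has $k_{1}+k_{3}=3\neq 6=N-k$, so $|\mathbb{C}|\,|B_{I_{2}}|=5^{4}<5^{7}$ and $\mathbb{C}$ cannot be $I_{2}$-perfect, contrary to what is claimed.) The statement is salvageable exactly in the cases you isolated --- equal block lengths, a chain pomset, or more generally the extra hypothesis that every full-count $I\in\mathcal{I}^{r\lfloor m/2\rfloor}$ satisfies $\sum_{i\in I^{*}}k_{i}=N-k$ --- and under any of these your two-condition argument via Lemma \ref{|B int C|=1} goes through and coincides with what the paper's proof intends.
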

\begin{proof}
	Let $I \in \mathcal{I}^{r \lfloor \frac{m}{2} \rfloor }$ be an ideal with full count. From Proposition \ref{full count theorem},  $\mathbb{Z}_m^N$ can be partitioned into $I$-balls. Let $l$ be the number of $I$-balls, so we have $l|B_I| = m^N$ and $l = | \mathbb{C} | $. Since $\mathbb{C} $ is  MDS then there exist an ideal $J \in I_{*r}^{t}$ such that $\sum\limits_{i \in J^*}k_i = N-k$ and $| J^*| = \big\lfloor \frac{d_{(Pm,\pi)} \mathbb{(C)}-1}{ \big\lfloor \frac{ m }{2} \big\rfloor }\big\rfloor $. If $d_{(Pm,\pi)}(\mathbb{C}) \leq |J|$ then $d_{(Pm,\pi)}(\mathbb{C}) \leq \big\lfloor\frac{ m }{2}\big \rfloor  |J^*|$, $\frac{d_{(Pm,\pi)} \mathbb{(C)}-1}{ \big\lfloor \frac{ m }{2} \big\rfloor } < |J^*|$, not possible. Thus, $d_{(Pm,\pi)}(\mathbb{C}) > |I|$.  Since $I$ is an ideal with full count, then again from Proposition \ref{full count theorem}, any two $I$-balls centered at distinct codewords of $\mathbb{C}$ must be disjoint and  $|\mathbb{C}||B_I| = m^{k} m^{\sum\limits_{i \in I^*}k_i} = m^{N}$.  Hence, $\mathbb{C}$ is $I$-perfect.
\end{proof}
\begin{example}
	{Let $M=\{2/1, 2/2,2/3\}$ be a regular multiset on $\{1,2,3\}$ and  $R$ be a pomset relation defined on $M$ such that $R= \{4/(2/1, 2/1), 4/(2/2,  2/2),4/(2/3, 2/3), 4/(2/1,  2/2), 4/(2/1,  2/3)\}$.
	Consider the space  $\mathbb{Z}_5^{7}= \mathbb{Z}_5^2 \oplus \mathbb{Z}_5^4 \oplus  \mathbb{Z}_5^1$ with the $(Pm, \pi)$-metric where $k_1=2,k_2=4,$ and $k_3=1$.
	Let $\mathbb{C} = \{(0,0,0,0,0,0,0),(0,3,0,2,0,0,1), (0,1,0,4,0,0,2),  (0,4,0,1,0,0,3), (0,\\2,0,3,0,0,4)\}$ be a linear code. Here $|\mathbb{C}|=5$ and $w_{(Pm,\pi)}(0,3,0,2,0,0,1)=w_{(Pm,\pi)}(0,1,0,4,0,0,2)=w_{(Pm,\pi)}(0,4,0,1,0,0,3)=w_{(Pm,\pi)}(0,2,0,3,0,0,4)=5$}. Thus, $d_{(Pm,\pi)}(\mathbb{C})=5$. We have $r= \big\lfloor \frac{d_{(Pm,\pi)} \mathbb{(C)}-1}{ \big\lfloor \frac{ m }{2} \big\rfloor }\big\rfloor=2$ and $\mathcal{I}_{*r}^t=\{ \{2/1, 2/2\}, \{2/1, 2/3\},\{2/1,1/2\},\{2/1, 1/3\}\}$. Clearly, $ \max\limits_{J \in  \mathcal{I}_{*r}^{t}}  \big\{\sum_{i \in J^*} k_{i}\big\} =6=N-k$, $\mathbb{C}$ is MDS. In $\mathbb{P}$, $I_1=\{2/1, 2/2\}$ and $ I_2=\{2/1, 2/3\}$ are ideals with full count, and $I_3=\{2/1,1/2\}$, $I_4=\{2/1, 1/3\}$ are ideals with partial count. One can see that $\mathbb{C}$ is $I_1$-perfect as well as $I_2$-perfect, but neither $I_3$-perfect nor $I_4$-perfect.
\end{example}
\begin{theorem} \label{Iperfect imply MDS}
	Let $\mathbb{C} $ be a pomset block code of length $N$ over $\mathbb{Z}_m$. If $\mathbb{C}$ is $I$-perfect for all ideals $I \in \mathcal{I}^{\lfloor \frac{m}{2} \rfloor (N- \lceil \log_m|\mathbb{C}| \rceil)}$ then $\mathbb{C}$ is MDS.
\end{theorem}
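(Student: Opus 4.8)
Write $d:=d_{(Pm,\pi)}(\mathbb C)$, $r:=\big\lfloor\frac{d-1}{\lfloor m/2\rfloor}\big\rfloor$ and $r_0:=N-\lceil\log_m|\mathbb C|\rceil$; the aim is to exhibit an ideal $J\in\mathcal I_{*r}^{t}$ with $t\le d-1$ and $\sum_{i\in J^{*}}k_i=r_0$, which is exactly the MDS condition. We may assume $\mathcal I^{\lfloor m/2\rfloor r_0}(\mathbb P)\neq\emptyset$ (otherwise the hypothesis is vacuous), so $\lfloor m/2\rfloor r_0\le|M|$ and hence $r_0\le n$. The first thing I would do is build a convenient witness: since every finite poset has an order ideal of each size between $0$ and $n$ (take the first $r_0$ terms of a linear extension of the underlying poset $P$), let $I_0$ be the \emph{full-count} ideal of $\mathbb P$ whose root set $I_0^{*}$ is such an order ideal of size $r_0$. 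Then $|I_0^{*}|=r_0$ and $|I_0|=\lfloor m/2\rfloor r_0$, so by hypothesis $\mathbb C$ is $I_0$-perfect.

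Next I would pin down $|\mathbb C|$. By Proposition \ref{full count theorem} the $I_0$-balls are precisely the $m^{\,N-\sum_{i\in I_0^{*}}k_i}$ cosets of the submodule $B_{I_0}$, and $I_0$-perfectness forces exactly one codeword of $\mathbb C$ in each coset; hence $|\mathbb C|=m^{\,N-\sum_{i\in I_0^{*}}k_i}$ is a power of $m$, say $|\mathbb C|=m^{k}$, and consequently $\sum_{i\in I_0^{*}}k_i=N-k=r_0$.

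The core step is to show $d>\lfloor m/2\rfloor r_0$. If not, pick distinct $c_1,c_2\in\mathbb C$ with $w_{(Pm,\pi)}(c_1-c_2)=d$ and set $I_1:=\langle supp_{(Pm,\pi)}(c_1-c_2)\rangle$, an ideal of cardinality $d\le\lfloor m/2\rfloor r_0\le|M|$. By the ideal-extension property (\cite{gsrs}, Proposition~4), $I_1$ is contained in some ideal $I'$ with $|I'|=\lfloor m/2\rfloor r_0$; but then $c_1,c_2\in B_{I'}(c_1)$ with $c_1\neq c_2$, contradicting that $\mathbb C$ is $I'$-perfect. Hence $d-1\ge\lfloor m/2\rfloor r_0$, i.e.\ $r\ge r_0$. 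Conversely, Theorem \ref{singlB} gives $\sum_{i\in J^{*}}k_i\le r_0$ for every $J\in\mathcal I_{*r}^{t}$, and since each $k_i\ge1$ this forces $r=|J^{*}|\le r_0$; so $r=r_0$.

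Finally I would assemble the pieces: $|I_0^{*}|=r_0=r$ and $|I_0|=\lfloor m/2\rfloor r_0\le d-1$, so $I_0\in\mathcal I_{*r}^{t}$ with $t=|I_0|\le d-1$, while $\sum_{i\in I_0^{*}}k_i=r_0=N-\lceil\log_m|\mathbb C|\rceil$; thus $I_0$ attains the Singleton bound of Theorem \ref{singlB} and $\mathbb C$ is MDS. I expect the main obstacle to be the inequality $d>\lfloor m/2\rfloor r_0$ in the third paragraph: it is the only place where one genuinely needs $I$-perfectness for \emph{all} ideals of cardinality $\lfloor m/2\rfloor r_0$ rather than a single one, used together with the fact that every ideal of smaller cardinality extends to one of cardinality exactly $\lfloor m/2\rfloor r_0$. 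A secondary point requiring care is the existence of a full-count ideal of the prescribed cardinality, which is why one first reduces to $r_0\le n$; the degenerate case $r_0=0$ (where $\mathbb C=\mathbb Z_m^N$) is immediate.
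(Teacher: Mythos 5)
Your proposal is correct and its core argument---deriving $d_{(Pm,\pi)}(\mathbb{C}) > \lfloor \frac{m}{2}\rfloor(N-\lceil\log_m|\mathbb{C}|\rceil)$ by extending the ideal generated by the support of a minimum-distance pair to an ideal of cardinality $\lfloor \frac{m}{2}\rfloor(N-\lceil\log_m|\mathbb{C}|\rceil)$ and contradicting $I$-perfectness---is exactly the paper's proof. Your additional bookkeeping (the explicit full-count witness $I_0$, the determination $|\mathbb{C}|=m^k$, and the check that $r=r_0$) is correct but not needed in the paper's version, which instead closes with the simpler observation that $\sum_{j\in J^*}k_j\ge|J^*|=r\ge N-\lceil\log_m|\mathbb{C}|\rceil$ for any $J\in\mathcal{I}_{*r}^{t}$, forcing equality with the Singleton bound.
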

\begin{proof}
	Let $\mathbb{C}$ be $I$-perfect for all ideals $I \in \mathcal{I}^{\lfloor \frac{m}{2} \rfloor (N-\lceil \log_m|\mathbb{C}| \rceil )}$. Suppose that $x$ and $y$ are two distinct codewords in $\mathbb{C}$ such that $ d_{(Pm,\pi)}(x,y) \leq  \big\lfloor \frac{m}{2} \big\rfloor (N-\lceil \log_m|\mathbb{C}| \rceil ) $. Let $ \langle supp_{(Pm,\pi)}{(x-y)} \rangle = J$ then $x-y \in B_J $. (Ref. \cite{gsrs}, by Proposition $3$), there exist an ideal $I$ of cardinality ${\lfloor \frac{m}{2} \rfloor }(N-\lceil \log_m|\mathbb{C}| \rceil )$  containing $J$ such that  $J \subseteq I$. Thus, $x-y \in B_I $ and $x \in B_I(y) $. So, $\mathbb{C}$ would not be $I$-perfect, a contradiction. We have $ d_{(Pm,\pi)}(\mathbb{C}) >  \big\lfloor \frac{m}{2} \big\rfloor (N-\lceil \log_m|\mathbb{C}| \rceil ) $, then  
	$ \frac{d_{(Pm,\pi)}(\mathbb{C}) - 1}{\big\lfloor \frac{m}{2} \big\rfloor } \geq  N-\lceil \log_m|\mathbb{C}| \rceil $. Thus,  $ \big\lfloor \frac{d_{(Pm,\pi)}(\mathbb{C}) - 1}{\big\lfloor \frac{m}{2} \big\rfloor } \big\rfloor \geq  N-\lceil \log_m|\mathbb{C}| \rceil  $. Then, we have 
		$	\max\limits_{J \in  \mathcal{I}_{*r}^{t}} \sum_{j \in J^*} k_{j} \geq \big\lfloor \frac{d_{(Pm,\pi)}(\mathbb{C}) - 1}{\big\lfloor \frac{m}{2} \big\rfloor } \big\rfloor \geq N-\lceil \log_m|\mathbb{C}| \rceil $.
	Hence, $\mathbb{C}$ is MDS.
\end{proof}
\begin{theorem} \label{Iperfect imply MDS same length}
	Let $\mathbb{C} $ be a pomset block code of length $N$ over $\mathbb{Z}_m$ with cardianlity $m^k$. If $\mathbb{C}$ is $\lfloor \frac{m}{2} \rfloor (N-k)$-perfect then $\mathbb{C}$ is MDS.
\end{theorem}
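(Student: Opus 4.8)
The plan is to show that the $\lfloor\frac{m}{2}\rfloor(N-k)$-perfectness of $\mathbb{C}$ forces its minimum $(Pm,\pi)$-distance to be so large that $\mathbb{C}$ meets the Singleton bound of Theorem \ref{singlB}. Put $\rho=\lfloor\frac{m}{2}\rfloor(N-k)$. Since $|\mathbb{C}|=m^k$ we have $\lceil\log_m|\mathbb{C}|\rceil=k$, so the hypothesis says precisely that $\mathbb{C}$ is $\rho$-perfect; in particular the $(Pm,\pi)$-balls of radius $\rho$ centered at the codewords of $\mathbb{C}$ are pairwise disjoint.

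The first step is to deduce $d_{(Pm,\pi)}(\mathbb{C})>\rho$. If this failed, there would be distinct $x,y\in\mathbb{C}$ with $d_{(Pm,\pi)}(x,y)\le\rho$; then $x\in B_{(Pm,\pi)}(x,\rho)$ trivially and $x\in B_{(Pm,\pi)}(y,\rho)$, so the two $\rho$-balls centered at the distinct codewords $x,y$ would meet, contradicting $\rho$-perfectness. Hence $d_{(Pm,\pi)}(\mathbb{C})-1\ge\rho=\lfloor\frac m2\rfloor(N-k)$, and dividing by $\lfloor\frac m2\rfloor$ and taking floors (legitimate since $N-k\in\mathbb{Z}$) gives $r:=\big\lfloor\frac{d_{(Pm,\pi)}(\mathbb{C})-1}{\lfloor m/2\rfloor}\big\rfloor\ge N-k$, where $r$ is exactly the index appearing in Theorem \ref{singlB}.

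The second step is to feed this into the Singleton bound. As in the proof of Theorem \ref{singlB}, the ideal generated by the $(Pm,\pi)$-support of a minimum-weight difference of codewords has cardinality $d_{(Pm,\pi)}(\mathbb{C})$ and, by \cite{gsrs} (Proposition~3), contains an ideal of cardinality $t=r\lfloor\frac m2\rfloor$ with an $r$-element root set, so $\mathcal{I}_{*r}^{t}\neq\emptyset$. Theorem \ref{singlB} gives $\sum_{i\in J^*}k_i\le N-\lceil\log_m|\mathbb{C}|\rceil=N-k$ for every $J\in\mathcal{I}_{*r}^{t}$, while the trivial estimate $k_i\ge 1$ gives $\sum_{i\in J^*}k_i\ge|J^*|=r\ge N-k$ for every such $J$. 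Hence $\sum_{i\in J^*}k_i=N-k$ for each $J\in\mathcal{I}_{*r}^{t}$, so $\mathbb{C}$ attains its Singleton bound and is MDS.

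The only step requiring any care is the first one — turning ``disjoint $\rho$-balls'' into the strict distance inequality $d_{(Pm,\pi)}(\mathbb{C})>\rho$ — and even there the point is elementary once one notices that the center $x$ itself already witnesses the overlap of $B_{(Pm,\pi)}(x,\rho)$ and $B_{(Pm,\pi)}(y,\rho)$ whenever $d_{(Pm,\pi)}(x,y)\le\rho$. Everything afterwards is bookkeeping with Theorem \ref{singlB} together with the estimate $k_i\ge1$, so no genuine obstacle is expected; indeed this statement is the ``unequal-block-length'' counterpart, for the single radius $\lfloor\frac m2\rfloor(N-k)$, of Theorem \ref{Iperfect imply MDS}.
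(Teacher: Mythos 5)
Your proof is correct and follows essentially the same route the paper uses for Theorem \ref{Iperfect imply MDS} (of which this statement is the unproved $r$-perfect analogue): deduce $d_{(Pm,\pi)}(\mathbb{C})>\lfloor\frac{m}{2}\rfloor(N-k)$ from disjointness of the balls, divide by $\lfloor\frac{m}{2}\rfloor$, and combine the resulting lower bound $r\ge N-k$ with the Singleton bound via $k_i\ge 1$. Your derivation of the distance bound directly from disjoint $\rho$-balls is, if anything, slightly more immediate than the paper's detour through $I$-balls, but it is not a genuinely different argument.
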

\begin{theorem} \label{Iperfect imply MDS pomset}
	Let $\mathbb{C} $ be a pomset code of length $n$ over $\mathbb{Z}_m$ with cardianlity $m^k$. If $\mathbb{C}$ is $\lfloor \frac{m}{2} \rfloor (n-k)$-perfect then $\mathbb{C}$ is MDS.
\end{theorem}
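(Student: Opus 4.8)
The plan is to recognize Theorem~\ref{Iperfect imply MDS pomset} as the block-trivial specialization of Theorem~\ref{Iperfect imply MDS same length}: take $\pi(i)=1$ for every $i\in[n]$. By Remark~\ref{r3}, with this choice the pomset block space collapses to the classical pomset space $(\mathbb{Z}_m^n, d_{Pm})$, so a pomset code $\mathbb{C}\subseteq\mathbb{Z}_m^n$ of cardinality $m^k$ is exactly a pomset block code of length $N=n$ with $\lceil\log_m|\mathbb{C}|\rceil=k$; in particular $\lfloor\frac m2\rfloor(N-k)=\lfloor\frac m2\rfloor(n-k)$, so the hypothesis that $\mathbb{C}$ is $\lfloor\frac m2\rfloor(n-k)$-perfect is precisely the hypothesis of Theorem~\ref{Iperfect imply MDS same length}. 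That theorem then yields that $\mathbb{C}$ is MDS as a pomset block code, and by part~(iii) of the Corollary to Theorem~\ref{singlB} (where $d_{(Pm,\pi)}=d_{Pm}$ and the two Singleton bounds coincide) this is the same as $\mathbb{C}$ being an MDS pomset code.

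For a self-contained argument I would proceed directly. First set $r=\lfloor\frac m2\rfloor(n-k)$ and note $k\le n$ (since $m^k\le m^n$), so $r\ge 0$. If $\mathbb{C}$ were not MDS, then by the pomset Singleton bound (part~(iii) of the Corollary to Theorem~\ref{singlB}) one has $\big\lfloor\frac{d_{Pm}(\mathbb{C})-1}{\lfloor m/2\rfloor}\big\rfloor < n-k$, which rearranges to $d_{Pm}(\mathbb{C})\le r$. Pick distinct $x,y\in\mathbb{C}$ with $d_{Pm}(x,y)=d_{Pm}(\mathbb{C})\le r$. Then $x\in B_{Pm}(x,r)$ trivially, and by symmetry of the metric $x\in B_{Pm}(y,r)$ as well, so the $r$-balls centered at the two distinct codewords $x$ and $y$ share the point $x$ --- contradicting the pairwise disjointness required by $r$-perfectness. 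Hence $d_{Pm}(\mathbb{C})>r$, so $\big\lfloor\frac{d_{Pm}(\mathbb{C})-1}{\lfloor m/2\rfloor}\big\rfloor\ge n-k$; combined with the Singleton bound this forces equality, and (as in the proof of Theorem~\ref{singlB}, extracting from $\langle supp_{(Pm,\pi)}(x-y)\rangle$ a sub-ideal $J$ of cardinality $r'\lfloor\frac m2\rfloor$ with $|J^*|=r'=n-k$) the bound is attained, i.e.\ $\mathbb{C}$ is MDS.

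This statement carries essentially no deep content, so there is no real obstacle; the points that merely need care are: (a) using $r$-perfectness correctly --- it supplies only pairwise disjointness of $r$-balls, and what is exploited is that the center of a ball lies in it, so a codeword within distance $r$ of another codeword is a common point of two such balls; and (b) the translation between the ``attains the Singleton bound'' definition of MDS and the floor inequality $\big\lfloor\frac{d_{Pm}(\mathbb{C})-1}{\lfloor m/2\rfloor}\big\rfloor\ge n-k$, which in the pomset case uses $\sum_{i\in J^*}k_i=|J^*|$ since every block has size $1$. The degenerate case $n=k$ (where $r=0$, so $0$-perfect forces $\mathbb{C}=\mathbb{Z}_m^n$, trivially MDS) may be disposed of separately, so one may assume $n>k$. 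Since Theorem~\ref{Iperfect imply MDS same length} (hence Theorem~\ref{Iperfect imply MDS}) already does the work in the block setting, the only thing to verify is that the reduction $\pi\equiv 1$ introduces no hidden gap, which it does not.
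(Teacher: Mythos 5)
Your proposal is correct and follows essentially the paper's intended route: the theorem is the $\pi\equiv 1$ specialization of Theorem~\ref{Iperfect imply MDS same length}, and your self-contained argument (deduce $d_{Pm}(\mathbb{C})>\lfloor\frac m2\rfloor(n-k)$ from disjointness of $r$-balls via a shared center, then combine with the Singleton bound to force equality) is exactly the mechanism of the proof of Theorem~\ref{Iperfect imply MDS}, correctly adapted from the ``$I$-perfect for all $I$'' hypothesis to the ``$r$-perfect'' one. No gaps.
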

\section{Block codes in chain pomsets}
Throughout this Section, $\mathbb{P}=(M,R)$ is considered as  chain. So that for $1 \leq j \leq n$, $|\mathcal{I}_j^{t}| =1$ and each ideal $I$ in $\mathbb{P}$ has a unique
maximal element. Let $ c_{i_t} / i_t $ be the maximal element of $I$ and the remaining elements of $I^*$ have full count.
Let $v \in \mathbb{Z}_{m}^N$. Then,
 we have
\begin{equation*}
	w_{(Pm,\pi)}(v)= c_{i_t} + (| \langle {c_{i_t}/i_t} \rangle ^*|-1) \lfloor\frac{ m }{2} \rfloor
\end{equation*} 
The space $ (\mathbb{Z}_{q}^N, ~d_{(Pm,\pi)} )$ is called the  NRT pomset block space (when ${\mathbb{P}}$ is a chain).
Since $|\mathcal{I}_1^t|=1$, then we have  $B_I (x) = B_t(x)$ for any $x \in \mathbb{Z}_m^N$ and $\max\limits_{J \in  \mathcal{I}_{*r}^{t}}  \big\{\sum_{i \in J^*} k_{i}\big\} = \sum\limits_{i \in J^*} k_{i}$. From Theorem \ref{singlB}, we have Singleton bound for chain pomset block code:
\begin{theorem} \label{chain Singleton bound}
	Let $\mathbb{C} $  be a chain pomset block code of length $N = k_1 + k_2 + \ldots + k_n$, over $\mathbb{Z}_m$ with minimum distance $d_{(Pm,\pi)}(\mathbb{C})$. Then
	$ \sum\limits_{i \in I^*} k_{i} \leq N - \lceil log_{m}|\mathbb{C}| \rceil$ where $|I| \leq d_{(Pm,\pi)}(\mathbb{C})-1$ and $|I^*| = \big\lfloor \frac{d_{(Pm,\pi)} \mathbb{(C)}-1} { \big\lfloor \frac{ m }{2} \big\rfloor } \big\rfloor $.
\end{theorem}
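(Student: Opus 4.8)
The plan is to obtain Theorem~\ref{chain Singleton bound} as the specialization of the general Singleton bound (Theorem~\ref{singlB}) to a chain pomset, where the collection $\mathcal{I}_{*r}^{t}$ collapses to a single ideal so that the maximum in Theorem~\ref{singlB} disappears. First I would recall from Theorem~\ref{singlB} that, with $r = \big\lfloor \frac{d_{(Pm,\pi)}(\mathbb{C})-1}{\lfloor m/2\rfloor}\big\rfloor$ and $t = r\lfloor m/2\rfloor \le d_{(Pm,\pi)}(\mathbb{C})-1$, one has $\max_{J \in \mathcal{I}_{*r}^{t}} \sum_{i\in J^*} k_i \le N - \lceil \log_m|\mathbb{C}|\rceil$, and that $\mathcal{I}_{*r}^{t}$ is non-empty, since the proof of Theorem~\ref{singlB} exhibits an explicit such $J$ sitting inside the support ideal of a pair of codewords that realizes the minimum distance.

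Next I would argue that, because $\mathbb{P}$ is a chain, the set $\mathcal{I}_{*r}^{t}$ has exactly one element. Indeed, as recalled at the start of this section, every ideal $I$ in a chain pomset has a unique maximal element $c_{i_t}/i_t$, and every block strictly below $i_t$ in the chain order occurs in $I$ with full count $\lfloor m/2\rfloor$; hence $I$ is completely determined by the pair $(i_t, c_{i_t})$. The requirement $|I^*| = r$ then forces $i_t$ to be the $r$-th smallest block of the chain, so $I^*$ is precisely the set of the bottom $r$ block-coordinates, and the requirement $|I| = t$ forces the top count to equal $c_{i_t} = t - (r-1)\lfloor m/2\rfloor$. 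Thus there is a unique ideal $I$ with $|I| = t$ and $|I^*| = r$, i.e. $\mathcal{I}_{*r}^{t} = \{I\}$, and consequently $\max_{J \in \mathcal{I}_{*r}^{t}} \sum_{i \in J^*} k_i = \sum_{i \in I^*} k_i$; note that this root set $I^*$ is in fact the same for every admissible cardinality $|I| \le d_{(Pm,\pi)}(\mathbb{C})-1$ with $|I^*| = r$.

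Combining the two observations immediately yields $\sum_{i \in I^*} k_i \le N - \lceil \log_m|\mathbb{C}|\rceil$ with $|I^*| = r = \big\lfloor \frac{d_{(Pm,\pi)}(\mathbb{C})-1}{\lfloor m/2\rfloor}\big\rfloor$ and $|I| = t \le d_{(Pm,\pi)}(\mathbb{C})-1$, which is the assertion. I do not expect a genuine obstacle here; the only point needing a little care is confirming that $\mathcal{I}_{*r}^{t}$ is simultaneously non-empty and a singleton — i.e. that the choice $t = r\lfloor m/2\rfloor$ really does correspond to a legitimate ideal (the full-count ideal supported on the bottom $r$ blocks) and that no other ideal shares its cardinality and root-set size — both of which are direct consequences of the chain structure recalled above.
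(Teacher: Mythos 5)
Your proposal is correct and follows essentially the same route as the paper: the paper obtains this theorem directly from Theorem~\ref{singlB} by observing that in a chain pomset every ideal has a unique maximal element, so $\mathcal{I}_{*r}^{t}$ collapses to a single ideal and the maximum disappears. Your extra verification that this set is non-empty and that its root set is exactly the bottom $r$ blocks of the chain is a correct, slightly more careful elaboration of the same observation.
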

\begin{proposition}
	For the chain Pomset $\mathbb{P}$, the following statements hold:
	\begin{enumerate}[label=(\roman*)]	 
		\item  Let $I$ be an ideal in $\mathbb{P}$ with partial count then $|B_I| = (1 + 2c_{i_t})^{k_{{i_t}}} m^{\sum\limits_{j \in I^* \setminus \{i_t\}} k_j}$ where $ c_{i_t}=|I|-(|I^*|-1)\lfloor\frac{m}{2} \rfloor $. Moreover, if $k_i =s$ $\forall$ $i \in [n]$, then  $|B_I| = (1 + 2c_{i_t})^{s} m^{ (|I^*|-1) s}$.
		\item Every $I$-perfect block code is an $r$-perfect block code and vice versa, where $r=|I|$. 
	\end{enumerate}
\end{proposition}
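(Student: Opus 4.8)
The plan is to treat (i) and (ii) separately, both reducing to the rigidity of ideals in a chain pomset: by the standing hypotheses of this section, an ideal $I$ of $\mathbb{P}$ has a unique maximal element $c_{i_t}/i_t$, every element of $I^*$ below it carries the full count $\lfloor\frac{m}{2}\rfloor$, and for each cardinality $r$ with $1\le r\le n\lfloor\frac{m}{2}\rfloor$ there is exactly one ideal of that cardinality, so $|\mathcal{I}^r(\mathbb{P})|=1$. In particular, counting elements block by block gives $|I| = c_{i_t} + (|I^*|-1)\lfloor\frac{m}{2}\rfloor$, which is where the identity $c_{i_t} = |I| - (|I^*|-1)\lfloor\frac{m}{2}\rfloor$ comes from.

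For (i) I would compute $|B_I|$ directly from $B_I = \{v\in\mathbb{Z}_m^N : \langle supp_{(Pm,\pi)}(v)\rangle \subseteq I\}$, which, since $I$ is an ideal, equals $\{v : supp_{(Pm,\pi)}(v)\subseteq I\}$. Writing $v = v_1\oplus\cdots\oplus v_n$ with $v_i\in\mathbb{Z}_m^{k_i}$, the membership condition decouples across blocks: for $j\notin I^*$ one needs $v_j=0$ (one choice); for $j\in I^*\setminus\{i_t\}$ the requirement is that the largest Lee weight among the coordinates of $v_j$ not exceed $C_I(j)=\lfloor\frac{m}{2}\rfloor$, which holds for every $v_j\in\mathbb{Z}_m^{k_j}$ (so $m^{k_j}$ choices); and for the maximal block $i_t$ the requirement is that each of the $k_{i_t}$ coordinates of $v_{i_t}$ has Lee weight at most $c_{i_t}$. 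Because $I$ has partial count we have $c_{i_t}<\lfloor\frac{m}{2}\rfloor$, hence $2c_{i_t}<m$ and exactly $1+2c_{i_t}$ residues (namely $0,\pm1,\dots,\pm c_{i_t}$) qualify, giving $(1+2c_{i_t})^{k_{i_t}}$ choices. Multiplying the per-block counts yields $|B_I| = (1+2c_{i_t})^{k_{i_t}}\, m^{\sum_{j\in I^*\setminus\{i_t\}}k_j}$, and setting $k_i=s$ for all $i$ collapses the product over $I^*\setminus\{i_t\}$ to $m^{(|I^*|-1)s}$, giving the second formula.

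For (ii), observe that in a chain the ideal of cardinality $r=|I|$ is the unique member of $\mathcal{I}^r(\mathbb{P})$, i.e. $\mathcal{I}^r(\mathbb{P})=\{I\}$. By Proposition \ref{runnion}, $B_{(Pm,\pi)}(u,r) = \bigcup_{J\in\mathcal{I}^r(\mathbb{P})} B_J(u) = B_I(u)$ for every $u\in\mathbb{Z}_m^N$, so the $r$-balls and the $I$-balls about the codewords of $\mathbb{C}$ literally coincide; hence $\mathbb{C}$ packs and covers $\mathbb{Z}_m^N$ with $r$-balls precisely when it does so with $I$-balls. This is exactly Theorem \ref{r-perf} specialized to the chain, which I would simply cite. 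The only step needing care is the bookkeeping in (i): one must use that ``partial count'' forces the deficient block to be the unique maximal one and keeps $c_{i_t}$ strictly below $\lfloor\frac{m}{2}\rfloor$, so that the count of admissible residues in that block is $1+2c_{i_t}$ rather than $m$; everything else is an unwinding of definitions.
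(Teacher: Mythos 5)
Your argument is correct, and it is exactly the computation the paper leaves implicit (the proposition is stated without proof; the same block-by-block count $(1+2c_{i_t})^{k_{i_t}}m^{\sum k_j}$ reappears in the paper's proof of Proposition \ref{mdsball}(ii)). In particular you handle the two points that actually need care: that ``max Lee weight of the block $\leq c_{i_t}$'' decouples into a per-coordinate condition admitting exactly $1+2c_{i_t}$ residues because $c_{i_t}<\lfloor\frac{m}{2}\rfloor$, and that part (ii) is just Theorem \ref{r-perf} applied to the unique ideal of cardinality $r$ in a chain.
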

\begin{corollary}
	For the chain (usual) Pomset $\mathbb{P}$, An $(N, k) $ $ (Pm, \pi)$-code is $r$-perfect if and only if there is a linear transformation 
	$	L : \bigoplus\limits_{j = r+1}^{n} \mathbb{Z}_m^{k_{j}} \rightarrow \bigoplus\limits_{i = 1}^{r} \mathbb{Z}_m^{k_i} $
	such that $ \mathbb{C} = \{ (v, L(v)  ) : v \in  \mathbb{Z}_m^{k_{r+1}} \oplus \mathbb{Z}_m^{k_{r+2}} \oplus \ldots \oplus\mathbb{Z}_m^{k_{n}} \}$.
\end{corollary}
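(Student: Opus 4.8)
The plan is to obtain this as the chain specialization of Theorems~\ref{r-perf} and~\ref{Iperfect}, composed after recording the structural facts peculiar to a chain pomset. First I would note that in the usual chain pomset there is, for each admissible radius $r$ (that is, $0\le r\le n\lfloor\frac{m}{2}\rfloor$), a \emph{unique} ideal $I$ of cardinality $r$; since the height $\lfloor\frac{m}{2}\rfloor$ equals $1$ in the usual case, this $I$ automatically has full count, and its support is the initial segment of block labels, $I^*=\{1,2,\ldots,r\}$. Hence $\mathcal{I}^r(\mathbb{P})=\{I\}$, so Theorem~\ref{r-perf} (equivalently part~(ii) of the preceding Proposition) applies and gives that $\mathbb{C}$ is $r$-perfect if and only if $\mathbb{C}$ is $I$-perfect.

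Next I would feed this full-count ideal into Theorem~\ref{Iperfect}. Because $\bigoplus_{j\notin I^*}\mathbb{Z}_m^{k_j}=\bigoplus_{j=r+1}^{n}\mathbb{Z}_m^{k_j}$ and $\bigoplus_{i\in I^*}\mathbb{Z}_m^{k_i}=\bigoplus_{i=1}^{r}\mathbb{Z}_m^{k_i}$, that theorem yields that $\mathbb{C}$ is $I$-perfect exactly when $\mathbb{C}=\{(v,f(v)):v\in\bigoplus_{j=r+1}^{n}\mathbb{Z}_m^{k_j}\}$ for some map $f$ into $\bigoplus_{i=1}^{r}\mathbb{Z}_m^{k_i}$. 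Finally, since $\mathbb{C}$ is a submodule of $\mathbb{Z}_m^N$, projecting the closure relations $c+c'\in\mathbb{C}$ and $\lambda c\in\mathbb{C}$ onto the two groups of blocks forces $f$ to be additive and $\mathbb{Z}_m$-homogeneous, i.e.\ a linear transformation $L$; conversely, the graph of any linear $L$ is a submodule and hence a legitimate linear $(Pm,\pi)$-code. Composing the two equivalences gives the Corollary.

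I do not expect a real obstacle here; the one point demanding care is that Theorem~\ref{Iperfect} is valid only for ideals with \emph{full} count, which is precisely why the hypothesis restricts to the \emph{usual} chain pomset: for a chain pomset of height $\lfloor\frac{m}{2}\rfloor>1$ whose ideal of cardinality $r$ carries a partial count, $|B_I|$ need not be a power of $m$ (see the preceding Proposition), so the systematic description breaks down, exactly as in Example~\ref{IperfectExam}. For bookkeeping one may also observe that $r$-perfectness already forces $|\mathbb{C}|=m^{\,N-\sum_{i=1}^{r}k_i}$, via $|\mathbb{C}|\,|B_I|=m^N$ together with $|B_I|=m^{\sum_{i=1}^{r}k_i}$ from Proposition~\ref{full count theorem}; thus the two sides of the claimed bijection automatically have matching cardinality and no separate counting argument is needed.
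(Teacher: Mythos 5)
Your argument is essentially the paper's: the official proof is the single line ``The proof follows from Theorem \ref{Iperfect}'', and your expansion --- uniqueness of the ideal of each cardinality in a chain, Theorem \ref{r-perf} (equivalently part (ii) of the preceding proposition) to pass between $r$-perfectness and $I$-perfectness, Theorem \ref{Iperfect} for the systematic form, and linearity of $f$ forced by linearity of $\mathbb{C}$ --- is exactly what that line abbreviates. The one place you editorialize is in reading ``usual'' as forcing height $\lfloor\frac{m}{2}\rfloor=1$ so that the unique ideal of cardinality $r$ has full count with $I^*=\{1,\dots,r\}$; that is the only reading under which both the index ranges $\bigoplus_{i=1}^{r}$, $\bigoplus_{j=r+1}^{n}$ and the full-count hypothesis of Theorem \ref{Iperfect} are literally met (otherwise one must reinterpret the radius as $r\lfloor\frac{m}{2}\rfloor$), and you correctly identify that full count is the crux, since for a partial-count ideal the systematic description can fail.
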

\begin{proof}
	The proof follows from Theorem \ref{Iperfect}.
\end{proof}
\begin{proposition}
	Every $I$-perfect block code with $|I|= \lfloor \frac{m}{2} \rfloor ( N - \lceil \log_m|\mathbb{C}| \rceil \big\rfloor)$ is an MDS block code.
\end{proposition}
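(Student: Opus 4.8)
The plan is to read this off Theorem~\ref{Iperfect imply MDS}, the analogous statement for arbitrary pomset block codes, by exploiting the rigidity of the ideal lattice of a chain: in a chain there is at most one ideal of any prescribed cardinality, so being $I$-perfect for a single ideal $I$ is already the same as being $I$-perfect for \emph{all} ideals of cardinality $|I|$.

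First I would make the ideal structure explicit. Write $\mathbb{P}=(M,R)$ on $M=\{\lfloor m/2\rfloor/1,\dots,\lfloor m/2\rfloor/n\}$ with chain order, say $1\,R\,2\,R\cdots R\,n$. As recalled at the start of this section, every ideal of $\mathbb{P}$ has a unique maximal element and all elements strictly below it occur with full count $\lfloor m/2\rfloor$; hence an ideal of cardinality $\kappa$ has the form $\{\lfloor m/2\rfloor/1,\dots,\lfloor m/2\rfloor/(t-1),\,c/t\}$ where $\kappa=(t-1)\lfloor m/2\rfloor+c$ with $1\le c\le\lfloor m/2\rfloor$. Since this representation of an integer $\kappa$ is unique, $\mathbb{P}$ contains at most one ideal of each cardinality; equivalently $|\mathcal{I}^{\kappa}(\mathbb{P})|\le 1$ for every $\kappa$, in agreement with the $|\mathcal{I}_1^{t}|=1$ noted above.

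Then I would set $\kappa=\lfloor m/2\rfloor\,(N-\lceil\log_m|\mathbb{C}|\rceil)$. Since we are handed an $I$-perfect code with $|I|=\kappa$, this $\kappa$ is realized as the cardinality of an ideal (namely $I$), so by the previous step $\mathcal{I}^{\kappa}(\mathbb{P})=\{I\}$; note moreover that $\kappa$ being a multiple of $\lfloor m/2\rfloor$ forces $I$ to have full count with $|I^{*}|=N-\lceil\log_m|\mathbb{C}|\rceil$. Consequently the hypothesis ``$\mathbb{C}$ is $I$-perfect'' is verbatim the hypothesis ``$\mathbb{C}$ is $I$-perfect for all $I\in\mathcal{I}^{\lfloor m/2\rfloor(N-\lceil\log_m|\mathbb{C}|\rceil)}(\mathbb{P})$'' of Theorem~\ref{Iperfect imply MDS}, and invoking that theorem yields that $\mathbb{C}$ is MDS. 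As an alternative, when $|\mathbb{C}|=m^{k}$ one may instead combine part (ii) of the preceding Proposition (equivalence of $I$-perfect and $|I|$-perfect in a chain) with Theorem~\ref{Iperfect imply MDS same length}.

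There is no real obstacle here; the one point needing care is the reduction ``$I$-perfect for one ideal $\Rightarrow$ $I$-perfect for all ideals of that size'', which is precisely where chain-ness enters and which fails for general pomsets (cf.\ the partial-count Example~\ref{IperfectExam}). One should also confirm that $\kappa\le n\lfloor m/2\rfloor$, so that an ideal of cardinality $\kappa$ exists at all, but this is automatic once such an $I$-perfect code is assumed to exist.
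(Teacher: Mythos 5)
Your argument is correct. The paper actually states this proposition without proof, but your reduction is exactly the machinery it sets up for this purpose: in a chain every ideal has a unique maximal element with full count below it, so $\mathcal{I}^{\kappa}(\mathbb{P})$ is a singleton for each realizable cardinality $\kappa$, and being $I$-perfect for the one ideal of cardinality $\lfloor \frac{m}{2}\rfloor(N-\lceil\log_m|\mathbb{C}|\rceil)$ is literally the hypothesis of Theorem~\ref{Iperfect imply MDS}. You also correctly flag the only two points needing care (existence of the ideal, which is granted by hypothesis, and uniqueness, which is where chain-ness enters). An equally short direct route, in the style of the paper's proof of Proposition~\ref{p11}, would be to observe that such an $I$ necessarily has full count with $|I^*|=N-\lceil\log_m|\mathbb{C}|\rceil$, so the covering condition of Lemma~\ref{|B int C|=1} gives $\sum_{i\in I^*}k_i=N-\lceil\log_m|\mathbb{C}|\rceil$ while $d_{(Pm,\pi)}(\mathbb{C})>|I|$ forces $|I^*|\le\big\lfloor\frac{d_{(Pm,\pi)}(\mathbb{C})-1}{\lfloor m/2\rfloor}\big\rfloor$, whence the Singleton bound is attained; but your version is no less valid.
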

\begin{proposition}\label{p11}
	Every $I$-perfect block code with $|I^*|= \big\lfloor \frac{d_{(Pm,\pi)} \mathbb{(C)}-1} { \big\lfloor \frac{ m }{2} \big\rfloor } \big\rfloor$ is an MDS block code.
\end{proposition}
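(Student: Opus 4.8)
The plan is to squeeze $\sum_{i\in I^*}k_i$ between $N-\lceil\log_m|\mathbb{C}|\rceil$ from below (a counting argument on the $I$-balls) and $N-\lceil\log_m|\mathbb{C}|\rceil$ from above (the chain Singleton bound, Theorem \ref{chain Singleton bound}), so that equality is forced and $\mathbb{C}$ attains its Singleton bound. A convenient feature of this route is that it never has to split into the cases "$I$ has full count'' and "$I$ has partial count.''

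First I would extract the counting identity from $I$-perfectness: the $I$-balls $B_I(c)$, $c\in\mathbb{C}$, are pairwise disjoint and cover $\mathbb{Z}_m^N$, and each $B_I(c)=c+B_I$ is a translate of $B_I$ (true for every ideal, partial count included, as recorded right after Proposition \ref{full count theorem}), so they all have cardinality $|B_I|$ and hence $|\mathbb{C}|\,|B_I|=m^N$. Next I would bound $|B_I|$ from above. If $v\in B_I$, then $\langle supp_{(Pm,\pi)}(v)\rangle\subseteq I$, and for any block index $j$ with $v_j\neq 0$ the pair $c_j/j$ lies in $supp_{(Pm,\pi)}(v)\subseteq\langle supp_{(Pm,\pi)}(v)\rangle\subseteq I$, forcing $j\in I^*$; thus $v\mapsto(v_i)_{i\in I^*}$ embeds $B_I$ into $\bigoplus_{i\in I^*}\mathbb{Z}_m^{k_i}$, giving $|B_I|\le m^{\sum_{i\in I^*}k_i}$. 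Combining, $|\mathbb{C}|=m^N/|B_I|\ge m^{\,N-\sum_{i\in I^*}k_i}$, hence $\log_m|\mathbb{C}|\ge N-\sum_{i\in I^*}k_i$; since $N-\sum_{i\in I^*}k_i$ is an integer this upgrades to
\[
\sum_{i\in I^*}k_i\ \ge\ N-\lceil\log_m|\mathbb{C}|\rceil .
\]

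For the matching upper bound I would invoke Theorem \ref{chain Singleton bound}, which applies to an ideal satisfying $|I^*|=\big\lfloor\frac{d_{(Pm,\pi)}(\mathbb{C})-1}{\lfloor m/2\rfloor}\big\rfloor$ (our hypothesis) and $|I|\le d_{(Pm,\pi)}(\mathbb{C})-1$. The second requirement is where the chain structure is used: for distinct codewords $c_1,c_2$, $I$-perfectness gives $\langle supp_{(Pm,\pi)}(c_1-c_2)\rangle\not\subseteq I$, and since the ideals of a chain pomset are totally ordered by inclusion this forces $I\subsetneq\langle supp_{(Pm,\pi)}(c_1-c_2)\rangle$, whence $|I|<|\langle supp_{(Pm,\pi)}(c_1-c_2)\rangle|=d_{(Pm,\pi)}(c_1,c_2)$; minimizing over pairs yields $|I|\le d_{(Pm,\pi)}(\mathbb{C})-1$. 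Theorem \ref{chain Singleton bound} then gives $\sum_{i\in I^*}k_i\le N-\lceil\log_m|\mathbb{C}|\rceil$, so with the displayed inequality we get $\sum_{i\in I^*}k_i=N-\lceil\log_m|\mathbb{C}|\rceil$; that is, $I$ itself lies in $\mathcal{I}_{*r}^{t}$ with $r=|I^*|$ and $t=|I|\le d_{(Pm,\pi)}(\mathbb{C})-1$ and realizes the bound, so $\mathbb{C}$ is MDS. The main obstacle is the integrality step at the end of the counting argument: one must turn the mere size bound $|B_I|\le m^{\sum_{i\in I^*}k_i}$ into a clean bound on $\lceil\log_m|\mathbb{C}|\rceil$ without assuming $I$ has full count, and the fact that $N-\sum_{i\in I^*}k_i$ is an integer is precisely what makes the ceiling cooperate; a more computational approach that first expands $|B_I|=(1+2c_{i_t})^{k_{i_t}}m^{\sum_{j\in I^*\setminus\{i_t\}}k_j}$ gets stuck until one separately argues that the hypothesis on $|I^*|$ forces $c_{i_t}=\lfloor m/2\rfloor$, which the argument above avoids.
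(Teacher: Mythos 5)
Your proof is correct, and while it shares the paper's overall strategy (use $I$-perfectness to force $\sum_{i\in I^*}k_i\ \ge\ N-\lceil\log_m|\mathbb{C}|\rceil$, then let the Singleton bound squeeze this to equality), the way you obtain the lower bound is genuinely different and, in my view, cleaner. The paper splits into the cases ``$I$ has full count'' and ``$I$ has partial count,'' and in the latter case computes $|B_I|=(1+2c_{i_t})^{k_{i_t}}m^{\sum_{j\in I^*\setminus\{i_t\}}k_j}$ exactly and runs an integrality argument on $\log_m\bigl(|\mathbb{C}|(2c_{i_t}+1)^{k_{i_t}}\bigr)$ (the displayed chain of inequalities there is in fact garbled --- it asserts $x+1<x$ at one step --- though the intended conclusion is recoverable). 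Your uniform bound $|B_I|\le m^{\sum_{i\in I^*}k_i}$ via the coordinate projection, combined with $|\mathbb{C}|\,|B_I|=m^N$ and the integrality of $N-\sum_{i\in I^*}k_i$, dispenses with both the case split and the exact ball-size formula. You also make explicit two points the paper leaves implicit: that the $I$-balls all have cardinality $|B_I|$ even for partial-count ideals (via the translate remark), and --- more importantly --- that $I$ actually satisfies the hypothesis $|I|\le d_{(Pm,\pi)}(\mathbb{C})-1$ of Theorem \ref{chain Singleton bound}, which you derive from disjointness together with the total ordering of ideals in a chain pomset; the paper only records $d_{(Pm,\pi)}(\mathbb{C})>|I|$ without using it to justify that $I$ lies in the right family $\mathcal{I}_{*r}^{t}$. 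Both approaches buy the same proposition, but yours transfers verbatim to any situation where one only has an upper bound on $|B_I|$ rather than an exact formula.
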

\begin{proof}
	Let $\mathbb{C}$ be an $I$-perfect block code. Then $d_{(Pm,\pi)}(\mathbb{C}) > |I|$. If  $I$ is an ideal with full count then $|I|=  \lfloor \frac{m}{2} \rfloor |I^*|$  and  $\sum_{i \in I^*}k_i = N - \lceil \log_m|\mathbb{C}| \rceil$. Hence, $\mathbb{C}$ is an MDS block code. Now, if $I$ is an ideal  with partial count, $|I| = |J| + c_{i_t} $ for some ideal $J$ with full count and $0 < c_{i_t} \leq \lfloor \frac{m}{2} \rfloor -1$. Since $\mathbb{C}$ is an $I$-perfect code, $ |\mathbb{C}||B_I| = m^{N} $ implies
	$|\mathbb{C}|(2c_{i_t} + 1)^{k_{i_t}}|B_J| = m^{N}$. Taking $log_m$ both sides, we get,
	$	\sum\limits_{j \in J^*}k_{j} = N - \log_m|\mathbb{C}| - k_{i_t} \log_m(2c_{i_t}+1)$.
	As $0 < c_{i_t} \leq \lfloor \frac{m}{2} \rfloor -1 $, then $ 0 < \log_m (2c_{i_t}+1) <1$. Since $\log_m|\mathbb{C}| + k_{i_t} \log_m(2c_{i_t}+1)$ is an integer. Then, 
	\begin{align*}
			\log_m|\mathbb{C}| + k_{i_t} \log_m(2c_{i_t}+1) &= \lfloor \log_m|\mathbb{C}| + k_{i_t} \log_m(2c_{i_t}+1) \rfloor \\
			& \leq  \lfloor \log_m|\mathbb{C}| \rfloor + \lfloor k_{i_t} \log_m(2c_{i_t}+1) \rfloor +1 \\
		&	<  \lfloor \log_m|\mathbb{C}| \rfloor +  k_{i_t} \lceil \log_m(2c_{i_t}+1) \rceil +1 \\
		&<  \lfloor \log_m|\mathbb{C}| \rfloor +  k_{i_t} \\
	&	\leq  \lceil \log_m|\mathbb{C}| \rceil +  k_{i_t}-1
	\end{align*}
 Thus, $	\sum\limits_{j \in J^*}k_{j} = N - (\log_m|\mathbb{C}| +  k_{i_t} \log_m(2c_{i_t}+1) ) \geq 
	N - \lceil \log_m|\mathbb{C}| \rceil -  k_{i_t} +1 $.  We get,
	$\sum_{j \in I^*}k_{i} = \sum_{j \in J^*}k_{j} + k_{i_t} \geq N - \lceil \log_m|\mathbb{C}|$. Hence, $\mathbb{C}$ is an MDS- code. 
\end{proof}
\begin{theorem}\label{t14}
	Let  $\mathbb{C} \subseteq \mathbb{Z}_m^N $ be a block code of length $N$ with cardinality $m^k$. Then $\mathbb{C}$ is an MDS-block code if and only if $\mathbb{C} $ is $I$-perfect for all $I \in \mathcal{I}^{\lfloor \frac{m}{2} \rfloor (N - k)}$.	
\end{theorem}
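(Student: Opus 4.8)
Since $|\mathbb{C}|=m^{k}$ we have $\lceil\log_m|\mathbb{C}|\rceil=k$, so the hypothesis that $\mathbb{C}$ is $I$-perfect for all $I\in\mathcal{I}^{\lfloor m/2\rfloor(N-k)}$ is verbatim the hypothesis of Theorem \ref{Iperfect imply MDS}. Thus for the ``if'' direction the plan is simply to quote Theorem \ref{Iperfect imply MDS} to conclude that $\mathbb{C}$ is MDS. Alternatively, since $\mathbb{P}$ is a chain one has $|\mathcal{I}^{t}_{1}|=1$ and $B_{I}(x)=B_{(Pm,\pi)}(x,|I|)$, so $I$-perfectness for the unique ideal $I\in\mathcal{I}^{\lfloor m/2\rfloor(N-k)}$ coincides with $\lfloor m/2\rfloor(N-k)$-perfectness, and one may instead invoke Theorem \ref{Iperfect imply MDS same length}.

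For the ``only if'' direction, assume $\mathbb{C}$ is MDS. First I would record the chain facts used throughout this section: every ideal is determined by its cardinality, so $\mathcal{I}^{\lfloor m/2\rfloor(N-k)}$ consists of a single ideal $I$, and since $|I|$ is a multiple of $\lfloor m/2\rfloor$ this $I$ has full count, with $|I^{*}|=N-k$. By Proposition \ref{full count theorem} the space $\mathbb{Z}_m^N$ is partitioned into the cosets of the submodule $B_I$, and $|B_I|=m^{\sum_{i\in I^{*}}k_i}$. It then suffices to establish (a) $d_{(Pm,\pi)}(\mathbb{C})>|I|=\lfloor m/2\rfloor(N-k)$ and (b) $\sum_{i\in I^{*}}k_i=N-k$. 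Indeed, (a) forces the $I$-balls centred at distinct codewords to be disjoint, since if $x\neq y$ in $\mathbb{C}$ with $supp_{(Pm,\pi)}(x-y)\subseteq I$ then $d_{(Pm,\pi)}(x,y)=|\langle supp_{(Pm,\pi)}(x-y)\rangle|\le|I|<d_{(Pm,\pi)}(\mathbb{C})$, a contradiction, so $B_I(x)\neq B_I(y)$ by Proposition \ref{full count theorem}(iii); and (b) gives $|\mathbb{C}|\,|B_I|=m^{k}m^{N-k}=|\mathbb{Z}_m^N|$, so these disjoint $I$-balls exhaust the space, whence $\mathbb{C}$ is $I$-perfect. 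To obtain (a) and (b) I would apply the chain Singleton bound (Theorem \ref{chain Singleton bound}) together with the MDS hypothesis: equality there yields an ideal $J$ with $|J^{*}|=\big\lfloor(d_{(Pm,\pi)}(\mathbb{C})-1)/\lfloor m/2\rfloor\big\rfloor$ and $\sum_{i\in J^{*}}k_i=N-k$, and one then argues that in a chain this $J$ must be the cardinality-$\lfloor m/2\rfloor(N-k)$ ideal $I$, which delivers both (a) and (b) at once. This is, in effect, the chain specialization of the already established fact that an MDS block code of cardinality $m^{k}$ is $I$-perfect for every full-count ideal in $\mathcal{I}^{r\lfloor m/2\rfloor}$, together with Proposition \ref{p11}.

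The main obstacle is precisely this identification of $J$ with $I$, i.e.\ the arithmetic claim $\big\lfloor(d_{(Pm,\pi)}(\mathbb{C})-1)/\lfloor m/2\rfloor\big\rfloor=N-k$ (equivalently $d_{(Pm,\pi)}(\mathbb{C})>\lfloor m/2\rfloor(N-k)$) for an MDS chain pomset block code; it is the step that genuinely exploits the chain hypothesis and the interplay among $d_{(Pm,\pi)}(\mathbb{C})$, the block lengths $k_i$, and $k$, and it requires care to ensure that the ideal realizing the Singleton bound is exactly the ideal of cardinality $\lfloor m/2\rfloor(N-k)$. Once that is settled, the remaining disjointness-and-covering part is routine bookkeeping with Proposition \ref{full count theorem}.
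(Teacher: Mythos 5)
Your ``if'' direction is fine: with $|\mathbb{C}|=m^k$ one has $\lceil\log_m|\mathbb{C}|\rceil=k$, and Theorem \ref{Iperfect imply MDS} applies verbatim. The problem is the ``only if'' direction, and it sits exactly at the step you yourself flag as ``the main obstacle'' and then leave unproved: the identification of the Singleton-bound ideal $J$ with the ideal $I$ of cardinality $\lfloor\frac{m}{2}\rfloor(N-k)$, i.e.\ the claim $\big\lfloor\frac{d_{(Pm,\pi)}(\mathbb{C})-1}{\lfloor m/2\rfloor}\big\rfloor=N-k$. This does not follow from the MDS hypothesis. In a chain the MDS equality reads $\sum_{i=1}^{r}k_i=N-k$ with $r=\big\lfloor\frac{d_{(Pm,\pi)}(\mathbb{C})-1}{\lfloor m/2\rfloor}\big\rfloor$, and since each $k_i\geq 1$ this only gives $r\leq N-k$, with equality precisely when $k_1=\cdots=k_r=1$. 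Likewise your covering condition (b), $\sum_{i\in I^*}k_i=N-k$ with $|I^*|=N-k$, forces $k_i=1$ for all $i\leq N-k$. So both (a) and (b) are equivalent to an extra hypothesis on the block lengths that is nowhere available.

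The implication genuinely fails without that hypothesis. Take $m=5$, the chain $M=\{2/1,2/2\}$ with $2/1\,R\,2/2$, $\pi(1)=2$, $\pi(2)=1$, $N=3$, and $\mathbb{C}=\{(0,0,c):c\in\mathbb{Z}_5\}$, so $k=1$. Every nonzero codeword has $(Pm,\pi)$-weight $w_L(c)+2\in\{3,4\}$, so $d_{(Pm,\pi)}(\mathbb{C})=3$, $r=1$, and the unique ideal in $\mathcal{I}_{*1}^{2}$ is $\{2/1\}$ with $k_1=2=N-\lceil\log_5|\mathbb{C}|\rceil$; hence $\mathbb{C}$ is MDS. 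But $\lfloor\frac{m}{2}\rfloor(N-k)=4$ and the only ideal of cardinality $4$ is $M$ itself, for which $B_M(u)=\mathbb{Z}_5^3$ for every $u$, so the $I$-balls centred at distinct codewords coincide rather than being disjoint and $\mathbb{C}$ is not $I$-perfect. Thus the ``only if'' direction cannot be proved as stated; your argument (and the theorem) goes through only under an additional assumption such as $k_i=1$ on the first $N-k$ blocks, which reduces the statement essentially to the pomset (non-block) case of Theorem \ref{Iperfect imply MDS pomset}. Note also that the paper supplies no proof of this theorem, so the burden of closing (or correcting) this step really does fall on you.
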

We have (ref. \cite{AS}, Theorem $8$), if $\mathbb{C} \subseteq \mathbb{Z}_m^N $ is a linear block code of cardinality $m^k$ and $I$ be an ideal with full count in $\mathbb{P}$. Then, $\mathbb{C}$ is $I$-perfect in $\mathbb{P}$ if and only if $\mathbb{C^\perp}$ is $I^c$-perfect where $I^c$ is an ideal in $\tilde{\mathbb{P}}$. 
\begin{theorem}[Duality theorem] \label{Duality theorem}
	Let  $\tilde{\mathbb{P}}$ be the dual Pomset of the chain $\mathbb{P}$ on $M$. Let $\mathbb{C} $ be a linear pomset block code of length $N$ with cardinality $m^k$ over $\mathbb{Z}_m$, then $\mathbb{C}$ is an MDS $\mathbb{P}$-block code iff $\mathbb{C^\perp} $ is an  MDS $\tilde{\mathbb{P}}$-block code. 
\end{theorem}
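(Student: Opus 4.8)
The plan is to convert the MDS property on each side into an $I$-perfectness statement, transport it across the dual pomset via the full-count duality, and then read back the Singleton bound, exploiting that a chain pomset has exactly one ideal of each admissible cardinality. I would first record the routine facts I need: for a linear code over $\mathbb{Z}_m$ one has $|\mathbb{C}^\perp| = m^{N-k}$ and $(\mathbb{C}^\perp)^\perp = \mathbb{C}$; the dual of a chain pomset is again a chain, $\tilde{\tilde{\mathbb{P}}} = \mathbb{P}$, and $\mathcal{I}(\tilde{\mathbb{P}}) = \{I^c : I \in \mathcal{I}(\mathbb{P})\}$; and in a chain pomset the ideals are linearly ordered by inclusion, exactly one exists of each cardinality $t$ with $0 \le t \le n\lfloor \frac{m}{2}\rfloor$, every ideal whose cardinality is a multiple of $\lfloor \frac{m}{2}\rfloor$ has full count, and the complement of a full-count ideal $I$ of $\mathbb{P}$ is a full-count ideal of $\tilde{\mathbb{P}}$ with root set $[n]\setminus I^*$.

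For the forward implication, assume $\mathbb{C}$ is an MDS $\mathbb{P}$-block code. By Theorem \ref{t14} it is $I$-perfect for every $I \in \mathcal{I}^{\lfloor \frac{m}{2}\rfloor(N-k)}(\mathbb{P})$; since $\mathbb{P}$ is a chain this set consists of a single ideal $I_0$, which has full count. By the full-count duality (\cite{AS}, Theorem $8$), $\mathbb{C}^\perp$ is $I_0^c$-perfect in $\tilde{\mathbb{P}}$, with $I_0^c$ a full-count ideal of $\tilde{\mathbb{P}}$. From the covering condition of Lemma \ref{|B int C|=1} applied to $\mathbb{C}$ we get $\sum_{i\in I_0^*}k_i = N-k$, hence $\sum_{i\in(I_0^c)^*}k_i = k$ because $(I_0^c)^* = [n]\setminus I_0^*$. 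Moreover, since the $I_0^c$-balls centred at the codewords of $\mathbb{C}^\perp$ are pairwise disjoint and, in the chain $\tilde{\mathbb{P}}$, every ideal of cardinality at most $|I_0^c|$ lies inside $I_0^c$, no two distinct codewords of $\mathbb{C}^\perp$ can be at $\tilde{\mathbb{P}}$-distance $\le |I_0^c| = \lfloor \frac{m}{2}\rfloor\,|(I_0^c)^*|$; therefore the integer $r^\perp := \lfloor \frac{d_{(Pm,\pi)}(\mathbb{C}^\perp)-1}{\lfloor m/2\rfloor}\rfloor$ satisfies $r^\perp \ge |(I_0^c)^*|$, where $d_{(Pm,\pi)}(\mathbb{C}^\perp)$ is the minimum $\tilde{\mathbb{P}}$-block distance.

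I would finish by a pinching argument. Let $I$ be the unique ideal of $\tilde{\mathbb{P}}$ with $|I^*| = r^\perp$; since $\tilde{\mathbb{P}}$ is a chain and $|(I_0^c)^*|\le r^\perp$, the root sets are nested, $(I_0^c)^*\subseteq I^*$, so $\sum_{i\in I^*}k_i \ge \sum_{i\in(I_0^c)^*}k_i = k$. The chain Singleton bound (Theorem \ref{chain Singleton bound}) applied to $\mathbb{C}^\perp$ gives $\sum_{i\in I^*}k_i \le N - \lceil\log_m|\mathbb{C}^\perp|\rceil = k$, so equality holds: the bound is attained, hence $\mathbb{C}^\perp$ is an MDS $\tilde{\mathbb{P}}$-block code. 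The converse follows by applying the forward direction to the linear code $\mathbb{C}^\perp$ (of cardinality $m^{N-k}$ and length $N$) over the chain $\tilde{\mathbb{P}}$, together with $(\mathbb{C}^\perp)^\perp = \mathbb{C}$ and $\tilde{\tilde{\mathbb{P}}} = \mathbb{P}$.

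The step I expect to be the crux is the middle one: identifying the complementary ideal $I_0^c$ handed to us by the full-count duality with the ideal that actually governs the Singleton bound of $\mathbb{C}^\perp$ in $\tilde{\mathbb{P}}$, i.e. upgrading $|(I_0^c)^*|\le r^\perp$ to the weight equality $\sum_{i\in I^*}k_i = k$. This is exactly where the chain hypothesis is used twice — the total ordering of ideals, so the relevant root sets are nested, and the preservation of full count under complementation — and where the packing inequality $d_{(Pm,\pi)}(\mathbb{C}^\perp) > |I_0^c|$ and the Singleton inequality squeeze $\sum_{i\in(I_0^c)^*}k_i$ from opposite sides. A minor point to verify along the way is that the invoked ideals (the element of $\mathcal{I}^{\lfloor \frac{m}{2}\rfloor(N-k)}(\mathbb{P})$ and its counterpart in $\tilde{\mathbb{P}}$) are genuinely nonempty, which follows from the covering conditions once $\mathbb{C}$ is assumed MDS.
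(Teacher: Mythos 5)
The paper states this theorem with no written proof, so the comparison is against the route it plainly intends: convert MDS into $I$-perfectness, push that through the full-count duality of \cite{AS} (Theorem 8), and read the Singleton bound back off. That is exactly your plan, and your second half is correct and is the real content: the packing estimate $d_{(Pm,\pi)}(\mathbb{C}^\perp)>|I_0^c|$ forcing $r^\perp\ge|(I_0^c)^*|$, the nesting of root sets in a chain, and the squeeze against Theorem \ref{chain Singleton bound} are all sound, as is reducing the converse to the forward direction via $(\mathbb{C}^\perp)^\perp=\mathbb{C}$ and $\tilde{\tilde{\mathbb{P}}}=\mathbb{P}$. The gap is in your opening step: the choice of $I_0$. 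You take $I_0$ to be the unique ideal of cardinality $\lfloor\frac{m}{2}\rfloor(N-k)$ via Theorem \ref{t14}. For genuine block codes this is the wrong ideal. First, $\mathcal{I}^{\lfloor\frac{m}{2}\rfloor(N-k)}(\mathbb{P})$ is empty whenever $N-k>n$ (e.g. $n=3$, $k_1=k_2=4$, $k_3=1$, $k=1$), and your closing claim that nonemptiness ``follows from the covering conditions once $\mathbb{C}$ is MDS'' is false: MDS gives $\sum_{i=1}^{r}k_i=N-k$ with $r\le n$, which does not bound $N-k$ by $n$. Second, even when that ideal exists its root set is $\{1,\dots,N-k\}$, so the covering condition of Lemma \ref{|B int C|=1} would force $\sum_{i=1}^{N-k}k_i=N-k$, i.e. $k_1=\dots=k_{N-k}=1$; an MDS code with some $k_i>1$ among its first $r$ blocks is therefore not $I_0$-perfect for this $I_0$. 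You inherit this from Theorem \ref{t14} itself (which is only correct when all $k_i=1$), but the defect propagates into your proof because your next line extracts $\sum_{i\in I_0^*}k_i=N-k$ from $I_0$-perfectness.

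The repair is local and leaves the rest of your argument untouched: take $I_0$ to be the unique full-count ideal with $|I_0^*|=r=\big\lfloor\frac{d_{(Pm,\pi)}(\mathbb{C})-1}{\lfloor m/2\rfloor}\big\rfloor$, i.e. $|I_0|=r\lfloor\frac{m}{2}\rfloor$ and $I_0^*=\{1,\dots,r\}$. That $\mathbb{C}$ is $I_0$-perfect follows either from the paper's unnumbered theorem (MDS implies $I$-perfect for every full-count $I\in\mathcal{I}^{r\lfloor m/2\rfloor}$) or directly from Lemma \ref{|B int C|=1}, since MDS gives $\sum_{i\in I_0^*}k_i=N-k$ and $d_{(Pm,\pi)}(\mathbb{C})\ge r\lfloor\frac{m}{2}\rfloor+1>|I_0|$. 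This $I_0$ is full count, so the cited duality applies, $(I_0^c)^*=\{r+1,\dots,n\}$ gives $\sum_{i\in(I_0^c)^*}k_i=k$, and your pinching argument then runs verbatim. One cosmetic point: ``the unique ideal of $\tilde{\mathbb{P}}$ with $|I^*|=r^\perp$'' is not unique (there are $\lfloor\frac{m}{2}\rfloor$ such ideals), but they share the root set $\{n-r^\perp+1,\dots,n\}$, which is all you use.
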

\subsection{Weight distribution of MDS chain pomset block codes}
Let $\mathbb{C} $ be a pomset block code of length $N$ with cardinality $m^k$ over $\mathbb{Z}_m$ where $\mathbb{P}$ is a chain pomset. Let $A_i (\mathbb{C})= \{x \in \mathbb{C}  : w_{(Pm,\pi)}(x) = i \}$ be the weight distribution of $\mathbb{Z}_m^N$,  where $0 \leq i \leq n \lfloor \frac{m}{2} \rfloor $.
\begin{proposition} \label{mdsball}
	Let  $\mathbb{C} $ be an MDS pomset block code of length $N$ with cardinality $m^k$. For an ideal $I$ of $\mathbb{P}$,
	\begin{enumerate}[label=(\roman*)]
		\item
		If $|I| \leq \lfloor \frac{m}{2}\rfloor (N-k)$, then $B_I \cap \mathbb{C} = 1$.
		\item  If $|I| > \lfloor \frac{m}{2}\rfloor (N-k)$, then
		 \begin{align*}
			B_I \cap \mathbb{C} =
			\begin{cases}
				m^{\sum\limits_{i \in I^*} k_i - N + k}  &\text{if} ~|I| = \lfloor \frac{m}{2}\rfloor |I^*| \\
			(1 + 2c_{i_t} )^{k_{i_t}}  m^{\sum\limits_{i \in I^*} k_i - N + k - k_{i_t}} &  \text{otherwise}.
			\end{cases} 
		\end{align*}
	\end{enumerate}
 where $ c_{i_t} / i_t $ is the maximal element of an ideal $I$.
\end{proposition}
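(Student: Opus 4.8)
The plan is to exploit two features of a chain pomset. First, the order ideals of $\mathbb{P}$ themselves form a chain under inclusion, with a \emph{unique} ideal of each cardinality in $\{0,1,\dots,n\lfloor m/2\rfloor\}$; equivalently, for ideals $I,I'$ of $\mathbb{P}$ one has $I\subseteq I'$ iff $|I|\le|I'|$. Second, an MDS chain code is systematic. Label the blocks so that $1\prec 2\prec\cdots\prec n$ in $\mathbb{P}$ and set $r=\big\lfloor\frac{d_{(Pm,\pi)}(\mathbb{C})-1}{\lfloor m/2\rfloor}\big\rfloor$; since the unique ideal $J\in\mathcal{I}(\mathbb{P})$ with $|J^*|=r$ has $J^*=\{1,\dots,r\}$, the fact that $\mathbb{C}$ meets the chain Singleton bound (Theorem~\ref{chain Singleton bound}) reads $k_1+\cdots+k_r=N-k$, so $\sum_{i>r}k_i=k$.

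I would first record the systematic form. The projection $\rho\colon\mathbb{Z}_m^N\to\bigoplus_{i>r}\mathbb{Z}_m^{k_i}$ is injective on $\mathbb{C}$: if $\rho(c)=\rho(c')$ then $c-c'$ is supported inside the blocks $\{1,\dots,r\}$, so $w_{(Pm,\pi)}(c-c')\le r\lfloor m/2\rfloor\le d_{(Pm,\pi)}(\mathbb{C})-1$ and hence $c=c'$; and $\rho$ is onto since $|\mathbb{C}|=m^{k}=m^{\sum_{i>r}k_i}$. Thus $\mathbb{C}=\{(L(v),v):v\in\bigoplus_{i>r}\mathbb{Z}_m^{k_i}\}$ for a linear map $L\colon\bigoplus_{i>r}\mathbb{Z}_m^{k_i}\to\bigoplus_{i\le r}\mathbb{Z}_m^{k_i}$, and a nonzero codeword $(L(v),v)$ has largest support-block $j^*=\max\{i:v_i\ne0\}\ge r+1$, so its $(Pm,\pi)$-weight equals $c_{j^*}+(j^*-1)\lfloor m/2\rfloor\ge 1+r\lfloor m/2\rfloor$, with equality when $v$ is Lee-weight one in block $r+1$; hence $d_{(Pm,\pi)}(\mathbb{C})=1+r\lfloor m/2\rfloor$.

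For part (i), let $I_0$ be the unique ideal with $|I_0|=\lfloor m/2\rfloor(N-k)$; it has full count, so $\mathbb{C}$ is $I_0$-perfect by Theorem~\ref{t14}, whence $|\mathbb{C}\cap B_{I_0}|=1$ by Lemma~\ref{|B int C|=1}. If $|I|\le\lfloor m/2\rfloor(N-k)=|I_0|$ then $I\subseteq I_0$, so $B_I\subseteq B_{I_0}$; since $0\in\mathbb{C}\cap B_I$, this forces $1\le|\mathbb{C}\cap B_I|\le|\mathbb{C}\cap B_{I_0}|=1$.

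For part (ii), write $c_{i_t}/i_t$ for the maximal element of $I$, so $I^*=\{1,\dots,i_t\}$ and $|I|=(i_t-1)\lfloor m/2\rfloor+c_{i_t}\le i_t\lfloor m/2\rfloor$; the hypothesis $|I|>\lfloor m/2\rfloor(N-k)\ge r\lfloor m/2\rfloor$ then gives $i_t\ge r+1$, so $i_t$ and all of $r+1,\dots,i_t$ are information blocks. By the coordinatewise form of a chain $I$-ball (blocks above $i_t$ vanish, the block $i_t$ has all Lee weights $\le c_{i_t}$, blocks below $i_t$ are free), a codeword $(L(v),v)$ lies in $B_I$ precisely when $v_i=0$ for $i>i_t$ and — in the partial-count case only — every coordinate of $v_{i_t}$ has Lee weight $\le c_{i_t}$, the parity part $L(v)$ lying in $\{1,\dots,r\}\subseteq I^*$ and being unconstrained. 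Counting admissible $v$: in the full-count case ($|I|=\lfloor m/2\rfloor|I^*|$) the vector $v$ ranges over $\bigoplus_{r<i\le i_t}\mathbb{Z}_m^{k_i}$, so $|\mathbb{C}\cap B_I|=m^{\sum_{r<i\le i_t}k_i}=m^{\sum_{i\in I^*}k_i-N+k}$ using $k_1+\cdots+k_r=N-k$ (alternatively, $\mathbb{C}+B_I=\mathbb{Z}_m^N$ since any tuple differs from the unique codeword with the same $\rho$-image by an element supported in $\{1,\dots,r\}\subseteq I^*$, so $|\mathbb{C}\cap B_I|=|\mathbb{C}|\,|B_I|/|\mathbb{C}+B_I|=m^{k+\sum_{i\in I^*}k_i-N}$ by Proposition~\ref{full count theorem}); in the partial-count case $v_{i_t}$ has $(2c_{i_t}+1)^{k_{i_t}}$ admissible values (per coordinate the set $\{0,\pm1,\dots,\pm c_{i_t}\}$, of size $2c_{i_t}+1$ because $1\le c_{i_t}\le\lfloor m/2\rfloor-1$) while the blocks $r+1,\dots,i_t-1$ of $v$ are free, giving $|\mathbb{C}\cap B_I|=(2c_{i_t}+1)^{k_{i_t}}m^{\sum_{r<i<i_t}k_i}=(1+2c_{i_t})^{k_{i_t}}m^{\sum_{i\in I^*}k_i-N+k-k_{i_t}}$. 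The step I expect to be delicate is the partial-count subcase: there $B_I$ is not a submodule, so a pure dimension count is unavailable and one must mesh the coordinatewise shape of the chain $I$-ball with the systematic form of $\mathbb{C}$ and then match $\sum_{r<i<i_t}k_i$ with the target exponent $\sum_{i\in I^*}k_i-N+k-k_{i_t}$, for which the MDS identity $k_1+\cdots+k_r=N-k$ is the essential input.
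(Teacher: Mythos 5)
Your proof is correct, but it takes a genuinely different route from the paper's, most visibly in part (ii). The paper never writes the code in systematic form: it fixes the full-count ideal $J$ with $\sum_{i\in J^*}k_i=N-k$ (so that $\mathbb{C}$ is $J$-perfect) and counts codewords in $B_I$ by decomposing $B_I$ into translates of $B_J$ --- cosets of the submodule $B_J$ when $I$ has full count, and translates $x+B_J$ with $x\in B_K$, $K=I\ominus J$, in the partial-count case --- each translate containing exactly one codeword. You instead derive the parity-check form $\mathbb{C}=\{(L(v),v)\}$ from injectivity and surjectivity of the projection onto the information blocks and then count admissible information vectors coordinatewise against the explicit shape of a chain $I$-ball. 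Your version is more self-contained (it does not rely on the disjointness and covering of the $I\ominus J$ translates, which the paper asserts without argument) and yields $d_{(Pm,\pi)}(\mathbb{C})=1+r\lfloor\frac{m}{2}\rfloor$ as a byproduct; the paper's version is shorter once $J$-perfectness is in hand. For part (i) the paper argues directly that the minimum distance exceeds $|I|$, whereas you route through $I_0$-perfectness (Theorem~\ref{t14}, Lemma~\ref{|B int C|=1}) and monotonicity of chain ideals; both work under the same implicit hypothesis. That hypothesis is worth flagging: your $I_0$ with $|I_0|=\lfloor\frac{m}{2}\rfloor(N-k)$ exists only when $N-k\le n$, i.e.\ when $k_i=1$ for $i\le r$, and outside that regime the stated part (i) actually fails (take $n=2$, $k_1=3$, $k_2=1$, $m=5$, $\mathbb{C}=\{(0,0,0,a):a\in\mathbb{Z}_5\}$: this is MDS with $d_{(Pm,\pi)}(\mathbb{C})=3$, yet $I=M$ has $|I|=4\le 6=\lfloor\frac{m}{2}\rfloor(N-k)$ and $|B_I\cap\mathbb{C}|=5$). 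The correct dividing line is $|I|\le r\lfloor\frac{m}{2}\rfloor$, which is exactly the inequality your part-(ii) analysis actually uses; this defect is in the statement and in the paper's own proof as much as in yours, so it does not count against your argument specifically.
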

\begin{proof}
	\textit{(i)} Suppose that $|I| \leq \lfloor \frac{m}{2}\rfloor (N-k)$. Since $\mathbb{C} $ is MDS,  there exist a $J \in \mathcal{I}_{^*r}^t$ such that $\sum\limits_{i \in {J^*} } k_{i} = N-k$. So we have  $|d_{(Pm,\pi)}\mathbb{(C)}| > |J|$. Thus, only the zero vector is inside the $B_I $ and $ \mathbb{C}$.
	\par \textit{(ii)}	Suppose that $|I| > \lfloor \frac{m}{2}\rfloor (N-k)$. Since $\mathbb{C} $ is MDS,  there exist a $J \in \mathcal{I}_{^*r}^t$ such that $\sum\limits_{i \in {J^*} } k_{i} = N-k$. As $\mathbb{P}$ is a chain,  $J \subseteq I$. 
	(a) If $I$ is an ideal with a full count, then $B_I$ is a submodule of $\mathbb{Z}_m^N$, and $B_J$ is a submodule of $B_I$. The number of cosets of $B_J$ in $B_I$ is $m^{\sum\limits_{i \in {I^*} } k_i -\sum\limits_{j \in {J^*}} k_j}$. Since $\mathbb{C}$ is $J$ perfect, every coset of $B_J$ in $\mathbb{Z}_m^N$  contains exactly one codeword of  $\mathbb{C}$. Thus, $B_I$ contains $m^{\sum\limits_{i \in I^*} k_i - N + k}$. 
		(b) 
	If $I$ is an ideal with the partial count. Let $ c_{i_t} / i_t $ be the maximal element of $I$. To find $B_I \cap \mathbb{C}$, we need disjoint translates of $B_J$ in $B_I$ whose union covers $B_I$. Let $K = I \ominus J$, and $K$ is not an ideal of $\mathbb{P}$ but a submset of $M$. Then cardinality of $B_K$ is $	(1 + 2c_{i_t})^{k_{i_t}}  m^{\sum\limits_{i \in I^*} k_i - N + k - k_{i_t}}$. The translates $x+B_J$, $x \in B_K$, are disjoint and their union covers $B_I$. Hence, $ B_I \cap \mathbb{C}  = 	(1 + 2c_{i_t} )^{k_{i_t}}  m^{\sum\limits_{i \in I^*} k_i - N + k - k_{i_t}}$.
\end{proof}
\begin{theorem}
	\label{18}
	Let $\mathbb{P}$ be a chain pomset with usual order ($\leq$) and $\mathbb{C}  $ be a MDS-linear block code of length $N$ of cardinality $m^k$ with  the minimum distance $d_{ (Pm,\pi)}(\mathbb{C})$. Then, $A_i (\mathbb{C}) = $
	\begin{align*}
		\begin{cases}
			1 &  \text{if} ~i = 0,\\
			0 & \text{if} ~1 \leq i \leq d_{ (Pm,\pi)}(\mathbb{C}) - 1 \\
			( m^{k_t} - (2\lfloor \frac{m}{2}\rfloor - 1)^{k_{t}}) m^{\sum\limits_{i=1}^{t-1} {k_i - N + k} } & \text{if} ~i \geq d_{ (Pm,\pi)}(\mathbb{C}) ~\text{and} ~i = \lfloor \frac{m}{2} \rfloor t \\
			(3^{k_t} - 1) m^{\sum\limits_{i =1}^{t} k_i - N + k} & \text{if} ~i \geq d_{ (Pm,\pi)}(\mathbb{C}) ~\text{and} ~i = \lfloor \frac{m}{2} \rfloor t + 1 \\
			((2 j+1)^{k_{|I^*|}} - (2 j-1)^{k_{|J^*|}})m^{\sum\limits_{i=1}^{t} {k_i - N + k} } & \text{if} ~i \geq d_{ (Pm,\pi)}(\mathbb{C}) ~\text{and} ~i = \lfloor \frac{m}{2} \rfloor t + j  
		\end{cases} 
	\end{align*}
 where $1 < j < \lfloor \frac{m}{2} \rfloor$.
\end{theorem}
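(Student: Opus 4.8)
The plan is to compute $A_i(\mathbb{C})$ by a telescoping argument over the nested family of ideals of the chain and then to evaluate each term using Proposition \ref{mdsball}. Since $\mathbb{P}$ is a chain, its ideals are totally ordered by inclusion and there is exactly one ideal $I_i$ of each cardinality $i$ with $0\le i\le n\lfloor\frac{m}{2}\rfloor$; in particular $B_{I_{i-1}}\subseteq B_{I_i}$, where $B_I$ denotes the $I$-ball centred at the zero word. Because $w_{(Pm,\pi)}(x)=|\langle supp_{(Pm,\pi)}(x)\rangle|$ and the generated support of any word is an ideal, a codeword $x$ has weight exactly $i$ if and only if $\langle supp_{(Pm,\pi)}(x)\rangle=I_i$, equivalently $x\in B_{I_i}\setminus B_{I_{i-1}}$. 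Hence $A_0(\mathbb{C})=|B_{I_0}\cap\mathbb{C}|=1$ and, for $i\ge 1$,
\[
A_i(\mathbb{C})=|B_{I_i}\cap\mathbb{C}|-|B_{I_{i-1}}\cap\mathbb{C}|.
\]

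As a preliminary, one checks that for an MDS chain block code of cardinality $m^k$ one has $d_{(Pm,\pi)}(\mathbb{C})-1=\lfloor\frac{m}{2}\rfloor(N-k)$: indeed $\mathbb{C}$ is $I$-perfect for the ideal of cardinality $\lfloor\frac{m}{2}\rfloor(N-k)$ by Theorem \ref{t14}, which (using the nesting of ideals) gives $d_{(Pm,\pi)}(\mathbb{C})>\lfloor\frac{m}{2}\rfloor(N-k)$, while the Singleton bound together with $k_\ell\ge1$ forces the reverse inequality. Consequently the two regimes $1\le i\le d_{(Pm,\pi)}(\mathbb{C})-1$ and $i\ge d_{(Pm,\pi)}(\mathbb{C})$ in the statement correspond exactly to $|I_i|\le\lfloor\frac{m}{2}\rfloor(N-k)$ and $|I_i|>\lfloor\frac{m}{2}\rfloor(N-k)$. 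In the first regime $|B_{I_i}\cap\mathbb{C}|=1$ by Proposition \ref{mdsball}(i), so the difference above is $0$; together with $A_0(\mathbb{C})=1$ this recovers the first two lines of the asserted formula.

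For $i\ge d_{(Pm,\pi)}(\mathbb{C})$ I would write $i=\lfloor\frac{m}{2}\rfloor t+j$ with $0\le j<\lfloor\frac{m}{2}\rfloor$ (only $j=0$ occurring when $\lfloor\frac{m}{2}\rfloor=1$) and, using the chain order $1<2<\cdots<n$ and the weight formula $w_{(Pm,\pi)}(v)=c_{i_t}+(|\langle c_{i_t}/i_t\rangle^*|-1)\lfloor\frac{m}{2}\rfloor$, read off the structure of $I_i$ and $I_{i-1}$. If $j=0$ then $I_i$ is the full-count ideal supported on $\{1,\dots,t\}$ and $I_{i-1}$ is supported on $\{1,\dots,t\}$ with maximal count $\lfloor\frac{m}{2}\rfloor-1$; if $j=1$ then $I_i$ is supported on $\{1,\dots,t+1\}$ with maximal count $1$ and $I_{i-1}$ is the full-count ideal on $\{1,\dots,t\}$; if $1<j<\lfloor\frac{m}{2}\rfloor$ then $I_i$ and $I_{i-1}$ are both supported on $\{1,\dots,t+1\}$ with maximal counts $j$ and $j-1$. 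Substituting the appropriate branch of Proposition \ref{mdsball}(ii) into the telescoping difference and cancelling the common power of $m$ yields the three nonzero lines of the statement; for instance, when $j=0$, $A_i(\mathbb{C})=m^{\sum_{\ell=1}^{t}k_\ell-N+k}-(2\lfloor\frac{m}{2}\rfloor-1)^{k_t}m^{\sum_{\ell=1}^{t-1}k_\ell-N+k}=(m^{k_t}-(2\lfloor\frac{m}{2}\rfloor-1)^{k_t})m^{\sum_{\ell=1}^{t-1}k_\ell-N+k}$.

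I expect the only genuine difficulty to be the bookkeeping in the last step: for each residue of $i$ modulo $\lfloor\frac{m}{2}\rfloor$ one must correctly decide whether $I_i$ and $I_{i-1}$ carry a full or a partial count, identify which block dimension $k_{i_t}$ is attached to the maximal coordinate, and keep track of how the exponents $\sum_\ell k_\ell$ of the two ball counts differ after cancellation; one must also handle the degenerate moduli $m\in\{2,3\}$, where $\lfloor\frac{m}{2}\rfloor=1$ so that $I_i$ and $I_{i-1}$ are always full-count ideals and only the first nonzero line survives (with $2\lfloor\frac{m}{2}\rfloor-1=1$). Everything else is a routine substitution once the reduction of the first paragraph is in place.
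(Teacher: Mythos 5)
Your proof follows the paper's argument exactly: the telescoping identity $A_i(\mathbb{C}) = |B_{I_i}\cap\mathbb{C}| - |B_{I_{i-1}}\cap\mathbb{C}|$ over the unique nested chain of ideals, followed by the case analysis on $i \bmod \lfloor \frac{m}{2}\rfloor$ (full count / partial count for $I_i$ and $I_{i-1}$) and substitution of Proposition \ref{mdsball}. Your preliminary identification $d_{(Pm,\pi)}(\mathbb{C}) - 1 = \lfloor \frac{m}{2}\rfloor(N-k)$, which aligns the regime $i \geq d_{(Pm,\pi)}(\mathbb{C})$ with the hypothesis of Proposition \ref{mdsball}(ii), is a worthwhile addition that the paper leaves implicit, though the upper bound on $d_{(Pm,\pi)}(\mathbb{C})$ needs slightly more than ``Singleton plus $k_\ell\geq 1$'' (that alone only gives $d_{(Pm,\pi)}(\mathbb{C})-1\leq \lfloor \frac{m}{2}\rfloor(N-k)+\lfloor \frac{m}{2}\rfloor-1$; one can finish by noting that the projection onto the blocks above the first $N-k$ coordinates is onto, which produces a codeword of weight exactly $\lfloor \frac{m}{2}\rfloor(N-k)+1$).
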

\begin{proof}
	Clearly,
	$	A_i (\mathbb{C})= \{1~ \text{if} ~i = 0; ~ \text{and} ~0 ~\text{if} ~1 \leq i \leq d_{ (Pm,\pi)}(\mathbb{C}) - 1 \}$. 
	As $\mathbb{P}$ is a chain, then
	$ A_i (\mathbb{C}) = |S_I \cap \mathbb{C}| = |B_I \cap \mathbb{C}| - |B_J \cap \mathbb{C}|$, where $|I| = i$ and $|J | = i - 1$. If $i= t \lfloor \frac{m}{2} \rfloor$, then the ideal $I$ has full count, $J$ has partial count with $|I^*| = |J^*|$.  From proposition \ref{mdsball},
	\begin{align*}
		A_i (\mathbb{C}) &= |B_I \cap \mathbb{C}| - |B_J \cap \mathbb{C}|  \\ &= m^{\sum\limits_{i =1}^{|I^*|} k_i - N + k} - (1 + 2(|J| - (|J^*| - 1)\lfloor \frac{m}{2} \rfloor))^{k_{|J^*|}} m^{\sum\limits_{i=1}^{|J^*|-1} {k_i - N + k} } \\
		&= ( m^{k_t} - (1 + 2(i - 1 - (\frac{i}{\lfloor \frac{m}{2}\rfloor} - 1)\lfloor \frac{m}{2}\rfloor)^{k_{t}}) m^{\sum\limits_{i=1}^{t-1} {k_i - N + k} } \\
		&=  (m^{k_t} - (2\lfloor \frac{m}{2}\rfloor - 1)^{k_{t}}) m^{\sum\limits_{i=1}^{t-1} {k_i - N + k} }
	\end{align*}
	If $i= t \lfloor \frac{m}{2} \rfloor + 1$ then, ideal $I$ has partial count, $J$ has full count and $|I^*| = |J^*| + 1$. From proposition \ref{mdsball},
	\begin{align*}
		A_i (\mathbb{C}) &= |B_I \cap \mathbb{C}| - |B_J \cap \mathbb{C}| \\ &= (1 + 2(|I| - (|I^*| - 1)\lfloor \frac{m}{2} \rfloor))^{k_{|I^*|}} m^{\sum\limits_{i=1}^{|I^*|-1} {k_i - N + k} } - m^{\sum\limits_{i =1}^{|J^*|} k_i - N + k} 
		\\
		&=  (3^{k_{t}} - 1) m^{\sum\limits_{i=1}^{t} {k_i - N + k} }
	\end{align*}
	Similarly, if $i= t \lfloor \frac{m}{2} \rfloor + j$ where $1 < j < \lfloor \frac{m}{2} \rfloor$, then the ideal $I$ has	partial count and $J$ also has partial count with $|I^*| = |J^*|$. From proposition \ref{mdsball},
	\begin{align*}
		A_i (\mathbb{C}) &= (1 + 2(|I| - (|I^*| - 1)\lfloor \frac{m}{2} \rfloor))^{k_{|I^*|}} m^{\sum\limits_{i=1}^{|I^*|-1} {k_i - N + k} } - \\& \hspace{4.5cm}(1 + 2(|J| - (|J^*| - 1) \lfloor \frac{m}{2} \rfloor))^{k_{|J^*|}}  m^{\sum\limits_{i=1}^{|J^*|-1} {k_i - N + k} }
		\\
		&= ( (1 + 2(|I| - (|I^*| - 1)\lfloor \frac{m}{2} \rfloor))^{k_{|I^*|}} -\\& \hspace{4.4cm} (1 + 2(|J| - (|J^*| - 1)\lfloor \frac{m}{2} \rfloor))^{k_{|J^*|}})  m^{\sum\limits_{i=1}^{|J^*|-1} {k_i - N + k} } 
	\end{align*} 
Hence, $ A_i (\mathbb{C}) = ((2 j+1)^{k_{|I^*|}} - (2 j-1)^{k_{|J^*|}})m^{\sum\limits_{i=1}^{t} {k_i - N + k} }$.
\end{proof}
\bibliographystyle{amsplain}
		
\end{document}